\newtheorem{definition}{Definition}[section]
\newtheorem{thm}[definition]{Theorem}
\newtheorem{prop}[definition]{Proposition}
\newtheorem{proposition}[definition]{Proposition}
\newtheorem{lemma}[definition]{Lemma}
\newtheorem{rem}[definition]{Remark}
\newtheorem{cor}[definition]{Corollary}
\newtheorem{assumption}[definition]{Assumption}
\newenvironment{proof}[1][Proof]{\begin{trivlist}
\item[\hskip \labelsep {\bfseries #1}]}{\hfill$\Box$\end{trivlist}}
\newcommand{\nn}{\nonumber \\}
\def\vfi{\varphi}
\def\theta{\vartheta}
\def\hil{{\mathcal H}}
\def\kil{{\mathcal K}}
\def\lil{{\mathcal L}}
\def\A{{\mathcal A}}
\def\B{{\mathcal B}}
\def\C{{\mathcal C}}
\def\E{{\mathcal E}}
\def\F{{\mathcal F}}
\def\I{{\mathcal I}}
\def\M{\mathcal{M}}
\def\R{{\mathcal R}}
\def\S{{\mathcal S}}
\def\X{{\mathcal X}}
\def\Y{{\mathcal Y}}
\def\half{\frac{1}{2}}
\def\imp{\implies}
\def\ep{\varepsilon}
\def\epsilon{\varepsilon}
\def\bN{\mathbb{N}}
\def\bC{\mathbb{C}}
\def\bR{\mathbb{R}}
\def\bP{\mathbb{P}}
\def\bz{\left(}
\def\jz{\right)}
\def\inv{^{-1}}
\def\tinv{^{-\mathrm{T}}}
\def\egy{\mathbf 1}
\def\cd{\mathcal{I}}
\def\sa{\mathrm{sa}}
\def\N{\mathcal{N}}
\def\what{\widehat}
\def\trans{^{\mathrm{T}}}
\def\rho{\varrho}
\def\nn{\nonumber}
\def\CP{\mathrm{CP}}
\def\cp{\mathrm{CP}}
\def\cptp{\mathrm{CPTP}}
\def\p{_{\ge 0}}
\def\pp{_{>0}}
\def\valt{\cdot}
\def\simplex{\mathcal{S}}
\def\gm{C}
\def\ops{A}
\def\nh{\mathcal{N}}
\def\ah{\mathcal{A}}
\def\nw{^{*}}
\def\KA{\mathrm{KA}}
\def\type{\mathcal{T}}
\newcommand{\ki}[1]{\emph{#1}}
\newcommand{\s}{\mbox{ }}
\newcommand{\ds}{\mbox{ }\mbox{ }}
\newcommand{\norm}[1]{\left\| #1\right\|}
\newcommand{\snorm}[1]{\| #1 \|}
\newcommand{\inner}[2]{\left\langle #1 , #2\right\rangle}
\newcommand{\vecc}[1]{\underline{#1}}
\newcommand{\diad}[2]{\left|#1\right\rangle\!\left\langle #2\right|}
\newcommand{\pr}[1]{\diad{#1}{#1}}
\newcommand{\wtilde}[1]{\widetilde{#1}}
\newcommand{\ceil}[1]{\left\lceil #1\right\rceil}
\newcommand{\persp}[1]{\mathcal{P}_{#1}}
\newcommand{\acc}[2]{#1_{#2,\mathrm{ac}}}
\newcommand{\distributions}[1][]{\mathcal{P}_{#1}}
\renewcommand{\theenumi}{(\roman{enumi})}
\renewcommand{\p@enumii}{}
\DeclareMathOperator{\id}{id}
\DeclareMathOperator{\Tr}{Tr}
\DeclareMathOperator{\supp}{supp}
\DeclareMathOperator{\ran}{ran}
\DeclareMathOperator{\dom}{\mathcal{D}}
\DeclareMathOperator{\choi}{C}
\DeclareMathOperator{\spann}{span}
\DeclareMathOperator{\spec}{spec}
\DeclareMathOperator{\Ad}{Ad}
\DeclareMathOperator{\ootimes}{\otimes\ldots\otimes}
\DeclareMathOperator{\co}{co}
\DeclareMathOperator{\direct}{\mathrm{d}}
\DeclareMathOperator{\sconv}{\mathrm{sc}}
\DeclareMathOperator{\Gm}{G}
\DeclareMathOperator{\Am}{\mathbb{A}}
\DeclareMathOperator{\smap}{\mathbb{S}}
\DeclareMathOperator{\st}{s}
\DeclareMathOperator{\ad}{ad}
\DeclareMathOperator{\para}{par}
\DeclareMathOperator{\gen}{gen}
\DeclareMathOperator*{\medoplus}{\scalerel*{\oplus}{\textstyle\sum}}
\DeclareMathOperator*{\medvee}{\scalerel*{\vee}{\textstyle\sum}}
\DeclareMathOperator{\overlap}{overlap}
\begin{document}

\title{Error bounds for composite quantum hypothesis testing and a new characterization of the weighted Kubo-Ando geometric means}

\author{P\'eter E.~Frenkel}
\email{frenkelp265@gmail.com}

\affiliation{Department of Algebra and Number Theory, Institute of Mathematics, E\"otv\"os Lor\'and University, P\'azm\'any P\'eter s\'et\'any 1/C, Budapest, 1117 Hungary}

\affiliation{HUN-REN Alfr\'ed R\'enyi Institute of Mathematics, Re\'altanoda street 13-15, H-1053, Budapest}

\author{Mil\'an Mosonyi}
\email{milan.mosonyi@gmail.com}

\affiliation{Center for Quantum Technologies, National University of Singapore
Block S15, 3 Science Drive 2, Singapore 117543}

\affiliation{Department of Analysis and Operations Research, Institute of Mathematics,
Budapest University of Technology and Economics,
M\H uegyetem rkp.~3., H-1111 Budapest, Hungary}

\author{P\'eter Vrana}
\email{vranap@math.bme.hu}

\affiliation{Department of Algebra and Geometry, Institute of Mathematics,
Budapest University of Technology and Economics,
M\H uegyetem rkp.~3., H-1111 Budapest, Hungary}

\author{Mih\'aly Weiner}
\email{elektrubadur@gmail.com }

\affiliation{HUN-REN Alfr\'ed R\'enyi Institute of Mathematics, Re\'altanoda street 13-15, H-1053, Budapest}

\affiliation{Department of Analysis and Operations Research, Institute of Mathematics,
Budapest University of Technology and Economics,
M\H uegyetem rkp.~3., H-1111 Budapest, Hungary}

\begin{abstract}
\centerline{\textbf{Abstract}}
\vspace{.3cm}
The optimal error exponents of binary composite i.i.d.~state discrimination are trivially bounded by the worst-case pairwise exponents of discriminating individual elements of the sets representing the two hypotheses,
and in the finite-dimensional classical case, these bounds in fact give exact single-copy 
expressions for the error exponents. In contrast, in the non-commutative case, the optimal exponents
are only known to be expressible in terms of regularized divergences, resulting in formulas that, while conceptually relevant, are practically not very useful. In this paper, we develop further an approach initiated in 
[Mosonyi, Szil\'agyi, Weiner, IEEE Trans.~Inf.~Th.~68(2):1032--1067, 2022]  
to give improved single-copy bounds on the error exponents by comparing 
not only individual states from the two hypotheses, but also various 
unnormalized positive semi-definite operators associated to them. 
Here, we show a number of equivalent characterizations of such operators giving valid bounds, and show that 
in the commutative case, considering weighted geometric means of the states, 
and in the case of two states per hypothesis, considering weighted Kubo-Ando geometric means,
are optimal for this approach. 
As a result, we give a new characterization of the weighted Kubo-Ando geometric means as the only 
$2$-variable operator geometric means that are block additive, tensor multiplicative, and satisfy the 
arithmetic-geometric mean inequality. We also extend our results to composite quantum channel discrimination,
and show an analogous optimality property of the weighted Kubo-Ando geometric means of two quantum channels,
a notion that seems to be new. We extend this concept to defining the notion of superoperator 
perspective function and establish some of its basic properties, which may be of independent interest.

\vspace{.2cm}

\noindent \textit{Keywords}: Operator geometric means, Kubo-Ando means, arithmetic-geometric mean inequality, 
composite quantum hypothesis testing
\vspace{.2cm}

\noindent \textit{Mathematics subject classification}: primary 15A45, secondary 81P17.

\end{abstract}

\maketitle

\tableofcontents

\section{Motivation and overview}
\label{sec:intro}

\subsection{Quantum state discrimination}
\label{sec:statedisc}

In the problem of composite i.i.d.~state discrimination, an experimenter is presented with several identical 
copies of a quantum system, all prepared in the same state, and they have to decide between two alternatives: that the 
state of the system is described by a density operator belonging to some 
subset $\nh$ of all density operators $\S(\hil)$ on the system's Hilbert space $\hil$, 
or by a density operator belonging to another subset $\ah$.
The most general decision scheme is given by a $2$-outcome measurement on the available copies of the system, described by a measurement operator $T\in\B(\hil^{\otimes n})_{[0,1]}$ 
(i.e., $T$ is an operator on $\hil^{\otimes n}$ satisfying $0\le T\le I$, where $n$ is the number of copies of the system): if the corresponding outcome occurs, the experimenter decides in favour of $\N$, whereas if the outcome corresponding to $I-T$ occurs, they choose $\A$.
The worst-case probabilities of the two possible errors that can occur (conventionally called type I and type II errors) are, respectively,
\begin{align*}
\alpha_n(T)&:=\sup_{\rho\in\nh}\Tr\rho^{\otimes n}(I-T),\\
\beta_n(T)&:=\sup_{\sigma\in\ah}\Tr\sigma^{\otimes n}T.
\end{align*}

In the asymptotic analysis of the problem, one is interested in the trade-off between these error probabilities in the limit $n\to+\infty$, which turns out to be quantifiable on the exponential scale by the following quantities, 
\begin{align}
\direct_r(\nh\|\ah):=\sup\left\{\liminf_{n\to+\infty}-\frac{1}{n}\log\alpha_n(T):\,
\liminf_{n\to+\infty}-\frac{1}{n}\log\beta_n(T)\ge r\right\},\label{eq:direct exp def}\\
\sconv_r(\nh\|\ah):=\inf\left\{\liminf_{n\to+\infty}-\frac{1}{n}\log(1-\alpha_n(T)):\,
\liminf_{n\to+\infty}-\frac{1}{n}\log\beta_n(T)\ge r\right\}.\label{eq:sc exp def}
\end{align}
Here, $\direct_r(\nh\|\ah)$ is called the \ki{direct exponent} corresponding to type II error rate $r$, and is relevant for $r$ below a threshold $r_0$, while
$\sconv_r(\nh\|\ah)$ is called the \ki{strong converse exponent}
corresponding to type II error rate $r$, and is relevant for $r$ above a (possibly different) threshold
$r_1\ge r_0$. Clearly, $\direct_r(\nh\|\ah)$ and $\sconv_r(\nh\|\ah)$ cannot be both positive for a fixed $r$. 

In the simple case where $\nh=\{\rho\}$ and $\ah=\{\sigma\}$ consist of a single density operator each, the above exponents have been found to have single-copy expressions as 
\cite{ANSzV,Hayashicq,Nagaoka,MO}
\begin{align}
\direct_r(\{\rho\}\|\{\sigma\})=H_r(\rho\|\sigma):=\sup_{\alpha\in(0,1)}\frac{\alpha-1}{\alpha}\left[r-D_{\alpha}(\rho\|\sigma)\right],\label{eq:single-copy direct}\\
\sconv_r(\{\rho\}\|\{\sigma\})=H\nw_r(\rho\|\sigma):=\sup_{\alpha>1}\frac{\alpha-1}{\alpha}\left[r-D\nw_{\alpha}(\rho\|\sigma)\right],\label{eq:single-copy sc}
\end{align}
where,
for any two positive semi-definite operators $A,B\in\B(\hil)\p$, 
\begin{align*}
D_{\alpha}(A\|B):=\frac{1}{\alpha-1}\log\Tr A^{\alpha}B^{1-\alpha}
\end{align*}
is their \ki{Petz-type R\'enyi $\alpha$-divergence} for $\alpha\in(0,1)$ \cite{P85},
and
\begin{align*}
D\nw_{\alpha}(A\|B):=\begin{cases}
\frac{1}{\alpha-1}\log\Tr \bz B^{\frac{1-\alpha}{2\alpha}}AB^{\frac{1-\alpha}{2\alpha}}\jz^\alpha,&\supp(A)\subseteq\supp(B),\\
+\infty,&\text{otherwise},
\end{cases}
\end{align*}
is their \ki{sandwiched R\'enyi $\alpha$-divergence} for $\alpha>1$ \cite{Renyi_new,WWY}.
Moreover,
$r<D(\rho\|\sigma)\imp \direct_r(\nh\|\ah)>0$, while
$r>D(\rho\|\sigma)\imp \sconv_r(\nh\|\ah)>0$, where 
\begin{align*}
D(A\|B)
:=\begin{cases}
\Tr A(\log A-\log B),& \supp(A)\subseteq\supp(B),\\
+\infty,&\text{otherwise},
\end{cases}
\end{align*} 
is the Umegaki relative entropy of $A,B\in\B(\hil)\p$ \cite{Umegaki}.
The quantities $H_r(\rho\|\sigma)$ and $H_r\nw(\rho\|\sigma)$, defined for arbitrary PSD operators
$\rho,\sigma\in\B(\hil)\p$ as in \eqref{eq:single-copy direct}--\eqref{eq:single-copy sc},
are called the \ki{Hoeffding divergence/Hoeffding anti-divergence} of $\rho$ and $\sigma$ with parameter $r$, respectively.

In the known cases in the finite-dimensional classical setting, i.e., when all density operators in $\nh\cup\ah$ commute and $\hil$ is finite-dimensional,
the composite direct exponents still admit single-copy expressions as the worst-case pairwise exponents
\cite{AdvHyp,FuShen1996,FuShen1998,MWSz2020}
\begin{align*}
\direct_r(\nh\|\ah)=\inf_{\rho\in\nh,\,\sigma\in\ah}H_r(\rho\|\sigma).
\end{align*}
In the general composite case, however, we only have the inequalities
\begin{align}\label{eq:trivial bounds}
\direct_r(\nh\|\ah)\le\inf_{\rho\in\nh,\,\sigma\in\ah}H_r(\rho\|\sigma),
\ds\ds\ds
\sconv_r(\nh\|\ah)\ge\sup_{\rho\in\nh,\,\sigma\in\ah}H\nw_r(\rho\|\sigma),
\end{align}
which may be strict in general \cite{BertaBrandaoHirche2017,MWSz2020},
and the only known universally valid expressions for these and related error exponents are in terms of regularized divergences involving arbitrarily many copies of the states from 
$\nh$ and $\ah$ \cite{BertaBrandaoHirche2017}, and therefore, while they are conceptually relevant, have limited practical value. 

A new approach to obtaining tighter single-copy bounds was put forward in \cite{MWSz2020} and elaborated further 
in \cite{BMV2024}.
On a very high level, the idea is to increase $\N$ and $\A$ so that taking the infimum/supremum over these increased sets instead of $\N$ and $\A$ in \eqref{eq:trivial bounds} still provides valid, and at least in some cases, strictly better bounds. 
To this end, for any $\R\subseteq\B(\hil)\p$, let 
\begin{align}
\C(\R):=\Big\{\gm\in\B(\hil)\p:&\,\forall n\in\bN\ds\exists\kappa_n\in(0,+\infty)\text{ s.t. }
\Tr \gm^{\otimes n}T\le\kappa_n\sup_{\omega\in\R}\Tr\omega^{\otimes n}T,\ds T\in\B(\hil^{\otimes n})_{[0,1]},\nn\\
&\text{and}\ds\limsup_{n\to+\infty}\frac{1}{n}\log\kappa_n\le 0\Big\}.\label{eq:CR def}
\end{align}
Then a straightforward computation verifies that 
any $\tilde C\in\C(\N)$ and $C\in\C(\A)$ yield single-copy bounds of the form
\begin{align}
\direct_r(\N\|\A)&\le  
H_r(\tilde C\|C),
\label{direct upper}\\
\sconv_r(\N\|\A)&\ge 
H_r\nw(\rho\|C),
\ds\ds\ds\rho\in\N.\label{sc lower}
\end{align}
These may be interpreted as replacing the original composite i.i.d.~state discrimination problem
with various i.i.d.~discrimination problems that are simple, i.e., both the null and the alternative hypotheses 
are represented by single operators, but these operators are not necessarily normalized to be states anymore, and then relating the optimal exponents of these new simple discrimination problems to those of the original composite problem.

Tighter bounds on the error exponents than in \eqref{eq:trivial bounds} may be obtained 
from \eqref{direct upper}--\eqref{sc lower} by 
finding elements in $\C(\N)\setminus\N$ and in $\C(\A)\setminus\A$, and
one natural candidate here is to use operator geometric means that 
are tensor multiplicative and
satisfy the arithmetic-geometric mean 
inequality; in particular, the Kubo-Ando geometric mean \cite{KA} was used in 
\cite{MWSz2020} to this effect.
This motivates a systematic study of such operator geometric means, which seems to be a new direction
in matrix analysis, where operator geometric means are conventionially specified by other properties;
we will elaborate on this further in the next section. 
This approach is strongly underpinned by the findings of \cite{bunth2021equivariant}, where it was shown that in the case where 
$\ah$ is a closed and commutative subset of the positive definite elements in $\S(\hil)$ with a 
finite-dimensional Hilbert space $\hil$, 
optimizing over all weighted geometric means of the elements of $\ah$ in \eqref{sc lower} gives the exact strong converse exponent, i.e., 
\begin{align}
\sconv_r(\N\|\A)&=
\sup_{\rho\in\N}\sup_{\nu\in\S(\ah)}\sup_{\alpha>1}\frac{\alpha-1}{\alpha}
\left[r-D_{\alpha}\nw\bz \rho\Big\| \exp\bz\int_{\ah}\log\sigma\,d\nu(\sigma)\jz\jz\right],
\label{eq:sc lower comm}
\end{align}
where $\S(\ah)$ stands for all probability measures on the Borel $\sigma$-algebra of $\ah$.

It is easy to see that the Petz-type R\'enyi $\alpha$-divergences are anti-monotone in both of their arguments for 
$\alpha\in(0,1)$, and the 
sandwiched R\'enyi $\alpha$-divergences are anti-monotone in their second argument for 
$\alpha>1$, i.e., for any $\rho_1\le\rho_2$ and $\sigma_1\le\sigma_2$,
\begin{align}\label{eq:Renyi anti-mon}
D_{\alpha}(\rho_1\|\sigma_1)\ge
D_{\alpha}(\rho_2\|\sigma_2),\ds\ds \alpha\in(0,1),\ds\ds\ds\ds
D\nw_{\alpha}(\rho_1\|\sigma_1)\ge
D\nw_{\alpha}(\rho_1\|\sigma_2),\ds\ds \alpha>1.
\end{align} 
(See, e.g., \cite{MH,Renyi_new} for details.)
For an arbitrary $\C\subseteq\B(\hil)\p$, let 
\begin{align*}
\max\C:=\left\{C\in\C:\,C'\in\C,\s C'\ge C\,\imp\,C'=C\right\}
\end{align*}
be the set of maximal elements in $\C$. Taking into account the anti-monotonicity of the R\'enyi divergences from \eqref{eq:Renyi anti-mon},
and also the trivial inclusion
\begin{align}\label{eq:trivial C}
\R\subseteq\C(\R),
\end{align}
valid for any $\R\subseteq\B(\hil)\p$, the bounds \eqref{direct upper}--\eqref{sc lower} lead to 
\begin{align}
\direct_r(\N\|\A)& \ds\le\ds  
\inf_{\tilde C\in\C(\N), C\in\C(\A)}H_r(\tilde C\|C)
& &=
\inf_{\tilde C\in\max\C(\N),C\in\max\C(\A)}H_r(\tilde C\|C)
& &\le
\inf_{\rho\in\N,\sigma\in\A}H_r(\tilde C\|C),
\label{direct upper2}\\
\sconv_r(\N\|\A)& \ds\ge\ds 
\sup_{\rho\in\N,C\in\C(\A)}H_r\nw(\rho\|C)
& &=
\sup_{\rho\in\N,C\in\max\C(\A)}H_r\nw(\rho\|C)
& &\ge
\sup_{\rho\in\N,\sigma\in\A}H_r\nw(\rho\|\sigma).
\label{sc lower2}
\end{align}
This leads to the following problem: 
\begin{align}\label{eq:holygrail}
\text{Given any $\R\subseteq\B(\hil)\p$, describe all elements of 
$\max\C(\R)$.}
\end{align}

Of course, by this we mean a description that is efficiently verifiable in some sense; 
for instance, it would be desirable to have some criterion that decides whether a given 
operator $C$ belongs to $\max\C(\R)$ using only a single copy of $C$, as opposed to the definition 
given in \eqref{eq:CR def}, which uses infinitely many tensor powers of $C$, and, moreover, a finite set of criteria instead of the infinitely many inequalities required to hold for every test $T$.
The ultimate goal, however, is to find a way to actually construct all maximal elements, which is exactly what we achieve in the special cases when $\R$ is commutative, or it has only $2$ elements.

As we will show in Sections \ref{sec:classical}--\ref{sec:2-var}, the maximal elements of $\C(\R)$ are exactly all weighted geometric means of the elements of $\R$ when $\R$ is commutative, and all weighted Kubo-Ando geometric means of the elements of $\R$ when 
$\R$ has two elements. 
In particular, this result for the commutative case combined with \eqref{eq:sc lower comm}
shows also that under the technical conditions imposed for \eqref{eq:sc lower comm}
(i.e., compactness of $\A$ and all elements of $\A$ being invertible), the first inequality in 
\eqref{sc lower2} holds as an equality. 

Related to this problem, we note the simple alternative characterization of $\C(\R)$ as 
\begin{align}\label{eq:CR def2}
\C(\R)=\left\{\gm\in\B(\hil)\p:\,\Tr \gm^{\otimes n}T\le\sup_{\omega\in\R}\Tr\omega^{\otimes n}T,\ds T\in\B(\hil^{\otimes n})_{[0,1]},\ds n\in\bN\right\}.
\end{align}
Indeed, it is clear that if $C$ is in the set on the RHS of \eqref{eq:CR def2} then $C\in\C(\R)$. Vice versa, let 
$C\in\C(\R)$, and $(\kappa_n)_{n\in\bN}$ be a corresponding sequence as in \eqref{eq:CR def}.
Then for every $n,k\in\bN$, and any $T\in\B(\hil^{\otimes n})_{[0,1]}$,
\begin{align*}
(\Tr C^{\otimes n}T)^k=
\Tr C^{\otimes nk}T^{\otimes k}
\le
\kappa_{nk}\sup_{\omega\in\R}\Tr\omega^{\otimes nk}T^{\otimes k}
=
\kappa_{nk}\bz\sup_{\omega\in\R}\Tr\omega^{\otimes n}T\jz^k,
\end{align*}
and taking the $k$-th root and then the $\limsup$ in $k$ yield 
$\Tr C^{\otimes n}T\le\sup_{\omega\in\R}\Tr\omega^{\otimes n}T$, due to 
\begin{align*}
\limsup_{k\to+\infty}\frac{1}{k}\log\kappa_{nk}= 
n\limsup_{k\to+\infty}\frac{1}{nk}\log\kappa_{nk}\le 
n\limsup_{m\to+\infty}\frac{1}{m}\log\kappa_{m}\le 0.
\end{align*}
Thus, $C$ is in the set on the RHS of \eqref{eq:CR def2}.

Finally, we note that the above approach does not give an improvement over the trivial bounds \eqref{eq:trivial bounds}
when $\A$ is convex (for the strong converse exponents)
or when $\A$ and $\N$ are both convex (for the direct exponents). On the other hand, a strict improvement
can be attained already in the simplest possible case where $\N=\{\rho\}$ and $\A=\{\sigma_1,\sigma_2\}$; 
in the case of the strong converse exponent, a strict improvement is possible even for commuting states. 
We elaborate on these in Appendix \ref{sec:examples}.

A typical scenario for composite state discrimination with a finite null- or alternative hypothesis is when instead of exactly identifying one out of $r>2$ possible messages encoded into multiple copies of some quantum states
$\rho_1,\ldots,\rho_r$, the receiver only wants to know whether a specific message $k$
was sent or not, resulting in $\N=\{\rho_k\}$, $\A=\{\rho_i:\,i\in[r]\setminus\{k\}\}$. 
In a more realistic model, a small neighbourhood $B_i$ of states around $\rho_i$ may 
be assigned to each message, accounting for preparation errors on the side of the sender, yielding
$\N=B_k$, $\A=\cup_{i\in[r]\setminus\{k\}}B_i$. Note that even in this case, the alternative hypothesis will 
typically not be a convex set, and therefore the bounds 
\eqref{direct upper}--\eqref{sc lower} 
might provide an improvement over the trivial bounds \eqref{eq:trivial bounds}, 
and they indeed often do, as the examples in Appendix \ref{sec:examples} and in \cite{MWSz2020} demonstrate.

\subsection{Operator geometric means}
\label{sec:operator geometric means}

For a measurable space $(\Y,\F)$, we call any function $\Gm$ satisfying 
the following properties a 
\ki{$\Y$-variable positive operator function:}
\begin{enumerate}
\renewcommand{\theenumi}{(\arabic{enumi})}
\item
$\Gm$ is defined on some non-empty $\dom(\Gm)\subseteq\cup_{d\in\bN}\B(\F,\bC^d)\p$, where
\begin{align*}
\B(\F,\bC^d)\p:=\{A:\,\Y\to\B(\bC^d)\p\text{ $\F$-measurable}\},
\end{align*}
with $\B(\bC^d)\p$ denoting the positive semi-definite linear operators on $\bC^d$, and 
\begin{align*}
A\in\dom_d(\Gm):=\dom(\Gm)\cap\B(\F,\bC^d)\p\ds\imp\ds \Gm(A)\in\B(\bC^d)\p.
\end{align*}
\item
$\Gm$ is covariant under unitaries, i.e., for any $A\in\dom_d(\Gm)$ and any 
unitary
$V$ on $\bC^{d}$, we have $(VA_yV^*)_{y\in\Y}\in\dom(\Gm)$ and 
\begin{align*}
\Gm\bz(VA_yV^*)_{y\in\Y}\jz=V\Gm\bz(A_y)_{y\in\Y}\jz V^*.
\end{align*}
\end{enumerate} 
Due to the unitary covariance, any such operator function can be uniquely extended to families 
of positive semi-definite operators $(A_y)_{y\in\Y}$ on a finite-dimensional Hilbert space $\hil$
such that there exists a unitary $V:\,\hil\to\bC^d$ with 
$(VA_yV^*)_{y\in\Y}\in\dom(\Gm)$, by defining 
$\Gm(A):=V^*\Gm\bz(VA_yV^*)_{y\in\Y}\jz V$.
We will use the notation $\dom_{\hil}(\Gm)$ for the set of all such $(A_y)_{y\in\Y}$.

We say that a $\Y$-variable positive operator function $\Gm$ is a \ki{$\nu$-weighted operator geometric mean} for some probability measure 
$\nu$ on $\F$, if the following holds:
\begin{enumerate}
\renewcommand{\theenumi}{(\arabic{enumi})}
\setcounter{enumi}{2}
\item
For any commuting family $A\in\B(\F,\bC^d)\ge 0$ such that 
$A_y$ is positive definite for every $y\in\Y$ and 
$\Y\ni y\mapsto A_y$ is bounded and measurable,
we have $A\in\dom(\Gm)$ and
\begin{align*}
\Gm(A)=\exp\bz\int_{\Y}\log A_y\,d\nu(y)\jz.
\end{align*}
\end{enumerate} 
\renewcommand{\theenumi}{(\roman{enumi})}

For the purposes of quantum state discrimination, the relevant cases are where 
$\Y$ is finite, or it is a compact Hausdorff topological space and $\F$ is its Borel $\sigma$-algebra. In the latter case, it may also be natural to restrict the domain of a geometric mean to continuous functions 
on $\Y$.
Indeed, the typical setting for applications would be where $\Y$ is simply a closed subset of the state space of a finite-dimensional Hilbert space, and the map $\Y\ni y\mapsto A_y:=y$ is the identity.
It is also natural to consider this notion more generally for operators on infinite-dimensional Hilbert spaces. In this case, one needs to consider a number of technical issues that are not present in the finite-dimensional case. For instance, from the purely mathematical point of view, operator geometric means 
may be considered for families of bounded positive (semi-)definite operators, while 
the applications in quantum state discrimination call for their study for families of density operators
(positive trace-class operators with unit trace), and the focus on one of these or some other setting determines, e.g., the topology used in various definitions. We will only consider some 
questions in this level of generality for weak notions of means in Section \ref{sec:weak bounds}, where we 
give more precise definitions. 

For the applications in composite quantum state discrimination, we will be interested in 
positive operator functions satisfying some additional properties:
\begin{enumerate}
\setcounter{enumi}{3}
\renewcommand{\theenumi}{(\arabic{enumi})}
\item\label{tensor supermult}
\ki{Tensor multiplicativity:} For any 
$d\in\bN$ and $A\in\dom_d(\Gm)$, and any $n\in\bN$, 
we have $A^{\otimes n}:=(A_y^{\otimes n})_{y\in\Y}\in \dom_{(\bC^{d})^{\otimes n}}(\Gm)$ and 
\begin{align*}
\Gm(A^{\otimes n})= \Gm(A)^{\otimes n}.
\end{align*}

\item\label{AM bound}
\ki{Arithmetic mean (AM) boundedness:} 
There exists a probability measure $\mu$ on $\F$ such that
\begin{align}\label{eq:AM bound}
\Gm(A)\le\int_{\Y}A_y\,d\mu(y),\ds\ds\ds A\in\dom(\Gm).
\end{align}
\end{enumerate}
When $\Gm=\Gm_{\nu}$ is a $\nu$-weighted operator geometric mean, we will say that it satisfies the 
\ki{arithmetic-geometric mean (AM-GM) inequality}, if \eqref{eq:AM bound} holds with $\mu=\nu$. 
In fact, instead of tensor multiplicativity and arithmetic mean boundedness, it will be sufficient to require that $\Gm$ satisfies the following weaker property:
\begin{enumerate}
\renewcommand{\theenumi}{(\arabic{enumi}')}
\setcounter{enumi}{4}
\item\label{as AM bound}
\ki{Asymptotic individual arithmetic mean boundedness:} For any $A\in\dom(\Gm)$ and any $n\in\bN$,
there exists a positive measure 
$\mu_n$ on $\F$ such that 
\begin{align*}
\Gm(A)^{\otimes n}\le\int_{\Y}A_y^{\otimes n}\,d\mu_n(y),\ds\ds\ds\ds\text{and}\ds\ds\ds\ds
\limsup_{n\to+\infty}\frac{1}{n}\log\mu_n(\Y)\le 0.
\end{align*}
\end{enumerate}
Note that here the measure $\mu_n$ may depend on $A$, hence the term ``individual''. Clearly, \ref{tensor supermult} and \ref{AM bound} imply \ref{as AM bound}, and it is equivalent to the existence of a probability measure $\bar\mu_n$ and an at most subexponentially growing prefactor $\kappa_n$ such that the approximate AM bound 
\begin{align*}
\Gm(A)^{\otimes n}\le\kappa_n\int_{\Y}A_y^{\otimes n}\,d\bar\mu_n(y)
\end{align*}
holds; if $\mu_n\ne 0$ then one can simply choose $\kappa_n:=\mu_n(\Y)$ and $\bar\mu_n:=\mu_n/\mu_n(\Y)$.
\renewcommand{\theenumi}{(\roman{enumi})}

Consider now a composite state discrimination problem specified by $\nh,\ah\subseteq\S(\hil)$ as in 
Section \ref{sec:statedisc}.
Clearly, if $G$ is a positive operator function with 
$\Y=\ah$, 
$\dom(\Gm)\ni (A_{\sigma}=\sigma)_{\sigma\in\ah}\equiv\ah$ 
and $\tilde G$ is a positive operator function with 
$\tilde\Y=\nh$, 
$\dom(\tilde\Gm)\ni(A_{\rho}=\rho)_{\rho\in\nh}\equiv\nh$
that both have the asymptotic individual arithmetic mean boundedness property, then
\begin{align}\label{eq:mean bound1}
\Tr\Gm\bz\A\jz^{\otimes n}T\le
\int_{\ah}\Tr\sigma^{\otimes n}T\,d\mu_n(\sigma)\le\mu_n(\Y)\sup_{\sigma\in\ah}\Tr\sigma^{\otimes n}T,\ds\ds\ds T\in\B(\hil^{\otimes n})_{[0,1]},\,n\in\bN,
\end{align} 
and similarly,
\begin{align}\label{eq:mean bound2}
\Tr\tilde\Gm(\nh)^{\otimes n}T\le
\int_{\nh}\Tr\rho^{\otimes n}T\,d\tilde\mu_n(\rho)\le\tilde\mu_n(\Y)\sup_{\rho\in\nh}\Tr\rho^{\otimes n}T,\ds\ds\ds T\in\B(\hil^{\otimes n})_{[0,1]},\,n\in\bN,
\end{align} 
leading to the bounds \eqref{direct upper}--\eqref{sc lower}
with $C:=\Gm(\ah)$ and $\tilde C:=\tilde\Gm(\nh)$.

The notion of operator geometric mean has been extensively studied in matrix analysis in the past few decades;
however, the focus in the usual approach is on other types of properties, like monotonicity of the mean in its arguments (in the positive semi-definite order) or joint concavity. 
It has been shown in the seminal work \cite{KA} that, essentially under these two conditions and the normalization $\Gm_{\nu}(X,X)=X$, 
for every probability distribution $\nu$ on a $2$-point set ($\Y=\{1,2\}$ for simplicity),
there exists a unique 
$\nu$-weighted operator geometric mean, given for positive definite operators as 
\begin{align*}
G_{\nu}^{\KA}(A_1,A_2)=A_2\#_{\nu(1)}A_1:=A_2^{1/2}\bz A_2^{-1/2}A_1A_2^{-1/2}\jz^{\nu(1)} A_2^{1/2};
\end{align*}
this is called the 
$\nu$-weighted Kubo-Ando geometric mean. It is easy to see that this mean is also tensor multiplicative and satisfies the AM-GM inequality.
Extensions of this mean to more variables 
may be obtained by iteration as $(\ldots(A_1\#_{t_1}A_2)\#_{t_2}\ldots )\#_{t_n}A_{n+1})$, or 
by various limiting procedures \cite{AndoLiMathias2004,BMP,Moakher_matrixmean,Lim_Palfia2012},
all of which satisfy tensor multiplicativity and the AM-GM mean inequality by construction; 
in particular, they are all suitable for the application in composite state discrimination described above. 
(See also \cite{AC2011,BGJ2019,BhatiaHolbrook2006,BJL2018,HiaiLim2020,KimLee2015,
Lawson_Lim2011,LawsonLim2014,mosonyi2022geometric,Petz_Temesi2005,PV_Hellinger} for further references on multivariate matrix geometric means).

For two variables, it is easy to construct 
operator geometric means in the sense of the above definition with properties (1)--(3) that are also manifestly tensor multiplicative (and also block additive; see below)
and are different from the Kubo-Ando means. For instance,
for any $t\in(0,1)$ and $z\in(0,+\infty)$, 
\begin{align}
G_{t,z}(A_1\|A_2)
&:=
\bz A_1^{\frac{t}{2z}}A_2^{\frac{1-t}{z}}A_1^{\frac{t}{2z}}\jz^z,
\label{altmean1}\\
\wtilde G_{t,z}(A_1\|A_2)
&:=
\bz A_2^{\frac{1-t}{2z}}A_1^{\frac{t}{z}}A_2^{\frac{1-t}{2z}}\jz^z,
\label{altmean2}\\
\what G_{t,z}(A_1\|A_2)
&:=
\bz A_2^{\frac{1}{z}}\#_{t}A_1^{\frac{1}{z}}\jz^z,
\label{altmean3}\\
\what G_{t,+\infty}(A_1\|A_2)
&:=\lim_{z\to+\infty}\what G_{t,z}(A_1,A_2)
=
e^{t \log A_1+(1-t)\log A_2}
\label{altmean4}
\end{align}
define $\nu=(t,1-t)$-weighted tensor multiplicative geometric means on pairs of positive definite operators, that are all different from $G_{\nu}^{\KA}$, with the exception of $\what G_{t,1}$. ($G_{t,z}$ and $\wtilde G_{t,z}$ are related to the family of quantum R\'enyi divergences considered in \cite{AD}, $\what G_{t,z}$ was considered in \cite{HP-GT} in connection to the Golden-Thompson inequality, and $\what G_{t,+\infty}$ is called the log-Euclidean geometric mean.)
It is natural to ask then if any of these means satisfy the AM-GM inequality, which has been answered in the negative very recently in \cite{NoAM-GM} (with the obvious exception of $\what G_{t,1}$) by constructing explicit counter-examples. 

In Section \ref{sec:2-var}, we give a complete
characterization of the weighted Kubo-Ando geometric means as the 
maximal $2$-variable operator geometric means satisfying 
asymptotic individual arithmetic mean boundedness
together with the following natural property:
\begin{enumerate}
\renewcommand{\theenumi}{(\arabic{enumi})}
\setcounter{enumi}{5}
\item
\ki{Block superadditivity:} 
If $(A_i,B_i)\in\dom_{\hil_i}(\Gm)$, $i\in[r]$,  
then $(\oplus_iA_i,\oplus_iB_i)\in \dom_{\oplus_i\hil_i}(\Gm)$, and 
\begin{align*}
\Gm\bz\oplus_{i=1}^rA_i,\oplus_{i=1}^rB_i\jz
\ge
\oplus_{i=1}^r\Gm(A_i,B_i).
\end{align*}
\end{enumerate}
\renewcommand{\theenumi}{(\roman{enumi})}

\subsection{Overview of results}

As explained in Section \ref{sec:statedisc}, the state discrimination motivated mathematical problem that we want to solve is finding an easily verifiable characterization of the (maximal) elements of $\C(\R)$ for a given set $\R\subseteq\B(\hil)\p$. 
The membership of an operator $C$ in $\C(\R)$ is characterized by the infinite set of inequalities 
$\Tr C^{\otimes n}T\le\sup_{\omega\in\R}\Tr\omega^{\otimes n}T$, 
$T\in\B(\hil^{\otimes n})\p$, $n\in\bN$,
which can obviously be guaranteed to hold by imposing stronger inequalities. For instance, for any $n\in\bN$, 
the above inequality clearly follows if 
an arithmetic mean type bound
$\Tr C^{\otimes n}T\le\int_{\R}\Tr\omega^{\otimes n}T\,d\mu_n(\omega)$ holds for all $T$ with some probability measure $\mu_n$ that may depend on $n$ (see \eqref{eq:mean bound1} with $C:=\Gm(\ah)$). This in turn is further implied by a weak geometric mean type bound
$\Tr C^{\otimes n}T\le\exp\bz\int_{\R}\log\Tr\omega^{\otimes n}T\,d\nu_n(\omega)\jz$ for all $T$.
For a fixed $n$, each of the above bounds is strictly stronger than the previous one, as one can easily see already for numbers. Somewhat surprisingly, however, they all become equivalent when required to hold for every $n\in\bN$, which is the main result of Section \ref{sec:weak bounds}.
In particular, this gives alternative characterizations of membership in $\C(\R)$ that 
may be easier to analyze, and indeed, it will be the geometric mean type bound that we will use 
to characterize the maximal elements of $\C(\R)$ for a $2$-element set $\R$ in 
Section \ref{sec:2-var}.

The proof of the above alternative characterizations builds on analogous results for the special case of commuting $\R$, which we prove in Section \ref{sec:classical}.
In particular, in this case, we can replace the above weak geometric mean type bound by a strong geometric mean bound, and as a result, we can characterize the maximal elements of $\C(\R)$ as the set of all weighted geometric means of the elements of $\R$ when $\hil$ is finite-dimensional and $\R$ is either finite,
or it consists of invertible elements; see Theorem \ref{thm:classical maxCR}.

In Section \ref{sec:2-var}, we 
use the characterization of $\C(\R)$ via weak geometric mean type bounds 
from Section \ref{sec:weak bounds}, together with Schur-Weyl duality, 
to prove that for a two-element set $\R=\{\ops_1,\ops_2\}$, 
\begin{align*}
\max\C(\{\ops_1,\ops_2\})=\{\ops_2\#_t\ops_1:\,t\in[0,1]\},
\end{align*}
solving the problem posed in \eqref{eq:holygrail} in the strongest possible sense. 
In the context of operator geometric means, our result provides
a new characterization of the weighted Kubo-Ando geometric means as the only 
$2$-variable operator geometric means 
that are block additive and asymptotically  tensor multiplicative, and satisfy the AM-GM inequality.
We show this from another new characterization that we prove, namely, that the 
weighted Kubo-Ando geometric means are maximal among the operator geometric means that 
are block superadditive, 
tensor supermultiplicative, and individually AM bounded.
These are conceptually rather different from the usual characterization \cite{KA}, the main focus of which,
apart from tensor multiplicativity, is  
monotonicity in the arguments (and extra technical conditions like monotone continuity and the transformer inequality).
We note that we also use these properties of the Kubo-Ando means in our proof, but do not impose 
them on the map $(\ops_1,\ops_2)\mapsto C$.

In Section \ref{sec:channel disc}, we extend some of our results to composite channel discrimination.
We show an analogous way to give single-copy bounds on the error probabilities 
in terms of a single completely positive map $\E$ (taking the role of $C$ above), 
both for parallel and for adaptive discrimination strategies. 
We introduce the notion of the weighted Kubo-Ando geometric means for completely positive maps, and show 
their maximality for this approach  
in the $2$-variable case, analogously to the case of state discrimination. 
Related to this, we introduce the more general notion of superoperator perspective in Appendix \ref{sec:supop persp} and work out some
of its fundamental properties, as it may be of independent interest. This is supported by Appendix \ref{sec:oppersp}, where we collect some relevant facts about the notion of operator perspective function.

General definitions and technical tools are collected in Section \ref{sec:prelim}.

\section{Preliminaries}
\label{sec:prelim}

By $\log$ we denote the natural logarithm with its natural extension to 
$[0,+\infty]$ as
\begin{align*}
\log x:=\begin{cases}
-\infty,&x=0,\\
\log x,& x\in(0,+\infty),\\
+\infty,&x=+\infty.
\end{cases}
\end{align*}
Correspondingly, we define $e^{-\infty}:=0$, $e^{+\infty}:=+\infty$, so that 
$e^{\log x}=x$ and $\log e^y=y$ hold for all $x\in[0,+\infty]$, $y\in[-\infty,+\infty]$. Moreover, we define
\begin{align}\label{eq:0on0}
0^0:=e^{0\log 0}=e^{\lim_{x\searrow 0}x\log x}=1.
\end{align}
Throughout the paper we use the convention
\begin{align}\label{zero times infty}
0\cdot(\pm\infty):=0.
\end{align}

We will follow the convention that the set of natural numbers $\bN$ does not include $0$.
For a natural number $r\in\bN$, $[r]:=\{1,\ldots,r\}$.

By a Hilbert space we always mean a complex Hilbert space.
We will denote the inner product on a Hilbert space by $\inner{\valt}{\valt}$ and follow the convention that 
it is linear in its second and conjugate linear in its first variable. We will also use the Dirac notation:
for any vectors $x,y$ in a Hilbert space $\hil$, the operator $\diad{y}{x}$ is defined by $\diad{y}{x}z:=\inner{x}{z}y$, $z\in\hil$.

For a Hilbert space $\hil$, let $\B(\hil)$ denote the set of all bounded linear operators on $\hil$, 
$\B(\hil)_{\sa}$ the set of self-adjoint operators, and 
for an interval $J\subseteq\bR$, let 
$\B(\hil)_{J}:=\{A\in\B(\hil)_{\sa}:\,\spec(A)\subseteq J\}$, i.e., 
the set of self-adjoint operators on $\hil$ with their spectra in $J$.
We will use the shorthand notations $\B(\hil)_{\ge 0}:=\B(\hil)_{[0,+\infty)}$
for the set of positive semi-definite (PSD) operators on $\hil$, and 
$\B(\hil)_{>0}:=\B(\hil)_{(0,+\infty)}$ for the set of 
positive definite operators. An inequality $A\le B$ between operators $A,B\in\B(\hil)$ is always interpreted in the L\"owner (or PSD) order, meaning $B-A\in\B(\hil)\p$. Elements of the set
\begin{align*}
\B(\hil)_{[0,1]}:=\left\{T\in\B(\hil)_{\sa}:\,0\le T\le I\right\}
\end{align*}
are called \ki{tests} on $\hil$.
The set of (orthogonal) projections on $\hil$ will be denoted by 
$\bP(\hil):=\B(\hil)_{\{0,1\}}=\{P\in\B(\hil)_{\sa}:\,P^2=P\}$.
For a positive semi-definite operator $A\in\B(\hil)\p$, we will use the notation
\begin{align}\label{eq:supppr}
A^0:=\lim_{t\searrow 0}A^t
\end{align}
for the projection onto $\supp A:=(\ker A)^{\perp}$. 
(We note that this notation is not completely consistent with the convention \eqref{eq:0on0} if 
$A^0$ was to be interpreted in terms of functional calculus, but \eqref{eq:supppr} gives some justification for it, and we we use it for convenience.)

The trace functional on trace-class operators and on PSD operators will be denoted by $\Tr$; the \ki{trace-norm} is then  
$\norm{X}_1:=\Tr|X|=\Tr\sqrt{X^*X}$, where $X$ is any trace-class operator.
The positive semi-definite trace-class operators with unit trace are called
\ki{density operators}, or \ki{states}, and  
$\S(\hil)$ denotes the set of all density operators on a Hilbert space $\hil$.

For PSD operators $A,B\in\B(\hil)\p$ on a finite-dimensional Hilbert space $\hil$, the 
\ki{absolutely continuous part} of $A$ with respect to $B$ is defined as
$\acc{A}{B}:=\max\{X\in\B(\hil)\p:\,X\le A,\,X^0\le B^0\}$, and is given by 
$\acc{A}{B}=B^0AB^0-B^0A((I-B^0)A(I-B^0))\inv AB^0$, where the inverse denotes the generalized inverse on
the support of the given PSD operator.

For a continuous function $f:\,(0,+\infty)\to\bR$, the corresponding 
\ki{operator perspective function} 
$\persp{f}$ is defined on pairs 
of positive definite operators $A,B\in\B(\hil)\pp$ as 
\begin{align}\label{eq:oppersp def1}
\persp{f}(A,B):=B^{1/2}f\bz B^{-1/2}AB^{-1/2}\jz B^{1/2},
\end{align}
and extended to $A,B\in\B(\hil)\p$ as 
\begin{align}\label{eq:oppersp def2}
\persp{f}(A,B):=\lim_{\ep\searrow 0}\persp{f}(A+\ep I,B+\ep I),
\end{align}
whenever the limit exists \cite{Effros,ENG,HiaiMosonyi2017}, 
in which case we say that $\persp{f}(A,B)$ is defined.
It is easy to see that the limit in \eqref{eq:oppersp def2} always exists if $f$ is a non-negative operator monotone function.
It is also easy to see that for the \ki{transpose function} $\tilde f(x):=xf(1/x)$, $x>0$, we have
\begin{align}\label{persp trans}
\persp{f}(A,B)=\persp{\tilde f}(B,A),
\end{align}
whenever both sides are well-defined. 
If $\hil$ is finite dimensional and $f$ is a non-negative operator monotone function with 
$f(0^+)=0=\tilde f(0^+)$ then we have 
\begin{align}\label{eq:oppersp def3}
\persp{f}(A,B)=
\acc{B}{A}^{1/2}f\bz \acc{B}{A}^{-1/2}\acc{A}{B}\acc{B}{A}^{-1/2}\jz \acc{B}{A}^{1/2},
\end{align}
for any $A,B\in\B(\hil)\p$, according to \cite[Theorem 17]{Kosaki_ac}.
In particular, for any $t\in[0,1]$, the 
\ki{$t$-weighted Kubo-Ando geometric mean} \cite{KA} is given by 
\begin{align}
B\#_{t}A:=P_{\id_{(0,+\infty)}^t}(A,B)
&:=
\lim_{\ep\searrow 0}(B+\ep I)^{1/2}\bz (B+\ep I)^{-1/2}(A+\ep I)(B+\ep I)^{-1/2}\jz^t(B+\ep I)^{1/2}\nn\\
&=
\begin{cases}
B,&t=0,\\
\acc{B}{A}^{1/2}\bz \acc{B}{A}^{-1/2}\acc{A}{B}\acc{B}{A}^{-1/2}\jz^t \acc{B}{A}^{1/2},& t\in(0,1),\\
A,&t=1,
\end{cases}\label{eq:KA cases}
\end{align}
and it satisfies 
\begin{align*}
B\#_{t}A=A\#_{1-t}B,
\end{align*}
for any $A,B\in\B(\hil)\p$.
Moroever,
the $t$-weighted Kubo-Ando geometric mean satisfies the \ki{transformer identity} \cite{KA}:
\begin{align}\label{eq:transformer}
(XBX)\#_{t}(XAX)=X(B\#_{t}A)X
\end{align}
for any $A,B\in\B(\hil)\p$ and positive definite $X\in\B(\hil)\pp$.
It is straightforward from \eqref{eq:KA cases} that for fixed $A,B$,
\begin{align}\label{eq:KA cont1}
t\mapsto B\#_{t}A\ds\ds\text{is continuous on}\ds\ds (0,1),  
\end{align}
but not necessarily at $0$ and $1$, since
\begin{align}\label{eq:KA cont2}
\lim_{t\searrow 0} B\#_{t}A=\acc{B}{A}\le B=B\#_0A,\ds\ds\ds
\lim_{t\nearrow 1} B\#_{t}A=\acc{A}{B}\le A=B\#_1A,
\end{align}
and the first inequality is not an equality if $B^0\nleq A^0$,
and the second inequality is not an equality if $A^0\nleq B^0$.

For a Hausdorff topological space $\Y$, $\simplex(\Y)$ will denote the set of probability measures on the Borel 
$\sigma$-algebra of $\Y$. We will always consider it equipped with the weak$^*$-topology induced by the bounded 
continuous functions on $\Y$, i.e., a net $(\nu_i)_{i\in\I}$ in $\S(\Y)$ converges to some $\nu\in\S(\Y)$ if 
$\int_{\Y}f(y)\,d\nu_i(y)$ converges to $\int_{\Y}f(y)\,d\nu(y)$ for any bounded continuous function 
$f:\,\Y\to\bR$.
When $\Y$ is finite, the Borel $\sigma$-algebra is just the full power set of $\Y$, $\S(\Y)$ is naturally identified with the set of probability density functions on $\Y$, or equivalently, with the simplex spanned by the canonical basis vectors in $\bR^{\Y}$, and the weak$^*$-topology coincides with the 
topology induced by any norm on $\bR^{\Y}$.
For an arbitrary non-empty set 
$\X$, $\simplex_f(\X)$ will denote the set of finitely supported probability density functions on $\X$. 

The following statement from \cite[Theorem 5.2]{FarkasRevesz2006} is an extension 
of the minimax theorems due to Kneser \cite{Kneser} and Fan \cite{Fan}
to the case where $f$ can take the value $+\infty$. 

\begin{lemma}\label{lemma:KF+ minimax}
Let $X$ be a compact convex set in a topological vector space $V$ and $Y$ be a convex
subset of a vector space $W$. Let $h:\,X\times Y\to\bR\cup\{+\infty\}$ be such that
\smallskip

\s(i) $h(x,\valt)$ is concave on $Y$ for each $x\in X$, and
\smallskip

(ii) $h(\valt,y)$ is convex and lower semi-continuous  on $X$ for each $y\in Y$.
\smallskip

\noindent Then 
\begin{align}\label{minimax statement}
\inf_{x\in X}\sup_{y\in Y}h(x,y)=
\sup_{y\in Y}\inf_{x\in X}h(x,y),
\end{align}
and the infima in \eqref{minimax statement} can be replaced by minima.
\end{lemma}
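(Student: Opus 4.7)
The plan is to prove the nontrivial inequality $\inf_{x\in X}\sup_{y\in Y}h(x,y)\le\sup_{y\in Y}\inf_{x\in X}h(x,y)$ by a compactness/finite-reduction argument, since the opposite inequality is immediate weak duality. The attainment claim (that the infima are minima) follows right away once we observe that $x\mapsto\sup_{y\in Y}h(x,y)$ is lower semi-continuous on the compact $X$, being a pointwise supremum of lsc functions; hence it achieves its infimum.

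Next, I would fix $\beta\in\bR$ with $\beta<\inf_{x\in X}\sup_{y\in Y}h(x,y)$ and consider the sublevel sets $F_y:=\{x\in X:h(x,y)\le\beta\}$. By convexity and lsc of $h(\cdot,y)$, each $F_y$ is closed and convex in $X$, and by choice of $\beta$ the family $\{F_y\}_{y\in Y}$ has empty intersection. Compactness of $X$ then yields a finite subfamily $F_{y_1},\ldots,F_{y_n}$ with $\bigcap_{i=1}^n F_{y_i}=\emptyset$, i.e., $\max_i h(x,y_i)>\beta$ for every $x\in X$. The target is then to produce $\lambda\in\Delta_n:=\{\lambda\ge 0:\sum_i\lambda_i=1\}$ with $\sum_i\lambda_i h(x,y_i)>\beta$ for all $x\in X$. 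Once this is in hand, I set $\bar y:=\sum_i\lambda_i y_i\in Y$ (using convexity of $Y$), and concavity of $h(x,\cdot)$ (with the convention \eqref{zero times infty}) gives $h(x,\bar y)\ge\sum_i\lambda_i h(x,y_i)>\beta$, so $\sup_y\inf_x h(x,y)\ge\inf_x h(x,\bar y)\ge\beta$. Letting $\beta\nearrow\inf_x\sup_y h(x,y)$ finishes the proof.

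The remaining finite-dimensional sub-claim is the following: given convex, lsc functions $g_1,\ldots,g_n:X\to\bR\cup\{+\infty\}$ with $\max_i g_i(x)>0$ for every $x\in X$, there exists $\lambda\in\Delta_n$ with $\sum_i\lambda_i g_i(x)>0$ for all $x\in X$. I would apply a bilinear minimax to $\phi(x,\lambda):=\sum_i\lambda_i g_i(x)$ on $X\times\Delta_n$: it is convex and lsc in $x$, affine and continuous in $\lambda$, and its maximum over $\lambda$ at each $x$ is $\max_i g_i(x)$. Since $\max_i g_i$ is lsc on the compact $X$ and strictly positive pointwise, $\mu:=\min_x\max_i g_i(x)>0$; minimax then yields a $\lambda^*$ with $\min_x\phi(x,\lambda^*)=\mu>0$.

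The main technical obstacle is precisely the handling of the value $+\infty$ in the bilinear minimax step above: the usual Kneser/Fan proof uses truncations or epigraphs, but the naive truncation $g_i\wedge N$ destroys convexity in $x$. I would circumvent this by splitting $X$ into the effective domain $X_0:=\bigcap_i\{g_i<+\infty\}$ (convex but generally not closed) and its complement, noting that on $X\setminus X_0$ the sum $\sum_i\lambda_i g_i$ is automatically $+\infty$ whenever the relevant $\lambda_i>0$, so it suffices to produce a suitable $\lambda$ with the desired strict inequality on the closure of $X_0$ via the classical real-valued Kneser-Fan theorem applied to appropriate real-valued restrictions, and then perturb the weights slightly toward the uniform distribution on $\{1,\ldots,n\}$ to ensure strict positivity everywhere on $X$. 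The delicate bookkeeping in this last step---ensuring that the perturbation neither destroys the inequality on $X_0$ nor fails outside $X_0$---is where I expect the bulk of the argument to lie.
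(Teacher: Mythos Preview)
The paper does not prove this lemma at all: it is quoted verbatim from \cite[Theorem 5.2]{FarkasRevesz2006} (announced in the sentence preceding the statement) and used as a black box throughout. So there is no ``paper's own proof'' to compare against.

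As for your sketch itself: the overall architecture (weak duality, lsc $\Rightarrow$ attainment, compactness to pass to finitely many $y_i$, then a bilinear minimax on the simplex to produce a convex combination, then concavity in $y$) is the standard Kneser--Fan route and is sound. The only real content beyond the classical case is exactly the point you flag yourself, namely the $+\infty$ values. Your proposed workaround (restrict to the common effective domain, apply the real-valued theorem there, then perturb toward the uniform weight) is plausible but not yet a proof: the closure of $X_0$ need not lie inside $X_0$, so the $g_i$ may still be $+\infty$ there and the ``real-valued restriction'' you want to invoke is not obviously available; and the perturbation step needs a quantitative lower bound on $\sum_i\lambda_i g_i$ over $X_0$ that survives the perturbation, which you have not secured. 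A cleaner way to close this gap is to run the bilinear minimax directly with the $(\bR\cup\{+\infty\})$-valued $g_i$: the map $(x,\lambda)\mapsto\sum_i\lambda_i g_i(x)$ is still convex and lsc in $x$ and concave (indeed upper semi-continuous) in $\lambda$ under the convention $0\cdot(+\infty)=0$, and one can invoke a one-sided lsc/usc version of the finite-dimensional minimax (e.g.\ Sion's theorem) rather than the fully continuous one. That avoids the domain-splitting and perturbation entirely.
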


The following simple observation will be useful in some applications of the above minimax theorem.

\begin{lemma}\label{lemma:usc}
Let $\A$ be a finite set, $\Y$ be a compact Hausdorff topological space, and 
$h:\,\A\times\Y\to[0,+\infty)$ be a function such that 
for every $a\in\A$, $h(a,\valt)$ is continuous. 
Then 
\begin{align}\label{eq:joint usc}
\simplex(\A)\times\simplex(\Y)\ni(\mu,\nu)\mapsto\sum_{a\in\A}\mu(a)\int_{\Y}\log h(a,y)\,d\nu(y)
=\int_{\Y}\sum_{a\in\A}\mu(a)\log h(a,y)\,d\nu(y)=:F(\mu,\nu)
\end{align}
is (jointly) upper semi-continuous in the product of 
any norm topology on $\S(\A)$ and the weak$^*$-topology 
induced by continuous functions on $\S(\Y)$.  
\end{lemma}
\begin{proof}
For any $\ep>0$, 
\begin{align}
F_{\ep}(\mu,\nu):=\sum_{a\in\A}\mu(a)\int_{\Y}\log (h(a,y)+\ep)\,d\nu(y)
=\int_{\Y}\sum_{a\in\A}\mu(a)\log (h(a,y)+\ep)\,d\nu(y)
\end{align}
is easily seen to be jointly continuous on $\simplex(\A)\times\simplex(\Y)$. 
Since $F_{\ep}(\mu,\nu)<+\infty$ for every $\ep>0$, the monotone convergence theorem implies that 
$F(\mu,\nu)=\inf_{\ep>0}F_{\ep}(\mu,\nu)$, and hence $F$, as the infimum of continuous functions, is upper semicontinuous.
\end{proof}

\section{Classical case}
\label{sec:classical}

In this section we consider the problems outlined in Section \ref{sec:intro} in the setting where all involved operators 
commute. In this case, the $C^*$-algebra they generate in $\B(\hil)$ is isomorphic to the space of continuous functions on a compact Hausdorff space, and therefore we will formulate our results for functions, 
replacing the operator $C$ with a function $f$, and the 
elements of $\R$ (alternatively, the
operators $(A_y)_{y\in\Y}$) with a collection of functions $(g_y)_{y\in\Y}$, on some set $\X$. 
In fact, we consider a slightly more general setting in most of the section, where we do not assume anything about $\X$, whence, in particular, $f$ and $g_y$ are not assumed to be continuous functions, either, in lack of a topology to make sense of it. 

For a function $f:\,\X\to\bC$, its $n$-th tensor power is defined as
\begin{align*}
f^{\otimes n}(\vecc{x}):=\prod_{j=1}^n f(x_j),\ds\ds\ds \vecc{x}\in\X^n.
\end{align*}
For a function $g:\,\X\times\Y\to\bR$, we will denote $g_y(x):=g(x,y)$, $x\in\X$, $y\in\Y$, and by the above,
\begin{align*}
g_y^{\otimes n}(\vecc{x})=\prod_{j=1}^n g(x_j,y),\ds\ds\ds \vecc{x}\in\X^n,\,y\in\Y.
\end{align*}

\begin{thm}\label{thm:classical}
Let $\X,\Y$ be non-empty sets, $\F$ be a $\sigma$-algebra on $\Y$, and $f:\,\X\to[0,+\infty)$,
$g:\,\X\times\Y\to[0,+\infty)$ be functions such that for all 
$x\in\X$, $g(x,\valt)$ is $\F$-measurable.
Consider the following statements:
\begin{enumerate}
\item\label{classical1}
\emph{(Single-copy upper bound by a weighted geometric mean.)}
There exists a probability measure $\nu$ on $\F$ such that 
\begin{align}\label{eq:classical finiteness condition}
\forall x\in\X:\ds\exists\,\int_{\Y}\log g(x,y)\,d\nu(y)\in[-\infty,+\infty),
\end{align}
and
\begin{align}\label{eq:classical1}
f(x)\le\exp\bz\int_{\Y}\log g(x,y)\,d\nu(y)\jz, \ds\ds\ds x\in\X.
\end{align}

\item\label{classical1-2}
\emph{(Multi-copy upper bounds by a sequence of weighted geometric means.)}
There exists a sequence of probability measures $(\nu_n)_{n\in\bN}$ on $\F$ such that 
\begin{align}\label{eq:classical finiteness condition-2}
\forall n\in\bN,\,\forall \vecc{x}\in\X^n:\ds\exists\,\int_{\Y}\log g_y^{\otimes n}(\vecc{x})\,d\nu_n(y)\in[-\infty,+\infty),
\end{align}
and
\begin{align}\label{eq:classical1-2}
f^{\otimes n}(\vecc{x})\le\exp\bz\int_{\Y}\log g_y^{\otimes n}(\vecc{x})\,d\nu_n(y)\jz,\ds\ds\ds
\vecc{x}\in\X^n,\ds n\in\bN.
\end{align}

\item\label{classical2}
\emph{(Multi-copy upper bounds by a weighted arithmetic mean.)}
There exists a probability measure $\mu$ on $\F$ such that for every $n\in\bN$,
\begin{align}\label{eq:classical2}
f^{\otimes n}(\vecc{x})\le\int_{\Y}g_y^{\otimes n}(\vecc{x})\,d\mu(y),\ds\ds\ds
\vecc{x}\in\X^n.
\end{align}
\item\label{classical3}
\emph{(Multi-copy upper bounds by a sequence of weighted arithmetic means.)}
There exists a 
sequence of positive measures $(\mu_n)_{n\in\bN}$ on $\F$ such that 
\begin{align}\label{eq:subexponential weight}
\limsup_{n\to+\infty}\frac{1}{n}\log\mu_n(\Y)\le 0,
\end{align}
and 
\begin{align}\label{eq:classical3}
f^{\otimes n}(\vecc{x})\le\int_{\Y}g_y^{\otimes n}(\vecc{x})\,d\mu_n(y),\ds\ds\ds
\vecc{x}\in\X^n,\ds n\in\bN.
\end{align}
\end{enumerate}

We have \ref{classical1}$\iff$\ref{classical1-2}$\imp$\ref{classical2}$\imp$\ref{classical3}.
Moreover, if $\Y$ is a compact Hausdorff topological space, $\F$ is its Borel $\sigma$-algebra, and
for every $x\in\X$, 
$g(x,\valt)$ is continuous on $\Y$, then also 
\ref{classical3}$\imp$\ref{classical1} holds.
\end{thm}
\begin{proof}
Clearly, \ref{classical1} implies \ref{classical1-2} with $\nu_n=\nu$, $n\in\bN$,
and the converse implication is trivial. The implication
\ref{classical1}$\imp$\ref{classical2} 
follows by the convexity of the exponential function, and 
the implication \ref{classical2}$\imp$\ref{classical3} is trivial.
Hence, we are left to prove \ref{classical3}$\imp$\ref{classical1}.

Assume therefore that \ref{classical3} holds. 
Note that \eqref{eq:classical1} holds trivially when $f(x)=0$, and hence we only need to prove it for 
every $x\in\supp f:=\{z\in\X:\,f(z)>0\}$. 
Note also that \eqref{eq:classical3} 
implies that for every $\ep>0$, 
\begin{align}\label{eq:classical3-0}
f^{\otimes n}(\vecc{x})\le\int_{\Y}(g_y+\ep)^{\otimes n}(\vecc{x})\,d\mu_n(y),\ds\ds\ds
\vecc{x}\in\X^n,\ds n\in\bN,
\end{align}
or equivalently,
\begin{align}\label{eq:classical3-2}
1\le\int_{\Y}\tilde g_y^{\otimes n}(\vecc{x})\,d\mu_n(y),\ds\ds\ds
\vecc{x}\in(\supp f)^n, \ds n\in\bN,
\end{align}
where $\tilde g(x,y):=(g(x,y)+\ep)/f(x)$, $x\in\supp f$, $y\in\Y$.
We will first prove \eqref{eq:classical1} with $g+\ep$ in place of $g$, which is equivalent to   
\begin{align}\label{eq:classical1-4}
1\le\exp\bz\int_{\Y}\log \tilde g(x,y)\,d\nu(y)\jz, \ds\ds\ds x\in\supp f.
\end{align}

Let $x_1,\ldots,x_m\in\supp f$, and $r_1,\ldots,r_m$ be strictly positive rational numbers summing to $1$. We can write $r_i=a_i/a$ with some natural numbers $a_1,\ldots,a_m$ and 
$a:=\sum_{i=1}^m a_i$. For any $k\in\bN$, \eqref{eq:classical3-2} yields that 
\begin{align*}
1\le\int_{\Y} \prod_{j=1}^m \tilde g_y(x_j)^{ka_j}\,d\mu_{ka}(y)
\le
\bz\sup_{y\in\Y}\prod_{j=1}^m \tilde g_y(x_j)^{a_j}\jz^k\mu_{ka}(\Y).
\end{align*}
Taking the $k$-th root and then the limsup in $k$ yields
\begin{align*}
1\le\sup_{y\in\Y}\prod_{j=1}^m \tilde g_y(x_j)^{a_j},
\end{align*}
according to \eqref{eq:subexponential weight}. By taking the $\log$ of both sides, we get 
\begin{align}\label{eq:classical proof1}
0\le\sup_{y\in\Y}\sum_{j=1}^m r_j\log\tilde g(x_j,y).
\end{align}

Assume now that 
$\Y$ is a compact Hausdorff topological space, $\F$ is its Borel $\sigma$-algebra, and
for every $x\in\X$, 
$g(x,\valt)$ is continuous on $\Y$.
Then \eqref{eq:classical proof1} can be rewritten as 
\begin{align}\label{eq:classical proof1-2}
0\le
\sup_{\nu\in\simplex(\Y)}\int_{\Y}\sum_{j=1}^m r_j\log\tilde g(x_j,y)\,d\nu(y).
\end{align}
Moreover,
the conditions of Lemma \ref{lemma:usc} are satisfied with $\A:=[m]$ and 
$h(i,y):=\tilde g(x_i,y)$,
and therefore
$\S([m])\times\S(\Y)\ni(r,\nu)\mapsto \int_{\Y}\sum_{j=1}^m r_j\log\tilde g(x_j,y)\,d\nu(y)$ is jointly upper semi-continuous, which in turn implies that 
$\S([m])\ni r\mapsto\sup_{\nu\in\S(\Y)}\int_{\Y}\sum_{j=1}^m r_j\log\tilde g(x_j,y)\,d\nu(y)$ is upper semi-continuous (see, e.g., \cite[Lemma 2.3]{mosonyi2022geometric}).
Let $r=(r_1,\ldots,r_m)$ be an arbitrary probability distribution on $[m]$, and 
let $r^{(n)}=(r^{(n)}_1,\ldots,r^{(n)}_m)$, $n\in\bN$, be probability distributions with 
strictly positive rational weights converging to 
$r$. By \eqref{eq:classical proof1-2} and the just established upper semi-continuity, 
\begin{align}\label{eq:classical proof2}
0&\le
\limsup_{n\to+\infty}
\sup_{\nu\in\simplex(\Y)}\int_{\Y}\sum_{j=1}^m r_j^{(n)}\log\tilde g(x_j,y)\,d\nu(y)
\le
\sup_{\nu\in\simplex(\Y)}\int_{\Y}\sum_{j=1}^m r_j\log\tilde g(x_j,y)\,d\nu(y).
\end{align}
Since this is true for any $m\in\bN$, any choice of $x_1,\ldots,x_m\in\supp f$ and $(r_1,\ldots,r_m)\in\S([m])$, we get that 
\begin{align}
0&\le 
\inf_{\mu\in\simplex_f(\supp f)}\sup_{\nu\in\simplex(\Y)}\int_{\Y}\sum_{x\in\X}\mu(x)\log\tilde g(x,y)\,d\nu(y)\label{eq:classical proof3}\\
&=
\max_{\nu\in\simplex(\Y)}\inf_{\mu\in\simplex_f(\supp f)}\int_{\Y}\sum_{x\in\X}\mu(x)\log\tilde g(x,y)\,d\nu(y)
\label{eq:classical proof4}\\
&=
\max_{\nu\in\simplex(\Y)}\inf_{x\in\supp f}\int_{\Y}\log\tilde g(x,y)\,d\nu(y).
\label{eq:classical proof5}
\end{align}
Here, the equality in \eqref{eq:classical proof5} is trivial, and the equality in \eqref{eq:classical proof4} follows by the application of Lemma  \ref{lemma:KF+ minimax}
to $X=\S(\Y)$, $Y=\S_f(\supp f)$, and $h(\nu,\mu):=-\int_{\Y}\sum_{x\in\X}\mu(x)\log\tilde g(x,y)\,d\nu(y)$.
Indeed, both $\S_f(\supp f)$ and $\S(\Y)$ are convex, and $\S(\Y)$ is compact in the weak$^*$-topology induced by the continuous functions on $\Y$, according to the Markov-Riesz-Kakutani representation theorem and the Banach-Alaoglu theorem. Additionally, $h$ is affine in both of its variables, and by Lemma \ref{lemma:usc}, $\S(\Y)\ni\nu\mapsto
\int_{\Y}\sum_{x\in\X}\mu(x)\log\tilde g(x,y)\,d\nu(y)$ is upper semi-continuous for every 
$\mu\in\S_f(\X)$. Hence, Lemma  \ref{lemma:KF+ minimax} applies.

Thus, by \eqref{eq:classical proof5}, there exists a $\nu_{\ep}\in\simplex(\Y)$ such that 
\begin{align}
0&\le 
\int_{\Y}\log\tilde g(x,y)\,d\nu_{\ep}(y),\ds\ds\ds x\in\supp f,
\end{align}
or equivalently,
\begin{align}\label{eq:classical proof6}
f(x)&\le 
\exp\bz\int_{\Y}\log (g(x,y)+\ep)\,d\nu_{\ep}(y)\jz,\ds\ds\ds x\in\supp f.
\end{align}
Thus, for any $\ep>0$, the set $\S_{\ep}(\Y)$ of probability measures $\nu_{\ep}$ for which 
\eqref{eq:classical proof6} holds is non-empty, and it is obviously closed in the 
weak$^*$-topology induced by the continuous functions on $\Y$.
Moreover, for $0<\ep<\ep'$ we have $\S_{\ep}(\Y)\subseteq\S_{\ep'}(\Y)$. Hence, by the 
weak$^*$-compactness of $\S(\Y)$, $\cap_{\ep>0}\S_{\ep}(\Y)\ne \emptyset$. Let 
$\nu\in\cap_{\ep>0}\S_{\ep}(\Y)$; then 
\begin{align*}
f(x)&\le 
\exp\bz\int_{\Y}\log (g(x,y)+\ep)\,d\nu(y)\jz,\ds\ds\ds x\in\supp f,
\end{align*}
holds for every $\ep>0$, and taking $\ep\searrow 0$ yields, by the monotone convergence theorem, that 
\begin{align*}
f(x)&\le 
\exp\bz\int_{\Y}\log g(x,y)\,d\nu(y)\jz,\ds\ds\ds x\in\supp f,
\end{align*}
proving \ref{classical1}.
\end{proof}

\begin{rem}
When $\Y$ is finite, \eqref{eq:classical1} can be written in the simpler form 
\begin{align}\label{eq:classical1-3}
f(x)\le\prod_{y\in\Y}g_y(x)^{\nu(y)}, \ds\ds\ds x\in\X,
\end{align}
where $0^0:=1$ according to \eqref{eq:0on0}, which is the only definition consistent 
with $\prod_{y\in\Y}g_y(x)^{\nu(y)}=\exp(\sum_{y\in\Y}\nu(y)\log g_y(x))$.
Note that with the alternative convention $0^0:=0$, the implication 
$f^{\otimes n}(\vecc{x})\le\sum_yg_y^{\otimes n}(\vecc{x})\mu(y),\,n\in\bN\imp
f(x)\le\prod_{y\in\Y}g_y(x)^{\nu(y)}$ would not be true, as one can easily confirm
by considering the simple example $|\X|=1$, $\Y=\{1,2\}$, $f=g_1>0$, $g_2:=0$,
where $f^n=g_1^n\le 1\cdot g_1^n+0\cdot g_2^n$, $n\in\bN$, but 
$f\nleq 0=g_1^tg_2^{1-t}$, $t\in[0,1]$.
\end{rem}

\begin{rem}
A few further remarks are in order related to Theorem \ref{thm:classical}.
First, it is clear that if a measure $\nu$ satisfies \eqref{eq:classical1} then $\mu=\nu$
satisfies \eqref{eq:classical2}. One may wonder whether the opposite implication is true as well; however, it is easily seen to fail even in the simplest setting with $|\X|=1$, $|\Y|=2$. Indeed, 
choosing 
\begin{align*}
\X=\{0\},\ds \Y=\{1,2\},\ds a:=g_1(0):=1/100,\ds b:=g_2(0):=10,\ds f(0):=1,
\end{align*}
with $\mu(1)=\mu(2)=1/2$ we have
$f(0)=1\le 10^n/2\le \mu(1)g_1(0)^n+\mu(2)g_2(0)^n$ for any $n\in\bN$, while 
$g_1(0)^{\mu(1)}g_2(0)^{\mu(2)}=a^{1/2}b^{1/2}=1/\sqrt{10}<1=f(0)$. 

Second, still in the simple setting $|\X|=1$, $|\Y|=2$, it is clear that if \eqref{eq:classical2} holds for $n=1$ then 
\eqref{eq:classical1} holds, and one may choose $\nu$ to be one of the two Dirac measures $\nu=\delta_1$ or 
$\nu=\delta_2$; indeed, assuming without loss of generality that 
$g_1(0)\ge g_2(0)$, $g_1(0)^1g_2(0)^0=g_1(0)=\max\{g_1(0),g_2(0)\}\ge \mu(1)g_1(0)+\mu(2)g_2(0)\ge f(0)$, where the last inequality is by assumption and the rest are trivial. 
Both of these implications fail in general when $|\X|\ge 2$.

Indeed, choosing $\X=\{0,1\}$, $g_1(0)=g_2(1)=1/100$, $g_1(1)=g_2(0)=10$, and $f(0)=f(1)=1$, we get that 
$f(x)\le g_1(x)/2+g_2(x)/2$, $x=0,1$, i.e., \eqref{eq:classical2} holds for $n=1$ with $\mu(1)=\mu(2)=1/2$, but 
$f(0)\le g_1(0)^{\nu(1)}g_2(0)^{\nu(2)}$ yields 
$\nu(1)\le 1/3$, while 
$f(1)\le g_1(1)^{\nu(1)}g_2(0)^{\nu(2)}$ yields 
$\nu(1)\ge 2/3$, showing that 
\eqref{eq:classical1} does not hold with any $\nu$.

Regarding the other implication, with 
$g_1(0)=g_2(1)=1/10$, $g_1(1)=g_2(0)=10$, and $f(0)=f(1)=1$, we get that 
\eqref{eq:classical1} holds with the unique measure $\nu(1)=\nu(2)=1/2$ 
(and obviously \eqref{eq:classical2} holds also for every $n\in\bN$ with $\mu=\nu$),
i.e., even if \eqref{eq:classical1} holds, it may not hold with $\nu$ being a Dirac measure.
\end{rem}
\bigskip

Let us now consider the type of bounds sought for in 
Section \ref{sec:statedisc}
in the context of hypothesis testing. For simplicity, we work in the finite-dimensional setting for the rest of the section, i.e., we assume that $|\X|<+\infty$.
In the setting of hypothesis testing, $g_y$ are probability density functions on some finite set $\X$, and 
$n$-copy tests are represented by functions $\tau:\,\X^n\to[0,1]$. The type of bound we are looking for in this setting is 
\begin{align}\label{eq:classical4}
\sum_{\vecc{x}\in\X^n}f^{\otimes n}(\vecc{x})\tau(\vecc{x})
\le
\sup_{y\in\Y}\sum_{\vecc{x}\in\X^n}g_y^{\otimes n}(\vecc{x})\tau(\vecc{x}),\ds\ds\ds
\tau\in[0,1]^{\X^n};
\end{align}
see \eqref{eq:CR def2}.
We have the following:

\begin{prop}\label{prop:classical}
Let $\X$ be a finite set and $\Y$ be a compact Hausdorff topological space.
Let $f:\,\X\to[0,+\infty)$,
$g:\,\X\times\Y\to[0,+\infty)$ be functions such that for all 
$x\in\X$, $g(x,\valt)$ is continuous.
Then \eqref{eq:classical4} holds for some $n\in\bN$ if and only if there exists 
a probability measure $\mu_n\in\S(\Y)$ such that
\begin{align}\label{eq:classical5}
f^{\otimes n}(\vecc{x})\le\int_{\Y}g_y^{\otimes n}(\vecc{x})\,d\mu_n(y),\ds\ds\ds
\vecc{x}\in\X^n.
\end{align}
\end{prop}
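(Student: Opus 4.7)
The ``if'' direction is immediate: given $\mu_n$ satisfying \eqref{eq:classical5}, I would multiply by a test value $\tau(\vecc{x})\in[0,1]$, sum over $\vecc{x}\in\X^n$, swap the finite sum with the integral on the right, and bound the resulting integrand $\sum_{\vecc{x}}\tau(\vecc{x})g_y^{\otimes n}(\vecc{x})$ pointwise by its supremum over $y\in\Y$; since that supremum is $y$-independent, integrating against the probability measure $\mu_n$ leaves it unchanged, reproducing the right-hand side of \eqref{eq:classical4}.

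For the substantive direction I would run a Kneser--Fan minimax argument via Lemma \ref{lemma:KF+ minimax}. Introduce the bilinear form
\begin{align*}
h(\tau,\mu):=\sum_{\vecc{x}\in\X^n}\tau(\vecc{x})\left[\int_{\Y}g_y^{\otimes n}(\vecc{x})\,d\mu(y)-f^{\otimes n}(\vecc{x})\right]
\end{align*}
on the compact convex set $X:=[0,1]^{\X^n}\subset\bR^{\X^n}$ and the convex set $Y:=\simplex(\Y)$. Since Dirac measures lie in $\simplex(\Y)$ and each $y\mapsto g_y^{\otimes n}(\vecc{x})$ is continuous, one has $\sup_{y\in\Y}F(y)=\sup_{\mu\in\simplex(\Y)}\int_{\Y} F\,d\mu$ for $F(y):=\sum_{\vecc{x}}\tau(\vecc{x})g_y^{\otimes n}(\vecc{x})$, so \eqref{eq:classical4} translates into $\min_{\tau\in X}\sup_{\mu\in Y}h(\tau,\mu)\ge 0$. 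Because $h$ is linear (hence concave) in $\mu$ and linear and continuous (hence convex and lower semi-continuous) in $\tau$, Lemma \ref{lemma:KF+ minimax} yields
\begin{align*}
\sup_{\mu\in Y}\min_{\tau\in X}h(\tau,\mu)=\min_{\tau\in X}\sup_{\mu\in Y}h(\tau,\mu)\ge 0.
\end{align*}

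Now I would exploit the product structure of $X$: setting $d_{\vecc{x}}(\mu):=\int_{\Y}g_y^{\otimes n}(\vecc{x})\,d\mu(y)-f^{\otimes n}(\vecc{x})$ and minimizing coordinatewise gives
\begin{align*}
\min_{\tau\in X}h(\tau,\mu)=\sum_{\vecc{x}\in\X^n}\min\{0,d_{\vecc{x}}(\mu)\}.
\end{align*}
Continuity of $g(x,\valt)$ on the compact space $\Y$ makes each $d_{\vecc{x}}(\valt)$ weak$^*$-continuous, so this finite sum is continuous on the weak$^*$-compact set $\simplex(\Y)$ (Banach--Alaoglu together with the Markov--Riesz--Kakutani representation theorem), and therefore attains its supremum at some $\mu_n\in\simplex(\Y)$. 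Since the summands $\min\{0,d_{\vecc{x}}(\mu_n)\}$ are all non-positive but their sum is non-negative, each must vanish, i.e., $d_{\vecc{x}}(\mu_n)\ge 0$ for every $\vecc{x}\in\X^n$, which is exactly \eqref{eq:classical5}.

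The only delicate point is that the supremum on the ``$\sup\min$'' side of the minimax identity must be attained at a genuine probability measure; this hinges jointly on compactness of $\Y$ and continuity of $g(x,\valt)$, and if either were removed the conclusion could fail. Everything else reduces to finite-dimensional bookkeeping thanks to the finiteness of $\X$.
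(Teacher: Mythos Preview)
Your argument is correct. Both proofs reduce the ``only if'' direction to the Kneser--Fan minimax lemma, but you and the paper assign the roles of the two variable sets differently.

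The paper first restricts to $(\supp f)^n$, divides through by $f^{\otimes n}$ to normalize the left-hand side to $1$, and then applies Lemma~\ref{lemma:KF+ minimax} with the \emph{compact} set taken to be $\S(\Y)$ (weak$^*$-topology) and the other convex set taken to be $\S((\supp f)^n)$. The maximizing $\nu_n\in\S(\Y)$ then comes for free from the ``infima become minima'' clause of the lemma, and the case $\vecc{x}\notin(\supp f)^n$ is handled separately as trivial.

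You instead take the compact set to be the finite-dimensional cube $[0,1]^{\X^n}$ and the other convex set to be $\S(\Y)$, so Lemma~\ref{lemma:KF+ minimax} only delivers $\sup_\mu\min_\tau h\ge 0$; you then invoke weak$^*$-compactness of $\S(\Y)$ a second time, together with weak$^*$-continuity of $\mu\mapsto\sum_{\vecc{x}}\min\{0,d_{\vecc{x}}(\mu)\}$, to upgrade the supremum to a maximum. The payoff of your route is that you avoid the $\supp f$ bookkeeping entirely and finish with the clean observation ``a sum of nonpositive terms is nonnegative, hence each vanishes,'' which is slightly more direct than the paper's $\inf_r$ reduction to a single $\vecc{x}$. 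The cost is only that you appeal to compactness of $\S(\Y)$ twice rather than once.
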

\begin{proof}
If \eqref{eq:classical5} holds then for any $\tau\in[0,1]^{\X^n}$, 
\begin{align*}
\sum_{\vecc{x}\in\X^n}f^{\otimes n}(\vecc{x})\tau(\vecc{x})
\le
\sum_{\vecc{x}\in\X^n}\tau(\vecc{x})
\int_{\Y}g_y^{\otimes n}(\vecc{x})\,d\mu_n(y)
=
\int_{\Y}\sum_{\vecc{x}\in\X^n}\tau(\vecc{x})g_y^{\otimes n}(\vecc{x})\,d\mu_n(y)
\le
\sup_{y\in\Y}\sum_{\vecc{x}\in\X^n}\tau(\vecc{x})g_y^{\otimes n}(\vecc{x}),
\end{align*}
proving \eqref{eq:classical4}.

Assume now \eqref{eq:classical4}. Note that this trivially implies that the inequality in 
\eqref{eq:classical4} holds for any non-negative function $\tau$ on $\X^n$. In particular, 
for any probability density function $r$ on $(\supp f)^n$, the choice 
$\tau(\vecc{x}):=\bz r(\vecc{x})/f^{\otimes n}(\vecc{x})\jz\egy_{(\supp f)^n}(\vecc{x})$ yields
\begin{align*}
1=\sum_{\vecc{x}\in\X^n}f^{\otimes n}(\vecc{x})\tau(\vecc{x})
\le 
\sup_{y\in\Y}\sum_{\vecc{x}\in(\supp f)^n}\frac{g_y^{\otimes n}(\vecc{x})}{f^{\otimes n}(\vecc{x})}r(\vecc{x})
=
\sup_{\nu\in\S(\Y)}\int_{\Y}\sum_{\vecc{x}\in(\supp f)^n}\frac{g_y^{\otimes n}(\vecc{x})}{f^{\otimes n}(\vecc{x})}r(\vecc{x})\,d\nu(y).
\end{align*}
Since this holds for every such $r$, we get 
\begin{align*}
1&\le 
\inf_{r\in\S((\supp f)^n)}\sup_{\nu\in\S(\Y)}\int_{\Y}\sum_{\vecc{x}\in(\supp f)^n}\frac{g_y^{\otimes n}(\vecc{x})}{f^{\otimes n}(\vecc{x})}r(\vecc{x})\,d\nu(y)\\
&=
\max_{\nu\in\S(\Y)}\inf_{r\in\S((\supp f)^n)}\int_{\Y}\sum_{\vecc{x}\in(\supp f)^n}\frac{g_y^{\otimes n}(\vecc{x})}{f^{\otimes n}(\vecc{x})}r(\vecc{x})\,d\nu(y)\\
&=
\max_{\nu\in\S(\Y)}\inf_{r\in\S((\supp f)^n)}
\sum_{\vecc{x}\in(\supp f)^n}r(\vecc{x})\int_{\Y}\frac{g_y^{\otimes n}(\vecc{x})}{f^{\otimes n}(\vecc{x})}\,d\nu(y),
\end{align*}
where the second equality follows from Lemma \ref{lemma:KF+ minimax} due to the fact that 
$\S((\supp f)^n)\times \S(\Y)\ni(r,\nu)\mapsto \int_{\Y}\sum_{\vecc{x}\in(\supp f)^n}\frac{g_y^{\otimes n}(\vecc{x})}{f^{\otimes n}(\vecc{x})}r(\vecc{x})\,d\nu(y)$ is affine in both variables
and continuous in $\nu$ on the compact convex set $\S(\Y)$ (using, as always, the weak$^*$-topology on $\S(\Y)$). Thus, there exists a probability measure $\nu_n\in\S(\Y)$ such that 
\begin{align*}
1&\le 
\inf_{r\in\S((\supp f)^n)}\sum_{\vecc{x}\in(\supp f)^n}r(\vecc{x})\int_{\Y}
\frac{g_y^{\otimes n}(\vecc{x})}{f^{\otimes n}(\vecc{x})}\,d\nu_n(y)
=
\min_{\vecc{x}\in(\supp f)^n}\int_{\Y}\frac{g_y^{\otimes n}(\vecc{x})}{f^{\otimes n}(\vecc{x})}\,d\nu_n(y), 
\end{align*}
from which \eqref{eq:classical5} follows for $\vecc{x}\in(\supp f)^n$ by multiplying both sides by 
$f^{\otimes n}(\vecc{x})$, while for $\vecc{x}\in\X^n\setminus(\supp f)^n$, 
\eqref{eq:classical5} holds trivially.
\end{proof}

Theorem \ref{thm:classical} and Proposition \ref{prop:classical} yield immediately the following:

\begin{thm}\label{thm:classical2}
Let $\X$ be a finite set and $\Y$ be a compact Hausdorff topological space, and
let $f:\,\X\to[0,+\infty)$,
$g:\,\X\times\Y\to[0,+\infty)$ be functions such that 
for all $x\in\X$, $g(x,\valt)$ is continuous.
Then \eqref{eq:classical4} holds for every $n\in\bN$ 
if and only if any of the statements
\ref{classical1}--\ref{classical3} in Theorem \ref{thm:classical} hold.
In particular, 
\begin{align*}
\sum_{\vecc{x}\in\X^n}f^{\otimes n}(\vecc{x})\tau(\vecc{x})
\le
\sup_{y\in\Y}\sum_{\vecc{x}\in\X^n}g_y^{\otimes n}(\vecc{x})\tau(\vecc{x}),\ds\ds\ds
\tau\in[0,1]^{\X^n},\,n\in\bN,
\end{align*}
if and only if there exists a probability measure $\nu$ on the Borel $\sigma$-algebra of $\Y$ such that 
\begin{align*}
f(x)\le\exp\bz\int_{\Y}\log g(x,y)\,d\nu(y)\jz, \ds\ds\ds x\in\X.
\end{align*}
\end{thm}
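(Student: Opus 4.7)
The plan is to read off Theorem \ref{thm:classical2} as a bookkeeping consequence of Theorem \ref{thm:classical} and Proposition \ref{prop:classical}, both of which have just been established. First I would observe that the assumptions here (``$\Y$ is finite or $g(x,\valt)$ is strictly positive and continuous for every $x\in\X$'') match verbatim the technical conditions \ref{classical technical cond1} or \ref{classical technical cond2} of Theorem \ref{thm:classical}. Consequently, the full chain of equivalences \ref{classical1}$\iff$\ref{classical1-2}$\iff$\ref{classical2}$\iff$\ref{classical3} is already at our disposal, and it only remains to splice \eqref{eq:classical4} (required to hold for every $n\in\bN$) into this chain.

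I would argue the two required implications separately. For the direction ``\ref{classical2} $\Rightarrow$ \eqref{eq:classical4} for every $n\in\bN$'', the single probability measure $\mu$ witnessing \ref{classical2} can be taken as $\mu_n=\mu$ in \eqref{eq:classical5} for each $n$, and then the easy (and essentially one-line) direction of Proposition \ref{prop:classical} immediately yields \eqref{eq:classical4}. For the converse direction ``\eqref{eq:classical4} for every $n\in\bN$ $\Rightarrow$ \ref{classical3}'', the nontrivial direction of Proposition \ref{prop:classical} produces, for each $n$, a probability measure $\mu_n\in\simplex(\Y)$ satisfying \eqref{eq:classical5}. Since $\mu_n(\Y)=1$, the subexponential growth condition \eqref{eq:subexponential weight} is trivially satisfied, and thus statement \ref{classical3} of Theorem \ref{thm:classical} holds.

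Combining these two implications with the equivalences supplied by Theorem \ref{thm:classical} closes the loop and yields the asserted equivalence between \eqref{eq:classical4} holding for every $n\in\bN$ and any of \ref{classical1}--\ref{classical3}. The concrete reformulation at the end of the statement, expressing the condition through the pointwise inequality $f(x)\le\exp\left(\int_{\Y}\log g(x,y)\,d\nu(y)\right)$ for some $\nu\in\simplex(\Y)$, is nothing but statement \ref{classical1} written out explicitly. I do not anticipate any real obstacle: the entire argument is a transparent packaging of the two preceding results, and the only mild care required is to verify that the technical hypotheses align, which they do verbatim, and that the $\mu_n$ extracted from Proposition \ref{prop:classical} are honest probability measures (not merely positive measures), which is exactly what that proposition provides.
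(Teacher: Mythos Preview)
Your proposal is correct and takes essentially the same approach as the paper, which simply states that the result follows immediately from Theorem \ref{thm:classical} and Proposition \ref{prop:classical}. The only point you might make explicit is that when $\Y$ is finite the continuity hypothesis of Proposition \ref{prop:classical} is automatically satisfied, but this is trivial since a finite Hausdorff space carries the discrete topology.
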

\medskip

The above theorem solves the problem posed in \eqref{eq:holygrail} by identifying the maximal elements of 
$\C(\R)$ in the finite-dimensional commuting case as the set of all weighted geometric means of the elements of $\R$. More precisely, we have the following:

\begin{thm}\label{thm:classical maxCR}
Let $\hil$ be a finite-dimensional Hilbert space, and $\R\subseteq\B(\hil)\p$ be a 
compact set of commuting PSD operators. Then
\begin{align*}
\max\C(\R)=\left\{\exp\int_{\R}\log\omega\,d\nu(\omega):\,\nu\in\S(\R)\right\}.
\end{align*}
\end{thm}
\begin{proof}
Immediate from Theorem \ref{thm:classical2}.
\end{proof}

\section{Equivalence of weak bounds}
\label{sec:weak bounds}

In this section, we discuss the relation between various weak bounds of the types considered, e.g., in 
\eqref{eq:CR def2} and \eqref{eq:mean bound1}.

For the rest of this section, let $\Y$ be a compact Hausdorff topological space, 
$\hil$ be a complex Hilbert space
$\ops:\,\Y\to\B(\hil)\p$ be a function, and $\gm\in\B(\hil)\p$. For every $n\in\bN$, we fix a subset 
$\X_n\subseteq\B(\hil^{\otimes n})\p$ 
satisfying the following:

\begin{assumption}\label{trace assumption}
\begin{enumerate}
\item
For every 
$y\in\Y$ and $X\in\X_n$, $\ops_y^{\otimes n}X$ is trace-class, and so is $\gm^{\otimes n} X$
as well (this is always true when $\hil$ is finite-dimensional).

\item
For any $n,m\in\bN$,
\begin{align*}
\X_n\otimes\X_m:=\{X_1\otimes X_2:\,X_1\in\X_n,\,X_2\in\X_m\}\subseteq\X_{n+m}.
\end{align*}

\item
For every $n\in\bN$ and every $X\in\X_n$, 
$y\mapsto \Tr \ops_y^{\otimes n}X$ is continuous and hence, in particular, it is bounded.
\end{enumerate}
\end{assumption}

\begin{rem}
Note that the boundedness assumption implies that for any $\nu\in\S(\Y)$, 
\begin{align}\label{eq:nc finiteness condition}
\ds\exists\,\int_{\Y}\log \Tr \ops_y^{\otimes n}X\,d\nu(y)\in[-\infty,+\infty),
\ds\ds\ds X\in\X_n,\ds n\in\bN.
\end{align}
\end{rem}

The following are some examples of natural choices of the sets $\X_n$, using the same choice for all $n$:
\begin{itemize}
\item
The set of all rank $1$ projections on $\hil^{\otimes n}$.
\item
The set of all rank $1$ PSD operators on $\hil^{\otimes n}$.
\item
The set of all  non-zero finite rank PSD operators on $\hil^{\otimes n}$, optionally with the additional constraint that their operator norm is at most $1$.
\item
The set of all  non-zero compact PSD operators on $\hil^{\otimes n}$, optionally with the additional constraint that their operator norm is at most $1$.
\item
The set of all density operators on $\hil^{\otimes n}$.
\item
The set of all density operators $\rho$ on $\hil^{\otimes n}$ such that 
$\ker(\rho)=\{0\}$.
\item
The set of all  non-zero tests, i.e., $\X_n=\B(\hil^{\otimes n})_{[0,1]}\setminus\{0\}$; this can be a valid choice if
$A_y$ is trace-class for all $y\in\Y$.
\item
The set of all strictly positive tests, i.e., $\X_n=\B(\hil^{\otimes n})_{(0,1]}$; this can be a valid choice if
$A_y$ is trace-class for all $y\in\Y$.
\end{itemize}

Note that any of the above choices for $(\X_n)_{n\in\bN}$
characterizes the PSD order in the sense that for any $A,B\in\B(\hil^{\otimes n})\p$,
\begin{align}\label{eq:ordering property}
A\le B\ds\iff\ds \Tr AX\le\Tr BX,\ds\ds\ds X\in\X_n,
\end{align}
assuming that $A,B$ are such that $AX$ and $BX$ are trace-class for all $X\in\X_n$.

\begin{rem}
The main application that we have in mind is where $\Y$ is a subset of $\B(\hil)\p$, 
and $\Y\ni y\mapsto A_y:=y$ is just the identity. This map is obviously continuous in the operator norm,
and $\Y\ni y\mapsto \Tr A_y^{\otimes n}X$ is continuous for every $X\in\X_n$, provided that 
$\X_n$ is a subset of the set of PSD trace-class operators and $\Y$ is equipped with the operator norm topology. 
Alternatively, for state discrimination, $\Y$ would be a subset of $\S(\hil)$ equipped with the 
trace-norm topology, and $\Y\ni y\mapsto \Tr A_y^{\otimes n}X$ is continuous for every $X\in\X_n$
in this topology for any of the above listed choices of $\X_n$.
\end{rem}

\begin{rem}
Note that if $A,X\in\B(\hil)\p$ and $AX$ is trace-class then so is $X^{1/2}AX^{1/2}$, and 
\begin{align}\label{eq:trace equality}
\Tr X^{1/2}AX^{1/2}=\Tr A^{1/2}XA^{1/2}=\Tr AX.
\end{align}
Indeed, the first statement follows from the Araki-Lieb-Thirring inequality \cite{Araki} as
$\Tr X^{1/2}AX^{1/2}=\Tr(X^{1/2}AX^{1/2})^{2(1/2)}\le\Tr (XA^2X)^{1/2}=\Tr |AX|$, and 
\eqref{eq:trace equality} follows form the fact that both $AX$ and $X^{1/2}AX^{1/2}$ are compact, 
and they have the same non-zero eigenvalues counted with multiplicities. 
The converse implication is not true, i.e., if $\Tr X^{1/2}AX^{1/2}<+\infty$ then $AX$ need not be trace-class. 
As a simple example, consider $\hil=\oplus_{n\in\bN}\hil_n$, $\hil_n\cong\bC^2$, 
$\psi_n:=(1,0)$, $\phi_n:=(\cos\theta_n,\sin\theta_n)$, $n\in\bN$, where
$\theta_n\in(0,\pi/2)$ is such that $\cos^2(\theta_n)=n^{-3/2}$, and let 
$A:=\oplus_n\pr{\psi_n}$, $X:=\oplus_n\pr{\phi_n}$. Then 
\begin{align*}
\Tr X^{1/2}AX^{1/2}=\sum_{n\in\bN}n^{-3/2}<+\infty=\sum_{n\in\bN}n^{-3/4}=\Tr|AX|.
\end{align*}
We note that Proposition \ref{prop:weak bounds} below holds with the same proof if 
$\Tr \ops_y^{\otimes n}X$ is replaced everywhere with 
$\Tr X^{1/2}\ops_y^{\otimes n}X^{1/2}$ 
and 
$\Tr \gm^{\otimes n}X$ with 
$\Tr X^{1/2}\gm^{\otimes n}X^{1/2}$, including Assumption \ref{trace assumption}.
This technical subtlety is not relevant for the intended applications, where we always have that 
$\X_n$ consists of trace-class operators whenever 
$\{\ops_y\}_{y\in\Y}$ does not, and therefore we do not discuss it further.
\end{rem}

\medskip

\begin{prop}\label{prop:weak bounds}
In the above setting, consider the following statements:
\begin{enumerate}

\item\label{nc2}
\emph{(Multi-copy upper bounds by a weighted geometric mean.)}
There exists a probability measure $\nu\in\S(\Y)$ such that 
\begin{align}\label{eq:nc2}
\Tr \gm^{\otimes n} X\le\exp\bz\int_{\Y}\log \Tr \ops_y^{\otimes n}X\,d\nu(y)\jz, \ds\ds\ds X\in\X_n,\ds n\in\bN.
\end{align}

\item\label{nc1}
\emph{(Multi-copy upper bounds by a sequence of weighted geometric means.)}
For every $n\in\bN$, there exists 
a probability measure $\nu_n\in\S(\Y)$ such that 
\begin{align}\label{eq:nc1}
\Tr \gm^{\otimes n} X\le\exp\bz\int_{\Y}\log \Tr \ops_y^{\otimes n}X\,d\nu_n(y)\jz, \ds\ds\ds X\in\X_n.
\end{align}

\item\label{nc3}
\emph{(Multi-copy upper bounds by a weighted arithmetic mean.)}
There exists a probability measure $\mu\in\S(\Y)$ such that for every $n\in\bN$,
\begin{align}\label{eq:nc3}
\Tr \gm^{\otimes n} X\le\int_{\Y}\Tr\ops_y^{\otimes n}X\,d\mu(y),\ds\ds\ds
X\in\X_n.
\end{align}

\item\label{nc4}
\emph{(Multi-copy upper bounds by a sequence of weighted arithmetic means.)}
There exists a 
sequence of positive measures $(\mu_n)_{n\in\bN}$ on the Borel $\sigma$-algbera $\F$ of $\Y$ such that 
\begin{align}\label{eq:nc subexponential weight}
\limsup_{n\to+\infty}\frac{1}{n}\log\mu_n(\Y)\le 0,
\end{align}
and 
\begin{align}\label{eq:nc4}
\Tr \gm^{\otimes n} X\le\int_{\Y}\Tr\ops_y^{\otimes n}X\,d\mu_n(y),\ds\ds\ds
X\in\X_n,\ds n\in\bN.
\end{align}

\item\label{nc5}
\emph{(Upper bound by the supremum.)}
For every $n\in\bN$,
\begin{align*}
\Tr \gm^{\otimes n}X\le\sup_{y\in\Y}\Tr\ops_y^{\otimes n}X,\ds\ds\ds X\in\X_n.
\end{align*}
\end{enumerate}

Then we have \ref{nc2}$\iff$\ref{nc1}$\iff$\ref{nc3}$\iff$\ref{nc4}$\imp$\ref{nc5}.
\end{prop}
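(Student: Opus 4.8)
The plan is to prove the five unconditional implications first, and then, under the additional hypothesis, to deduce $\ref{nc4}\imp\ref{nc1}$ and $\ref{nc1}\imp\ref{nc2}$ by reduction to Theorem~\ref{thm:classical}. Among the unconditional ones, $\ref{nc2}\imp\ref{nc1}$ (take $\nu_n:=\nu$) and $\ref{nc3}\imp\ref{nc4}$ (take $\mu_n:=\mu$, so $\mu_n(\Y)\equiv 1$) are immediate, while $\ref{nc2}\imp\ref{nc3}$ and $\ref{nc1}\imp\ref{nc4}$ follow from Jensen's inequality for the convex function $\exp$: for any $\nu\in\S(\Y)$ the integral $\int_{\Y}\log\Tr\ops_y^{\otimes n}X\,d\nu(y)$ exists in $[-\infty,+\infty)$ by \eqref{eq:nc finiteness condition}, and $\exp\bz\int_{\Y}\log\Tr\ops_y^{\otimes n}X\,d\nu(y)\jz\le\int_{\Y}\Tr\ops_y^{\otimes n}X\,d\nu(y)$, so one may keep $\mu=\nu$ (resp.\ $\mu_n=\nu_n$).

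For $\ref{nc4}\imp\ref{nc5}$ I would tensorize and take roots. Fix $n\in\bN$ and $X\in\X_n$; iterating $\X_a\otimes\X_b\subseteq\X_{a+b}$ gives $X^{\otimes k}\in\X_{nk}$, and since the trace pairing is multiplicative under tensor products, \eqref{eq:nc4} applied with $nk$ in place of $n$ and $X^{\otimes k}$ in place of $X$ gives
\[
(\Tr\gm^{\otimes n}X)^k\le\int_{\Y}(\Tr\ops_y^{\otimes n}X)^k\,d\mu_{nk}(y)\le\mu_{nk}(\Y)\bz\sup_{y\in\Y}\Tr\ops_y^{\otimes n}X\jz^k.
\]
Taking $k$-th roots and letting $k\to+\infty$, using $\limsup_{k\to+\infty}\mu_{nk}(\Y)^{1/k}\le 1$ (because $\limsup_k\frac1k\log\mu_{nk}(\Y)=n\limsup_k\frac1{nk}\log\mu_{nk}(\Y)\le 0$ by \eqref{eq:nc subexponential weight}) together with the finiteness of $\sup_{y\in\Y}\Tr\ops_y^{\otimes n}X$ (continuity on the compact set $\Y$), one obtains $\Tr\gm^{\otimes n}X\le\sup_{y\in\Y}\Tr\ops_y^{\otimes n}X$, i.e.\ \ref{nc5}.

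Now assume $\Y$ is finite or $\Tr\ops_y^{\otimes n}X>0$ for all $y,n,X$. For $\ref{nc4}\imp\ref{nc1}$ I would apply Theorem~\ref{thm:classical} once for each fixed $N\in\bN$, with $\X:=\X_N$, $\F$ the Borel $\sigma$-algebra of $\Y$, $f(X):=\Tr\gm^{\otimes N}X$ and $g(X,y):=\Tr\ops_y^{\otimes N}X$. Indeed, for a tuple $\vecc X=(X_1,\ldots,X_k)$ with all $X_j\in\X_N$ one has $X_1\otimes\cdots\otimes X_k\in\X_{Nk}$, $f^{\otimes k}(\vecc X)=\Tr\gm^{\otimes Nk}(X_1\otimes\cdots\otimes X_k)$ and $g_y^{\otimes k}(\vecc X)=\Tr\ops_y^{\otimes Nk}(X_1\otimes\cdots\otimes X_k)$, so \eqref{eq:nc4} with $n=Nk$ shows that statement \ref{classical3} of Theorem~\ref{thm:classical} holds with its measures taken to be $\mu_{Nk}$, which satisfy \eqref{eq:subexponential weight} since $\limsup_k\frac1k\log\mu_{Nk}(\Y)=N\limsup_k\frac1{Nk}\log\mu_{Nk}(\Y)\le 0$. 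The technical hypothesis \ref{classical technical cond1} (strict positivity and continuity of $g(X,\valt)$, which holds by assumption) or \ref{classical technical cond2} ($\Y$ finite) is in force, so Theorem~\ref{thm:classical} produces a $\nu_N\in\S(\Y)$ realizing \eqref{eq:nc1} for that $N$; as $N$ was arbitrary, \ref{nc1} follows.

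The step I expect to be the main obstacle is $\ref{nc1}\imp\ref{nc2}$: promoting a \emph{sequence} of weighted geometric means to a \emph{single} one valid for every copy number. Since $\ref{nc1}\imp\ref{nc4}$ is already proved, it suffices to prove $\ref{nc4}\imp\ref{nc2}$, and here the per-$N$ reduction of the previous paragraph does not help, because in a product test $X_1\otimes\cdots\otimes X_k$ with $X_j\in\X_{m_j}$ the relevant copy number $\sum_j m_j$ depends on the chosen operators, not merely on $k$. Instead I would re-run the proof of Theorem~\ref{thm:classical} with the disjoint-union index set $\X:=\bigsqcup_{m\in\bN}\X_m$ and $f(X):=\Tr\gm^{\otimes m}X$, $g(X,y):=\Tr\ops_y^{\otimes m}X$ for $X\in\X_m$ (the case $\supp f=\emptyset$ being trivial). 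The only place statement \ref{classical3} is used in that proof is to derive, for $X_1,\ldots,X_p\in\supp f$ and positive integers $a_1,\ldots,a_p$ with $a:=\sum_j a_j$, the inequality $0\le\sup_{y\in\Y}\sum_{j=1}^p\frac{a_j}{a}\log\bz\Tr\ops_y^{\otimes m_j}X_j/\Tr\gm^{\otimes m_j}X_j\jz$ (the analogue of \eqref{eq:classical proof1}); and this follows directly from \ref{nc4} by the same root argument as above, applied to $W_k:=X_1^{\otimes ka_1}\otimes\cdots\otimes X_p^{\otimes ka_p}\in\X_{kb}$ with $b:=\sum_j a_j m_j$. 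From there the passage from rational to arbitrary weights (Lemma~\ref{lemma:usc}, whose hypothesis \ref{joint usc1} or \ref{joint usc2} is supplied by the extra assumption) and the minimax step (Lemma~\ref{lemma:KF+ minimax}) go through verbatim as in the proof of Theorem~\ref{thm:classical}, yielding a single $\nu\in\S(\Y)$ with $\Tr\gm^{\otimes m}X\le\exp\bz\int_{\Y}\log\Tr\ops_y^{\otimes m}X\,d\nu(y)\jz$ for all $m\in\bN$ and $X\in\X_m$ (trivially true when $\Tr\gm^{\otimes m}X=0$) — that is, \ref{nc2}.
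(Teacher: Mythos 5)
Your proposal is correct, and most of it coincides with the paper's proof: the easy implications, the tensor-power-and-$k$-th-root argument for \ref{nc4}$\imp$\ref{nc5}, and the per-$n$ application of Theorem \ref{thm:classical} (with $\X:=\X_n$, $f(X)=\Tr\gm^{\otimes n}X$, $g(X,y)=\Tr\ops_y^{\otimes n}X$) for \ref{nc4}$\imp$\ref{nc1} are exactly what the paper does. Where you genuinely diverge is the final step of producing a single measure valid for all $n$. The paper proves \ref{nc1}$\imp$\ref{nc2} by a compactness argument: it lets $\S_n\subseteq\S(\Y)$ be the set of measures realizing \eqref{eq:nc1} for that $n$, shows each $\S_n$ is weak$^*$-closed (upper semi-continuity of $\nu\mapsto\int_\Y\log\Tr\ops_y^{\otimes n}X\,d\nu(y)$ from Lemma \ref{lemma:usc}), observes $\S_{nm}\subseteq\S_n$ by testing on $X^{\otimes m}$, so the family $\{\S_n\}$ has the finite intersection property, and concludes $\cap_n\S_n\ne\emptyset$ by compactness of $\S(\Y)$. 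You instead prove \ref{nc4}$\imp$\ref{nc2} directly by re-running the proof of Theorem \ref{thm:classical} on the disjoint union $\bigsqcup_m\X_m$, correctly noting that statement \ref{classical3} enters that proof only through the analogue of \eqref{eq:classical proof1}, which you re-derive from \eqref{eq:nc4} via the product test $X_1^{\otimes ka_1}\otimes\cdots\otimes X_p^{\otimes ka_p}$ and the subexponential-weight root trick; the usc and minimax steps (Lemmas \ref{lemma:usc} and \ref{lemma:KF+ minimax}) then apply verbatim since Theorem \ref{thm:classical} is formulated for an arbitrary index set $\X$. Both routes are valid and rest on the same hypotheses ($\Y$ finite or strict positivity, needed for finiteness/continuity of the log-terms and for the usc step). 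The paper's compactness argument is shorter and fully modular once Theorem \ref{thm:classical} is available as a black box, and it isolates the clean structural fact that the admissible-measure sets are closed and nested under divisibility; your argument avoids that extra step but at the cost of reopening and slightly generalizing the proof of Theorem \ref{thm:classical} rather than citing it, so you must (as you do) verify that the only structural input used there is the key inequality you supply.
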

\begin{proof}
The implications \ref{nc2}$\imp$\ref{nc3}$\imp$\ref{nc4} and \ref{nc2}$\imp$\ref{nc1}$\imp$\ref{nc4} are obvious. To see
\ref{nc4}$\imp$\ref{nc5}, note that \ref{nc4} implies that for all 
$n,k\in\bN$ and $X\in\X_n$,
\begin{align*}
\Tr \gm^{\otimes nk} X^{\otimes k}\le\int_{\Y}\Tr\ops_y^{\otimes nk}X^{\otimes k}\,d\mu_{nk}(y)
\le\bz\sup_{y\in\Y}\Tr\ops_y^{\otimes nk}X^{\otimes k}\jz\mu_{nk}(\Y).
\end{align*}
Taking then the $k$-th root and then the limit $k\to+\infty$ yields
\begin{align*}
\Tr \gm^{\otimes n} X
\le
\bz\limsup_{k\to+\infty}\mu_{nk}(\Y)^{1/nk}\jz^n
\sup_{y\in\Y}\Tr\ops_y^{\otimes n}X
\le
\sup_{y\in\Y}\Tr\ops_y^{\otimes n}X,\ds\ds\ds X\in\X_n,
\end{align*}
where we used that
\begin{align}\label{eq:nc proof1}
\limsup_{k\to+\infty}\frac{1}{k}\log\mu_{nk}(\Y)\le 0,
\end{align}
according to \eqref{eq:nc subexponential weight}. 
This proves \ref{nc5}.
Hence, we are left to prove
\ref{nc4}$\imp$\ref{nc1}$\imp$\ref{nc2}.

Assume first that \ref{nc1} holds, and 
let $\S_n$ be the set of probability measures $\nu$ for which 
$\Tr \gm^{\otimes n} X\le\exp\bz\int_{\Y}\log \Tr \ops_y^{\otimes n}X\,d\nu(y)\jz$ for every 
$X\in\X_n$. Let $(\nu_i)_{i\in\I}$ be a net in $\S_n$ converging to some $\nu\in\S(\Y)$
in the weak$^*$-topology on $\S(\Y)$.
By Lemma \ref{lemma:usc}, for every $X\in\X_n$,
$\S(\Y)\ni\nu\mapsto\int_{\Y}\log \Tr \ops_y^{\otimes n}X\,d\nu(y)$
is upper semi-continuous, and therefore,
\begin{align*}
\Tr \gm^{\otimes n} X
\le
\limsup_i\exp\bz\int_{\Y}\log \Tr \ops_y^{\otimes n}X\,d\nu_i(y)\jz
\le
\exp\bz\int_{\Y}\log \Tr \ops_y^{\otimes n}X\,d\nu(y)\jz,
\end{align*}
i.e., $\nu\in\S_n$. Thus, $\S_n$ is closed.
Now observe that if $\nu\in\S_{nm}$ then for any 
$X\in\X_n$,
\begin{align*}
\bz\Tr \gm^{\otimes n} X\jz^m=
\Tr \gm^{\otimes nm} X^{\otimes m}
\le
\exp\bz\int_{\Y}\log \Tr \ops_y^{\otimes nm}X^{\otimes m}\,d\nu(y)\jz
=
\bz\exp\bz\int_{\Y}\log \Tr \ops_y^{\otimes n}X\,d\nu(y)\jz\jz^m,
\end{align*}
whence $\nu\in\S_n$. Thus, for any $n_1,\ldots,n_k\in\bN$, 
$\S_{n_1}\cap\ldots\cap S_{n_k}\supseteq \S_{n_1\cdot\ldots\cdot n_k}\ne\emptyset$. Hence, by the 
compactness of $\S(\Y)$, $\cap_{n\in\bN}\S_n\ne\emptyset$, and
\eqref{eq:nc2} holds with any $\nu\in\cap_{n\in\bN}\S_n$, proving \ref{nc2}.

Next, assume \ref{nc4}.
For a fixed $n\in\bN$, let 
\begin{align*}
f(X):=\Tr \gm^{\otimes n}X,\ds\ds\ds 
g(X,y):=\Tr \ops_y^{\otimes n}X,\ds\ds\ds X\in\X_n.
\end{align*}
Then $g$ satisfies the conditions of Theorem \ref{thm:classical}, and 
for every $k\in\bN$ and $X_1,\ldots,X_k\in\X_n$,
\begin{align*}
f^{\otimes k}\bz\vecc{X}\jz
=
\prod_{i=1}^k\Tr \gm^{\otimes n}X_i
&=
\Tr \gm^{\otimes nk}(X_1\ootimes X_k)\\
&\le
\int_{\Y}\Tr \ops_y^{\otimes nk}(X_1\ootimes X_k)\,d\mu_{nk}
=
\int_{\Y}\Tr g_y^{\otimes nk}\bz\vecc{X}\jz\,d\mu_{nk},
\end{align*} 
where the inequality is due to \ref{nc4}. The assumption in \ref{nc4} 
implies \eqref{eq:nc proof1}, and therefore \ref{classical3} of Theorem \ref{thm:classical}
holds with the above $f,g$ and $\tilde\mu_k:=\mu_{nk}$, $k\in\bN$. Thus, by 
\ref{classical3}$\imp$\ref{classical1} of Theorem \ref{thm:classical},
there exists a $\nu_n\in\S(\Y)$ such that 
\begin{align*}
\Tr\gm^{\otimes n} X=f(X)\le
\exp\bz\int_{\Y}\log g_y(X)\,d\nu_n(y)\jz
=
\exp\bz\int_{\Y}\log\Tr\ops_y^{\otimes n} X\,d\nu_n(y)\jz,\ds\ds\ds
X\in\X_n,
\end{align*}
proving \ref{nc1}.
\end{proof}

Note that if $\Y\ni y\mapsto \norm{A_y}$ is a bounded measurable function, where the norm may stand for the operator norm or the trace norm, depending on the setting, then for any finite measure $\mu$ on the Borel $\sigma$-algebra of $\Y$, $\int_{\Y}A_y^{\otimes n}\,d\mu(y)$ is a well-defined operator in the respective space, and for any of the above listed choices of $\X_n$, \eqref{eq:ordering property} implies that 
the weak inequality \eqref{eq:nc3}
can be equivalently written in the strong form
\begin{align}\label{eq:strong arithmetic bound}
\gm^{\otimes n}\le\int_{\Y}\ops_y^{\otimes n}\,d\mu(y).
\end{align}
The inequalities in \eqref{eq:nc2} and \eqref{eq:nc1} do not have obvious strong versions
(see, however, Section \ref{sec:2-var}), and neither do the type of bounds in 
\eqref{eq:CR def2}, since a set of PSD operators does not have a supremum in the PSD order except for very special cases. 
On the other hand, a weak inequality as in \eqref{eq:CR def2}
turns out to be equivalent to the strong inequality in \eqref{eq:strong arithmetic bound} when the 
integral in \eqref{eq:strong arithmetic bound} exists, as we show in the following.

\begin{prop}\label{prop:sup bound implies AM bound}
Assume that $\Y\ni y\mapsto \ops_y$ is continuous in the operator norm, 
and let $C\in\B(\hil)\p$.
The following are equivalent:
\begin{enumerate}
\item\label{sup bound implies AM bound1}
For every $X\in\B(\hil)\p$,
\begin{align}\label{eq:weak sup bound-1}
\Tr X^{1/2}\gm X^{1/2}\le\sup_{y\in\Y}\Tr X^{1/2}\ops_y X^{1/2}.
\end{align}
\item\label{sup bound implies AM bound2}
There exists a convex set $\tilde S\subseteq\S(\hil)$ and a $\delta>0$
such that 
\begin{align}\label{eq:specinf}
\inf_{\rho\in\tilde\S}\Tr\rho Z=\inf\spec(Z),\ds\ds\ds Z\in\B(\hil)\p,
\end{align}
and for any $X\in\tilde\X:=\{(\gm+\ep I)^{-1/2}\rho (\gm+\ep I)^{-1/2}:\,\rho\in\tilde\S,\,\ep\in(0,\delta)\}$,
\begin{align}\label{eq:weak sup bound}
\Tr \gm X\le\sup_{y\in\Y}\Tr \ops_y X.
\end{align}
\item\label{sup bound implies AM bound3}
There exists a probability measure $\mu\in\S(\Y)$ such that 
\begin{align}\label{eq:strong arithmetic bound2}
\gm\le\int_{\Y}\ops_y\,d\mu(y).
\end{align}
\end{enumerate}
\end{prop}
\begin{proof}
The implication \ref{sup bound implies AM bound1}$\imp$\ref{sup bound implies AM bound2} 
follows by \eqref{eq:trace equality}
and choosing an arbitrary
$\delta>0$ and $\tilde \S=\S(\hil)$.
The implication \ref{sup bound implies AM bound3}$\imp$\ref{sup bound implies AM bound1} is trivial.
Thus, we prove \ref{sup bound implies AM bound2}$\imp$\ref{sup bound implies AM bound3}.
Note that \eqref{eq:weak sup bound} implies that
\begin{align}\label{eq:weak sup bound1}
\Tr (\gm+\ep I) X\le\sup_{y\in\Y}\Tr (\ops_y+\ep I) X,\ds\ds\ds X\in\tilde\X.
\end{align}
For an arbitrary $\rho\in\tilde\S$ and $\ep\in(0,\delta)$, let 
$X:=(\gm+\ep I)^{-1/2}\rho (\gm+\ep I)^{-1/2}\in\tilde \X$. 
Then
\begin{align*}
1= \Tr\rho
&=\Tr (\gm+\ep I)(\gm+\ep I)^{-1/2}\rho (\gm+\ep I)^{-1/2}\\
&\le
\sup_{y\in\Y}\Tr (\ops_y+\ep I)(\gm+\ep I)^{-1/2}\rho (\gm+\ep I)^{-1/2}\\
&=
\sup_{\mu\in\S(\Y)}\int_{\Y}\Tr (\ops_y+\ep I)(\gm+\ep I)^{-1/2}\rho (\gm+\ep I)^{-1/2}\,d\mu(y),
\end{align*}
where the inequality follows from \eqref{eq:weak sup bound1}.
Since this holds for every $\rho\in\tilde\S$, we get 
\begin{align*}
1
&\le
\inf_{\rho\in\tilde\S}\sup_{\mu\in\S(\Y)}\int_{\Y}\Tr 
(\ops_y+\ep I)(\gm+\ep I)^{-1/2}\rho (\gm+\ep I)^{-1/2}\,d\mu(y)\\
&=
\max_{\mu\in\S(\Y)}\inf_{\rho\in\tilde\S}
\int_{\Y}\Tr (\ops_y+\ep I)(\gm+\ep I)^{-1/2}\rho (\gm+\ep I)^{-1/2}\,d\mu(y),
\end{align*}
where the equality follows from Lemma \ref{lemma:KF+ minimax} due to the facts that 
$\tilde\S$ and $\S(\Y)$ are convex,
$\tilde\S\times\S(\Y)\ni(\rho,\mu)\mapsto\int_{\Y}\Tr (\ops_y+\ep I)(\gm+\ep I)^{-1/2}\rho (\gm+\ep I)^{-1/2}\,d\mu(y)$
is affine in both variables and continuous in $\mu$
with respect to the weak$^*$-topology in $\S(\Y)$, with respect to which $\S(\Y)$ is compact.
Thus, there exists a $\mu\in\S(\Y)$ such that 
\begin{align}\label{eq:sup bound to AM proof1}
1&\le
\inf_{\rho\in\tilde\S}\int_{\Y}\Tr (\ops_y+\ep I)(\gm+\ep I)^{-1/2}\rho (\gm+\ep I)^{-1/2}\,d\mu(y)\\
&=
\inf_{\rho\in\tilde\S}\int_{\Y}\Tr \rho(\gm+\ep I)^{-1/2}
(\ops_y+\ep I)(\gm+\ep I)^{-1/2}\,d\mu(y)\\
&=
\inf_{\rho\in\tilde\S}\Tr\rho\int_{\Y}(\gm+\ep I)^{-1/2}
(\ops_y+\ep I)(\gm+\ep I)^{-1/2}\,d\mu(y),
\end{align}
where the second equality follows from the fact that $\Y\ni y\mapsto (\gm+\ep I)^{-1/2}
(\ops_y+\ep I)(\gm+\ep I)^{-1/2}$ is continuous in the operator norm.
Thus, by the assumption \eqref{eq:specinf},
\begin{align*}
1\le\inf\spec\bz\int_{\Y}(\gm+\ep I)^{-1/2}
(\ops_y+\ep I)(\gm+\ep I)^{-1/2}\,d\mu(y)\jz,
\end{align*}
or equivalently,
\begin{align*}
I\le
\int_{\Y}(\gm+\ep I)^{-1/2}
(\ops_y+\ep I)(\gm+\ep I)^{-1/2}\,d\mu(y)=
(\gm+\ep I)^{-1/2}\bz\int_{\Y}(\ops_y+\ep I)\,d\mu(y)\jz(\gm+\ep I)^{-1/2},
\end{align*}
which in turn yields
\begin{align*}
(\gm+\ep I)\le\int_{\Y}(\ops_y+\ep I)\,d\mu(y)=\int_{\Y}\ops_y\,d\mu(y)+\ep I.
\end{align*}
Since this holds for every $\ep\in(0,\delta)$, we finally get 
\begin{align*}
\gm\le\int_{\Y}\ops_y\,d\mu(y).
\end{align*}
\end{proof}

Proposition \ref{prop:sup bound implies AM bound} allows to complete Proposition \ref{prop:weak bounds}
by proving \ref{nc5}$\imp$\ref{nc2} in the latter under some additional technical conditions.

\begin{thm}\label{thm:sup bounds implies weak geometric bound}
Assume that $\Y\ni y\mapsto \ops_y$ is continuous in operator norm, let $\gm\in\B(\hil)\p$, 
and let $(\X_n)_{n\in\bN}$ be any sequence of sets 
satisfying Assumption \ref{trace assumption},
with the additional requirement that for every $n\in\bN$, \eqref{eq:ordering property} holds. 
Assume that for every $n\in\bN$ 
there exists a convex set $\tilde S_n\subseteq\S(\hil^{\otimes n})$ and a $\delta_n>0$
such that 
\begin{align}\label{eq:specinf2}
\inf_{\rho\in\tilde\S_n}\Tr\rho Z=\inf\spec(Z),\ds\ds\ds Z\in\B(\hil^{\otimes n})\p,
\end{align}
and 
\begin{align*}
\tilde\X_n:=\{(\gm^{\otimes n}+\ep I)^{-1/2}\rho (\gm^{\otimes n}+\ep I)^{-1/2}:\,\rho\in\tilde\S_n,\,\ep\in(0,\delta_n)\}
\subseteq\X_n.
\end{align*}
Then, the following are equivalent:
\begin{enumerate}
\item\label{weak equivalence1}
For every $n\in\bN$,
\begin{align}\label{eq:weak sup bound2}
\Tr X^{1/2}\gm^{\otimes n}X^{1/2}\le\sup_{y\in\Y}\Tr X^{1/2}\ops_y^{\otimes n}X^{1/2},\ds\ds\ds X\in\B(\hil^{\otimes n})\p.
\end{align}

\item\label{weak equivalence2-1}
For every $n\in\bN$,
\begin{align*}
\Tr\gm^{\otimes n}X\le\sup_{y\in\Y}\Tr \ops_y^{\otimes n}X,\ds\ds\ds X\in\X_n.
\end{align*}

\item\label{weak equivalence2}
For every $n\in\bN$,
\begin{align}\label{eq:weak sup bound3}
\Tr \gm^{\otimes n}X\le\sup_{y\in\Y}\Tr \ops_y^{\otimes n}X,\ds\ds\ds X\in\tilde \X_n.
\end{align}

\item\label{weak equivalence3}
For every $n\in\bN$, there exists a probability measure $\mu_n\in\S(\Y)$ such that 
\begin{align*}
\gm^{\otimes n}\le\int_{\Y}\ops_y^{\otimes n}\,d\mu_n(y).
\end{align*}

\item\label{weak equivalence4}
There exists a probability measure $\mu\in\S(\Y)$ such that 
\begin{align*}
\gm^{\otimes n}\le\int_{\Y}\ops_y^{\otimes n}\,d\mu(y),\ds\ds\ds n\in\bN.
\end{align*}

\item\label{weak equivalence5}
There exists a probability measure $\nu\in\S(\Y)$ such that 
\begin{align}\label{eq:weak equivalence5}
\Tr \gm^{\otimes n} X\le\exp\bz\int_{\Y}\log \Tr \ops_y^{\otimes n}X\,d\nu(y)\jz, \ds\ds\ds X\in\X_n,\ds n\in\bN.
\end{align}
\end{enumerate}
\end{thm}
\begin{proof}
The implication \ref{weak equivalence1}$\imp$\ref{weak equivalence2-1}
follows by \eqref{eq:trace equality}, and the implication
\ref{weak equivalence2-1}$\imp$\ref{weak equivalence2} 
is trivial.
The implication \ref{weak equivalence2}$\imp$\ref{weak equivalence3} 
follows from Proposition \ref{prop:sup bound implies AM bound}, and the 
implication \ref{weak equivalence3}$\imp$\ref{weak equivalence1}
is trivial.
The equivalences \ref{weak equivalence3}$\iff$\ref{weak equivalence4}$\iff$\ref{weak equivalence5} 
follow from Proposition \ref{prop:weak bounds}
due to the assumption that $\X_n$ characterizes the PSD order on $\B(\hil^{\otimes n})$ as in
\eqref{eq:ordering property}. 
\end{proof}

Finally, in the setting considered in Section \ref{sec:statedisc}, we get the following:
\begin{cor}\label{cor:sup bounds implies weak geometric bound}
Let $\R\subseteq\S(\hil)$ be compact and $\gm\in\B(\hil)\p$. Then the following are equivalent:
\begin{enumerate}
\item
$\gm\in\C(\R)$, i.e., 
\begin{align}\label{eq:sdics sup bound}
\Tr\gm^{\otimes n}T\le\sup_{\omega\in\R}\Tr\omega^{\otimes n}T,\ds\ds\ds T\in\B(\hil)_{[0,1]}.
\end{align}

\item
There exists a probability measure $\mu$ on the Borel $\sigma$-algebra of $\R$ such that 
\begin{align*}
\gm^{\otimes n}\le\int_{\R}\omega^{\otimes n}\,d\mu(\omega),\ds\ds\ds n\in\bN.
\end{align*}

\item
There exists a probability measure $\nu$ on the Borel $\sigma$-algebra of $\R$ such that 
\begin{align}\label{eq:sdics geometric bound}
\Tr \gm^{\otimes n} X\le\exp\bz\int_{\R}\log \Tr \omega^{\otimes n}X\,d\nu(\omega)\jz, \ds\ds\ds 
X\in\B(\hil^{\otimes n}),\ds n\in\bN.
\end{align}
\end{enumerate}
\end{cor}
\begin{proof}
With $\Y:=\R$ and $\ops_{\omega}:=\omega$, $\omega\in\Y$, we have that  
$\Y$ is compact, and $\Y\ni\omega\mapsto\ops_{\omega}$ is continuous in the trace norm. 
Clearly, \eqref{eq:sdics sup bound} 
is equivalent to $\Tr\gm^{\otimes n}T\le\sup_{\omega\in\R}\Tr\omega^{\otimes n}X$,
$X\in\B(\hil^{\otimes})\p=:\X_n$, $n\in\bN$,
and with $\tilde\X_n:=\tilde\S_n:=\S(\hil^{\otimes n})$, 
the conditions of Theorem \ref{thm:sup bounds implies weak geometric bound} are satisfied. 
Hence, the asserted equivalences follow from 
Theorem \ref{thm:sup bounds implies weak geometric bound}.
\end{proof}

Corollary \ref{cor:sup bounds implies weak geometric bound} gives various equivalent characterizations of 
$\C(\R)$. These might be easier to work with than \eqref{eq:CR def2}, but none of them 
are single-copy. We will give single-copy characterizations of 
$\C(\R)$ and 
$\max\C(\R)$ in the next section when $|\Y|=2$
and the Hilbert space is finite dimensional, based on the characterization in \eqref{eq:sdics geometric bound}.

\section{Maximality of the weighted Kubo-Ando  geometric means for two variables}
\label{sec:2-var}

In this section, we give complete characterizations of 
$\C(\{\ops_y\}_{y\in\Y})$ and 
$\max\C(\{\ops_y\}_{y\in\Y})$ when $\Y$ is a $2$-point set and the Hilbert space $\hil$ is finite-dimensional.
(We may assume without loss of generalization that $\Y=\{1,2\}$).

For this, we will need some additional preparation regarding permutation-invariant operators on tensor product Hilbert spaces. Recall that by the 
Schur-Weyl duality (see, e.g., \cite[Chapter 9]{GoodW}), 
for every $n\in\bN$, $\hil^{\otimes n}$ decomposes as
\begin{align*}
\hil^{\otimes n}=\bigoplus_{\lambda\in Y_{n,d}}\hil_{n,\lambda}\otimes\kil_{n,\lambda},
\end{align*}
where $Y_{n,d}$ is the set of
the Young diagrams up to depth $d:=\dim\hil$,
defined as
\begin{align*}
Y_{n,d}:=\left\{\lambda=(\lambda_1,\lambda_2,\dots,\lambda_d)\,:\,\lambda_i\in\bN\cup\{0\},\,i\in[d],\, 
\lambda_1\ge \lambda_2\ge \dots\ge \lambda_d\ge 0,\;\sum_{i=1}^d \lambda_i=n \right\},
\end{align*}
the dimension of each $\hil_{n,\lambda}$ is polynomially bounded in $n$ as 
\begin{align}\label{eq:SW dimension bound}
\max_{\lambda}\{\dim\hil_{n,\lambda}\} \le (n+1)^{d(d-1)/2},
\end{align}
and every permutation-invariant operator $X$ on $\hil^{\otimes n}$ is of the form
\begin{align*}
X=\bigoplus_{\lambda\in Y_{n,d}}X_{n,\lambda}\otimes I_{\kil_{n,\lambda}}.
\end{align*}
Here, permutation-invariant means that $X$ can be written as $X=\sum_{k=1}^r X_{1,k}\ootimes X_{n,k}$
with $X_{i,k}\in\B(\hil)$, $i\in[k]$, such that for any permutation $\sigma$ of $[n]$,
$X=\sum_{k=1}^r X_{\sigma\inv(1),k}\ootimes X_{\sigma\inv(n),k}$.
We will denote the projection from $\hil^{\otimes n}$ onto $\hil_{n,\lambda}\otimes\kil_{n,\lambda}$ 
by $P_{\lambda}$.

For a fixed $n\in\bN$, let $\type_n([d])$ denote all the probability density functions on $[d]$
that have rational weights with denominator $n$ (so-called \ki{$n$-types}). For a given orthonormal basis
$(e_i)_{i=1}^d$ and a 
$Q\in\type_n([d])$, let $\Pi^n_Q$ be the projection onto
\begin{align*}
\hil^n_Q:=\spann\left\{e_{\vecc{i}}:=\otimes_{k=1}^n e_{i_k}:\,
\type_{\vecc{i}}(j):=\frac{1}{n}|\{k:\,i_k=j\}|=Q(j),\,j\in[d]\right\}.
\end{align*}
It is clear that $\Pi^n_Q=\sum_{\vecc{i}:\,\type_{\vecc{i}}=Q}\pr{e_{\vecc{i}}}$ is permutation-invariant, and therefore
\begin{align*}
\Pi^n_QP_{\lambda}=P_{\lambda}\Pi^n_Q,\ds\ds\ds\lambda\in Y_{n,d}.
\end{align*}
Moreover, if $A\in\B(\hil)\pp$ is a positive definite operator that is diagonal in the given basis and therefore can be written as $A=\sum_{i=1}^d a(i)\pr{e_i}$, we have 
\begin{align}\label{eq:type prob}
\Pi^n_Q A^{\otimes n}=A^{\otimes n}\Pi^n_Q =\sum_{e_{\vecc{i}}\in\hil^n_Q}\diad{A^{\otimes n}e_{\vecc{i}}}{e_{\vecc{i}}}
=\prod_{i\in[d]}a(i)^{nQ(i)}\Pi^n_Q
=e^{n\sum_{i=1}^dQ(i)\log a(i)}\Pi^n_Q.
\end{align}

\begin{thm}\label{thm:2-var char}
Let $\ops_1,\ops_2\in\B(\hil)\p$ be positive semi-definite operators on a finite-dimensional Hilbert space $\hil$, and let $C\in\B(\hil)\p$. For a $t\in[0,1]$, the following holds:
\begin{align}\label{eq:2-var weak geometric bound}
\Tr C^{\otimes n}X\le(\Tr \ops_1^{\otimes n}X)^{t}(\Tr \ops_2^{\otimes n}X)^{1-t},\ds\ds\ds X\in\B(\hil^{\otimes n})\p,\ds n\in\bN,
\end{align}
if and only if
\begin{align}\label{eq:KA bound}
C\le\ops_2\#_{t}\ops_1.
\end{align}
\end{thm}
\begin{proof}
If $t\in\{0,1\}$ then we actually have the stronger statement
\begin{align*}
\Tr CX\le(\Tr \ops_1 X)^{t}(\Tr \ops_2X)^{1-t},\ds X\in\B(\hil^{\otimes n})\p
\ds\ds\ds\iff\ds\ds\ds
C\le\ops_2\#_{t}\ops_1,
\end{align*}
as one can easily see from \eqref{eq:KA cases}. We may therefore assume that $t\in(0,1)$.

The implication \eqref{eq:KA bound}$\imp$\eqref{eq:2-var weak geometric bound}
follows easily, as \eqref{eq:KA bound} implies 
$C^{\otimes n}\le(\ops_2\#_{t}\ops_1)^{\otimes n}=\ops_2^{\otimes n}\#_t\ops_1^{\otimes n}$, whence
\begin{align*}
\Tr C^{\otimes n}X&\le 
\Tr(\ops_2^{\otimes n}\#_{t}\ops_1^{\otimes n})X
\le
(\Tr\ops_2^{\otimes n}X)\#_{t}(\Tr\ops_1^{\otimes n}X)
=
(\Tr \ops_1^{\otimes n}X)^{t}(\Tr \ops_2^{\otimes n}X)^{1-t},
\end{align*}
where the second inequality follows from \eqref{eq:KA posmon} due to the fact that the map
$\E(\valt):=\Tr(\valt)X$ is positive.

To see the converse direction \eqref{eq:2-var weak geometric bound}$\imp$\eqref{eq:KA bound},
assume first that $\ops_1$ and $\ops_2$ are invertible.
Let $\Gm_t:=\ops_2\#_{t}\ops_1$, and for 
any $n\in\bN$ and 
any $X\in\B(\hil^{\otimes n})\p$, let 
\begin{align*}
\tilde X:=(\Gm_t^{-1/2})^{\otimes n}X(\Gm_t^{-1/2})^{\otimes n}.
\end{align*}
By definition,
\begin{align*}
I=\Gm_t^{-1/2}\bz\ops_2\#_{t}\ops_1\jz\Gm_t^{-1/2}
=
\bz\Gm_t^{-1/2}\ops_2\Gm_t^{-1/2}\jz\#_{t}\bz\Gm_t^{-1/2}\ops_1\Gm_t^{-1/2}\jz
=
\tilde\ops_2\#_{t}\tilde\ops_1,
\end{align*}
where the second equality is due to the transformer identity \eqref{eq:transformer}.
Comparing the leftmost and the rightmost terms above shows by 
a straightforward computation that $\tilde\ops_2=\tilde\ops_1^{-\frac{t}{1-t}}$; in particular, 
$\tilde\ops_1$ and $\tilde\ops_2$ commute, and therefore they can be jointly diagonalized in some orthonormal basis 
$(e_i)_{i=1}^d$ as 
\begin{align*}
\tilde\ops_1=\sum_{i=1}^da_1(i)\pr{e_i},\ds\ds\ds
\tilde\ops_2=\sum_{i=1}^da_2(i)\pr{e_i}.
\end{align*}
Moreover,
\begin{align}\label{eq:geometric normalization2}
I=\tilde\ops_2\#_{t}\tilde\ops_1=\tilde\ops_1^t\tilde\ops_2^{1-t}=\sum_{i=1}^da_1(i)^ta_2(i)^{1-t}\pr{e_i}\ds\imp\ds
a_1(i)^ta_2(i)^{1-t}=1,\ds i\in[d].
\end{align}
Note that \eqref{eq:2-var weak geometric bound} implies that for every $n\in\bN$,
\begin{align}\label{eq:2-var weak geometric bound2}
\Tr \tilde C^{\otimes n}X
=
\Tr C^{\otimes n}\tilde X
\le
(\Tr \ops_1^{\otimes n}\tilde X)^{t}(\Tr \ops_2^{\otimes n}\tilde X)^{1-t}
=
(\Tr \tilde\ops_1^{\otimes n}X)^{t}(\Tr \tilde\ops_2^{\otimes n}X)^{1-t},
\ds\ds X\in\B(\hil^{\otimes n})\p.
\end{align}

Let $P_{\lambda}$ and $\Pi_Q^n$ be defined as above, with $\Pi^n_Q$ corresponding to 
an orthonormal basis $(e_i)_{i=1}^d$ that jointly diagonalizes $\tilde A_1$ and $\tilde A_2$.
Then, we have
\begin{align}
\|\tilde C\|_{\infty}^n
 & = \|\tilde C^{\otimes n}\|_{\infty}  
 = \max_{\lambda\in Y_{n,d}}\| \tilde C^{\otimes n}P_\lambda\|_{\infty}  
 \le \max_{\lambda\in Y_{n,d}}\frac{1}{\dim\kil_{n,\lambda}}\Tr\tilde C^{\otimes n} P_\lambda 
 \label{eq:KA proof1}\\  
 & = \max_{\lambda\in Y_{n,d}}\frac{1}{\dim\kil_{n,\lambda}}\sum_{Q\in\distributions[n]([d])}
 \Tr \tilde C^{\otimes n}\Pi^n_QP_\lambda   \label{eq:KA proof2}\\
 & \le \max_{\lambda\in Y_{n,d}}\frac{1}{\dim\kil_{n,\lambda}}
 \sum_{Q\in\distributions[n]([d])}\left(\Tr \tilde \ops_1^{\otimes n}\Pi^n_Q P_\lambda \right)^t
 \left(\Tr \tilde \ops_2^{\otimes n}\Pi^n_Q P_\lambda\right)^{1-t}  \label{eq:KA proof3}\\
 & = \max_{\lambda\in Y_{n,d}}\frac{1}{\dim\kil_{n,\lambda}}\sum_{Q\in\distributions[n]([d])}
 \left(e^{n\sum_iQ(i)\log a_1(i)}\Tr \Pi^n_QP_\lambda\right)^t
 \left(e^{n\sum_iQ(i)\log a_2(i)}\Tr \Pi^n_QP_\lambda\right)^{1-t} \label{eq:KA proof4} \\
 & = \max_{\lambda\in Y_{n,d}}\frac{1}{\dim\kil_{n,\lambda}}\sum_{Q\in\distributions[n]([d])}
 e^{n\sum_iQ(i)\log (a_1(i)^ta_2(i)^{1-t})}\Tr \Pi^n_QP_\lambda \label{eq:KA proof5}\\ 
 & = \max_{\lambda\in Y_{n,d}}\frac{1}{\dim\kil_{n,\lambda}}\sum_{Q\in\distributions[n]([d])}\Tr \Pi^n_QP_\lambda 
 \label{eq:KA proof6} \\
 & = \max_{\lambda\in Y_{n,d}}\frac{1}{\dim\kil_{n,\lambda}}\Tr P_\lambda
   = \max_{\lambda\in Y_{n,d}}\frac{\dim(\hil_{n,\lambda}\otimes \kil_{n,\lambda})}{\dim\kil_{n,\lambda}}
   = \max_{\lambda\in Y_{n,d}}\dim\hil_{n,\lambda}  
 \le (n+1)^{d(d-1)/2},\label{eq:KA proof7}
\end{align}
where the first line is obvious, \eqref{eq:KA proof2} follows form the fact that 
$\Pi^n_Q$ commutes with $P_{\lambda}$ and $\sum_{Q\in\distributions[n]([d])}\Pi^n_Q=I$,
\eqref{eq:KA proof3} follows form \eqref{eq:2-var weak geometric bound2},
\eqref{eq:KA proof4} is due to \eqref{eq:type prob},
\eqref{eq:KA proof5} is obvious,
\eqref{eq:KA proof6} is due to \eqref{eq:geometric normalization2},
the equalities in \eqref{eq:KA proof7} are again obvious, and the inequality follows from 
\eqref{eq:SW dimension bound}.

Thus, $\|\tilde\gm\|_{\infty}\le\sqrt[n]{(n+1)^{d(d-1)/2}}$, and letting $n\to\infty$ gives 
$\|\tilde\gm\|_{\infty}\le 1$. This shows that $\tilde\gm\le I$, which is equivalent to \eqref{eq:KA bound}.

In the general case, note that \eqref{eq:2-var weak geometric bound} implies that for every 
$\ep>0$,
\begin{align*}
\Tr C^{\otimes n}X\le(\Tr (\ops_1+\ep I)^{\otimes n}X)^{t}(\Tr (\ops_2+\ep I)^{\otimes n}X)^{1-t},\ds\ds\ds X\in\B(\hil^{\otimes n})\p,\ds n\in\bN.
\end{align*}
By the above, 
\begin{align*}
C\le(\ops_2+\ep I)\#_t(\ops_1+\ep I),\ds\ds\ds \ep>0,
\end{align*}
and taking $\ep\searrow 0$ gives \eqref{eq:KA bound}.
\end{proof}

Proposition \ref{prop:weak bounds} and Theorems \ref{thm:2-var char} and \ref{thm:sup bounds implies weak geometric bound} yield immediately the following:

\begin{thm}\label{thm:2-var char2}
Let $\ops_1,\ops_2\in\B(\hil)\p$ be positive semi-definite operators on a finite-dimensional Hilbert space $\hil$, and let $C\in\B(\hil)\p$. The following are equivalent:
\begin{enumerate}
\item
For every $n\in\bN$, 
\begin{align}\label{eq:2-point eq1}
\Tr C^{\otimes n}X\le\max\{\Tr \ops_1^{\otimes n}X,\Tr \ops_2^{\otimes n}X\},\ds\ds\ds X\in\B(\hil^{\otimes n})\p.
\end{align}

\item
There exists a $p\in[0,1]$ such that 
\begin{align}\label{eq:2-point eq2}
C^{\otimes n}\le p\ops_1^{\otimes n}+(1-p)\ops_2^{\otimes n},\ds\ds\ds n\in\bN.
\end{align}

\item
For every $n\in\bN$, there exist $p_n,q_n\in[0,+\infty)$ such that 
\begin{align}\label{eq:2-point as AM}
C^{\otimes n}\le p_n\ops_1^{\otimes n}+q_n\ops_2^{\otimes n},\ds\ds\ds n\in\bN,
\end{align}
and
\begin{align*}
\lim_{n\to+\infty}\frac{1}{n}\log(p_n+q_n)=0.
\end{align*}

\item
There exists a $t\in[0,1]$ such that 
\begin{align}\label{eq:2-point eq3}
\Tr C^{\otimes n}X\le(\Tr \ops_1^{\otimes n}X)^{t}(\Tr \ops_2^{\otimes n}X)^{1-t},\ds\ds\ds X\in\B(\hil^{\otimes n})\p,\ds n\in\bN.
\end{align}

\item
There exists a $t\in[0,1]$ such that 
\begin{align}\label{eq:2-point eq4}
C\le\ops_2\#_{t}\ops_1.
\end{align}
\end{enumerate}
\end{thm}

\begin{rem}
By Theorem \ref{thm:2-var char}, a $t\in[0,1]$ satisfies \eqref{eq:2-point eq4} if and only if it satisfies 
\eqref{eq:2-point eq3}, and for any such $t$, \eqref{eq:2-point eq2} holds with $p=t$.
However, for a $p\in[0,1]$ that satisfies \eqref{eq:2-point eq2}, \eqref{eq:2-point eq3}
(equivalently, \eqref{eq:2-point eq4}) might not hold with $t=p$.
\end{rem}

Theorem \ref{thm:2-var char2} yields immediately the following single-copy characterization of 
$\max\C(\{\ops_1,\ops_2\})$:
\begin{thm}\label{thm:2-point maximal operators}
Let $\ops_1,\ops_2\in\B(\hil)\p$ be positive semi-definite operators on a finite-dimensional Hilbert space $\hil$. Then 
\begin{align*}
\max\C(\{\ops_1,\ops_2\})=\{\ops_2\#_t\ops_1:\,t\in[0,1]\}.
\end{align*} 
\end{thm}

Theorem \ref{thm:2-point maximal operators} solves the mathematical problem posed in
\eqref{eq:holygrail} related to composite state discrimination in the case where
$|\R|=2$ and the Hilbert space is finite-dimensional, and gives a new characterization 
of the weighted Kubo-Ando geometric means as the maximal elements in $\C(\R)$.  

This in turn leads to another new characterization  
of the weighted Kubo-Ando geometric means as the maximal block additive operator means that satisfy
the arithmetic-geometric mean inequality. In fact, we have the following even stronger characterization:

\begin{thm}\label{thm:KA maximal}
Let $\Gm$ be a $2$-variable positive operator function
that is block superadditive and 
asymptotically individually AM bounded. 
Then there exists a $t\in[0,1]$ such that $\Gm\le \Gm_{(t,1-t)}^{\KA}$, i.e.,
\begin{align*}
\Gm(\ops_1,\ops_2)\le\Gm_{(t,1-t)}^{\KA}(\ops_1,\ops_2)\ds(=\ops_2\#_t\ops_1),\ds\ds\ds (\ops_1,\ops_2)\in\dom(\Gm).
\end{align*}
If, moreover, $G=G_{\nu}$ is a $\nu$-weighted operator geometric mean, then the above $t$ is unique
as $t=\nu(1)$; in particular, $G_{\nu}\le\Gm_{\nu}^{\KA}$. 
\end{thm}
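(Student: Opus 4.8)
The plan is to establish the bound $\Gm\le\Gm_{(t,1-t)}^{\KA}$ one pair at a time through Theorem~\ref{thm:2-var char2}, and then to combine the resulting parameters into a single $t$ using block superadditivity together with a compactness argument. First I would fix an arbitrary $(\ops_1,\ops_2)\in\dom(\Gm)$, on a finite-dimensional Hilbert space $\hil$, and set $C:=\Gm(\ops_1,\ops_2)$. Tensor supermultiplicativity gives $(\ops_1^{\otimes n},\ops_2^{\otimes n})\in\dom(\Gm)$ with $C^{\otimes n}\le\Gm(\ops_1^{\otimes n},\ops_2^{\otimes n})$, while individual AM boundedness applied to the pair $(\ops_1^{\otimes n},\ops_2^{\otimes n})$ supplies a weight $p_n\in[0,1]$ (a probability measure on the two-point set $\{1,2\}$) with $\Gm(\ops_1^{\otimes n},\ops_2^{\otimes n})\le p_n\ops_1^{\otimes n}+(1-p_n)\ops_2^{\otimes n}$. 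Hence, for every $n\in\bN$ and every $X\in\B(\hil^{\otimes n})\p$,
\[
\Tr C^{\otimes n}X\le p_n\Tr\ops_1^{\otimes n}X+(1-p_n)\Tr\ops_2^{\otimes n}X\le\max\{\Tr\ops_1^{\otimes n}X,\,\Tr\ops_2^{\otimes n}X\},
\]
which is exactly condition (i) of Theorem~\ref{thm:2-var char2}; by the equivalence (i)$\iff$(iv) there, the set $T_{(\ops_1,\ops_2)}:=\{t\in[0,1]:\,C\le\ops_2\#_t\ops_1\}$ is nonempty. (Block superadditivity has not yet been used.)

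The second step is to promote this to a single $t$. I would show that the family $\{T_{(\ops_1,\ops_2)}\}_{(\ops_1,\ops_2)\in\dom(\Gm)}$ consists of closed subsets of $[0,1]$ with the finite intersection property; then by compactness of $[0,1]$ there is a $t\in\bigcap_{(\ops_1,\ops_2)}T_{(\ops_1,\ops_2)}$, and any such $t$ witnesses $\Gm\le\Gm_{(t,1-t)}^{\KA}$. For the finite intersection property, given $(\ops_1^{(i)},\ops_2^{(i)})\in\dom(\Gm)$, $i\in[r]$, block superadditivity puts $(\bigoplus_i\ops_1^{(i)},\bigoplus_i\ops_2^{(i)})$ into $\dom(\Gm)$ with $\bigoplus_i\Gm(\ops_1^{(i)},\ops_2^{(i)})\le\Gm(\bigoplus_i\ops_1^{(i)},\bigoplus_i\ops_2^{(i)})$, and Step~1 produces a $t$ with $\Gm(\bigoplus_i\ops_1^{(i)},\bigoplus_i\ops_2^{(i)})\le(\bigoplus_i\ops_2^{(i)})\#_t(\bigoplus_i\ops_1^{(i)})=\bigoplus_i(\ops_2^{(i)}\#_t\ops_1^{(i)})$, the last equality being the block additivity of the Kubo--Ando mean, immediate from \eqref{eq:oppersp def1}--\eqref{eq:oppersp def2}; comparing summand by summand gives $t\in\bigcap_iT_{(\ops_1^{(i)},\ops_2^{(i)})}$. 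For closedness of $T_{(\ops_1,\ops_2)}$, I would use that $\ops_2\#_s\ops_1=\inf_{\ep>0}(\ops_2+\ep I)\#_s(\ops_1+\ep I)$ is a decreasing limit of maps norm-continuous in $s$ (for positive definite $Y$, $s\mapsto Y^s$ is continuous), so for $t_j\to t_*$ with $C\le\ops_2\#_{t_j}\ops_1$ one passes to a subsequence along which $\ops_2\#_{t_j}\ops_1$ converges (these are all $\le\ops_1+\ops_2$ by the AM--GM inequality), obtaining $C\le\lim_j\ops_2\#_{t_j}\ops_1\le(\ops_2+\ep I)\#_{t_*}(\ops_1+\ep I)$ for every $\ep>0$, hence $C\le\ops_2\#_{t_*}\ops_1$, i.e.\ $t_*\in T_{(\ops_1,\ops_2)}$.

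For the uniqueness clause, assume $\Gm=\Gm_\nu$ and let $t$ be as produced above. Property~(3) in the definition of a $\nu$-weighted operator geometric mean, applied in dimension one (where commutativity, positive definiteness and boundedness over the two-point set hold automatically), gives $(a,b)\in\dom(\Gm_\nu)$ and $\Gm_\nu(a,b)=a^{\nu(1)}b^{1-\nu(1)}$ for all scalars $a,b>0$; since $b\#_t a=a^tb^{1-t}$ in dimension one, the inequality $\Gm_\nu(a,b)\le b\#_t a$ becomes $(a/b)^{\nu(1)-t}\le1$ for all $a,b>0$, which forces $\nu(1)=t$. Thus $\Gm_\nu\le\Gm_\nu^{\KA}$ and the parameter $t$ is unique.

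The step I expect to be the main obstacle is the second one: it is block superadditivity, not any monotonicity of $\Gm$ (which is deliberately not assumed), that makes the sets $T_{(\ops_1,\ops_2)}$ mutually compatible and lets compactness produce a universal $t$; and one must be mildly careful with the closedness of $T_{(\ops_1,\ops_2)}$, since $s\mapsto\ops_2\#_s\ops_1$ can genuinely jump at the endpoints $s\in\{0,1\}$ when $\ops_1,\ops_2$ are not invertible.
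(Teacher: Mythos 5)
Your proposal is correct and follows essentially the same route as the paper: nonemptiness of $T(\ops_1,\ops_2)$ via tensor supermultiplicativity, individual AM boundedness and the equivalences of Theorem \ref{thm:2-var char2}, then block superadditivity plus block additivity of the Kubo--Ando mean to get the finite intersection property of the closed sets $T(\ops_1,\ops_2)$, and compactness of $[0,1]$ to extract a universal $t$. The only (immaterial) differences are that you pass directly from the per-$n$ arithmetic bound to condition (i) of Theorem \ref{thm:2-var char2} instead of routing through Theorem \ref{thm:sup bounds implies weak geometric bound}, you spell out the closedness of $T(\ops_1,\ops_2)$ which the paper leaves as an easy check, and you prove uniqueness by evaluating on positive scalars rather than on two-dimensional commuting density matrices as the paper does.
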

\begin{proof}
By assumption, for any $(\ops_1,\ops_2)\in\dom(\Gm)$
and any $n\in\bN$, there exists a probability measure $\mu_n$ on $\{1,2\}$ such that  
\begin{align*}
\Gm(\ops_1,\ops_2)^{\otimes n}
\le
\mu_n(1)\ops_1^{\otimes n}+\mu_n(2)\ops_2^{\otimes n}.
\end{align*}
Thus, by Theorems \ref{thm:sup bounds implies weak geometric bound} and \ref{thm:2-var char2}, 
there exists a $t\in[0,1]$ such that 
$\Gm(\ops_1,\ops_2)\le\ops_2\#_tA_1=\Gm_{(t,1-t)}^{\KA}(\ops_1,\ops_2)$.
Hence, $T(\ops_1,\ops_2):=\{t\in[0,1]:\,\Gm(\ops_1,\ops_2)\le\ops_2\#_tA_1\}$
is non-empty, and it is easily seen to be closed. 
Indeed, let $(t_n)_{n\in\bN}$ be a sequence in $T(\ops_1,\ops_2)$ converging to some $t\in[0,1]$. If $t\in(0,1)$
then
$\Gm(\ops_1,\ops_2)\le\ops_2\#_{t_n}\ops_1$, $n\in\bN$, implies
$\Gm(\ops_1,\ops_2)\le\ops_2\#_t\ops_1$ according to \eqref{eq:KA cont1}, while if $t\in\{0,1\}$ then 
we have 
\begin{align*}
\Gm(\ops_1,\ops_2)\le\lim_{t\to t_n}\ops_2\#_{t_n}\ops_1\le\ops_2\#_{t}\ops_1,
\end{align*}
according to \eqref{eq:KA cont2}.
Now, for any 
$(A_{i,1},A_{i,2})\in\dom(\Gm)$, $i\in[r]$, 
and any $t\in T\bz\oplus_iA_{i,1},\oplus_iA_{i,2}\jz$,
we have
\begin{align*}
\oplus_{i=1}^r\Gm\bz A_{i,1},A_{i,2}\jz
\le
\Gm\bz\oplus_{i=1}^rA_{i,1},\oplus_{i=1}^rA_{i,2}\jz
\le
\bz\oplus_{i=1}^rA_{i,2}\jz\#_t\bz\oplus_{i=1}^rA_{i,1}\jz
=
\oplus_{i=1}^r\bz A_{i,2}\#_t\oplus_{i=1}^rA_{i,1}\jz,
\end{align*}
where we use the block superadditivity of $\Gm$ in the first inequality.
Thus, $\Gm\bz A_{i,1},A_{i,2}\jz\le \ops_{i,2}\#_t\ops_{i,1}$ for every $i\in[r]$, whence
$t\in\cap_{i=1}^rT(\ops_{i,1},\ops_{i,2})$.
This shows that for any finite collection of $(A_{i,1},A_{i,2})\in\dom(\Gm)$, $i\in[r]$,
$\cap_{i=1}^rT(\ops_{i,1},\ops_{i,2})\ne\emptyset$, and thus, by the compactness of $[0,1]$, we get that 
\begin{align*}
\bigcap_{(\ops_1,\ops_2)\in\dom(\Gm)}T(\ops_1,\ops_2)\ne\emptyset.
\end{align*}
By definition, for any $t$ in the above set,
\begin{align*}
\Gm(A_1,A_2)\le A_2\#_tA_1,\ds\ds\ds (A_1,A_2)\in\dom(\Gm),
\end{align*}
proving the first assertion.

If $G=G_{\nu}$ for some probability distribution $\nu$ on $\{1,2\}$ then 
$G_{\nu}(A_1,A_2)=G_{\nu}^{\KA}(A_1,A_2)$ for any pair of commuting positive definite $A_1,A_2$, and the 
assertion follows from the fact that the classical $2$-variable geometric means are incomparable.
Indeed, if $\ops_1=(p,1-p)$ and $\ops_2=(q,1-q)$ are probability distributions with $0<p<q<1$ then 
$[0,1]\ni t\mapsto p^tq^{1-t}$ is strictly decreasing, while
$[0,1]\ni t\mapsto (1-p)^t(1-q)^{1-t}$ is strictly increasing, whence for any $0\le s<t\le 1$, 
$\ops_1^s\ops_2^{1-s}\nleq\ops_1^t\ops_2^{1-t}$ and 
$\ops_1^s\ops_2^{1-s}\ngeq\ops_1^t\ops_2^{1-t}$.
\end{proof}

As an application of the above result, we show the following.

\begin{prop}\label{prop:no AM-GM}
The following operator geometric means, defined in \eqref{altmean1}--\eqref{altmean4}, 
are not  
asymptotically individually AM bounded:
\begin{enumerate}
\item\label{no AM-GM1}
$G_{t,z}$, $t\in(0,1)$, $z\in(0,+\infty)$;

\item\label{no AM-GM2}
$\wtilde G_{t,z}$, $t\in(0,1)$, $z\in(0,+\infty)$;

\item\label{no AM-GM3}
$\what G_{t,z}$, $t\in(0,1)$, $z\in(1,+\infty]$.
\end{enumerate}
\end{prop}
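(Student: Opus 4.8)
The plan is to run a proof by contradiction built on Theorem~\ref{thm:KA maximal}. First I would record that each of the operator functions $\Gm$ occurring in \ref{no AM-GM1}--\ref{no AM-GM3} is a $(t,1-t)$-weighted operator geometric mean in the sense of Section~\ref{sec:operator geometric means}: on a commuting family of positive definite operators each of $G_{t,z}$, $\wtilde G_{t,z}$, $\what G_{t,z}$ and $\what G_{t,+\infty}$ evaluates to $\ops_1^t\ops_2^{1-t}$, so the underlying probability measure is $\nu=(t,1-t)$. Moreover each of these maps is tensor multiplicative (hence tensor supermultiplicative) and block additive (hence block superadditive), which is immediate from the defining formulas \eqref{altmean1}--\eqref{altmean4}, since forming powers, products, tensor powers and direct sums interchange in the required way, and $\log$, $\exp$ respect block-diagonal structure. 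Consequently, if some such $\Gm$ were individually AM bounded, then Theorem~\ref{thm:KA maximal} would apply and yield $\Gm\le\Gm_{(t,1-t)}^{\KA}$, i.e. $\Gm(\ops_1,\ops_2)\le\ops_2\#_t\ops_1$ for all $(\ops_1,\ops_2)\in\dom(\Gm)$, in particular for all positive definite $\ops_1,\ops_2\in\B(\bC^2)$.

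The second step shows that such a domination is impossible. The key point is a determinant computation:
\[
\det\Gm(\ops_1,\ops_2)=(\det\ops_1)^t(\det\ops_2)^{1-t}=\det(\ops_2\#_t\ops_1)
\]
for every positive definite $\ops_1,\ops_2$ and every $\Gm$ in the list, using multiplicativity of the determinant (for $G_{t,z},\wtilde G_{t,z},\what G_{t,z}$), the identity $\det(B\#_sA)=(\det A)^s(\det B)^{1-s}$, and $\det e^{M}=e^{\Tr M}$ (for $\what G_{t,+\infty}$). I would combine this with the elementary fact that if $0<M\le N$ in $\B(\bC^2)$ and $\det M=\det N$, then $M=N$: indeed $1=\det N/\det M=\det\big(I+M^{-1/2}(N-M)M^{-1/2}\big)=(1+p_1)(1+p_2)$, where $p_1,p_2\ge0$ are the eigenvalues of the positive semidefinite operator $M^{-1/2}(N-M)M^{-1/2}$, forcing $p_1=p_2=0$. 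Hence $\Gm\le\Gm_{(t,1-t)}^{\KA}$ on positive definite pairs in $\B(\bC^2)$ would force $\Gm=\Gm_{(t,1-t)}^{\KA}$ there.

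It therefore remains to observe that each of the means in the list differs from the weighted Kubo-Ando mean with the same weight already on $2\times2$ positive definite matrices -- a fact recorded in Section~\ref{sec:operator geometric means}, the only coincidence in the family being $\what G_{t,1}$, which is excluded here. Concretely, one can expand both $\Gm(\ops_1,\ops_2)$ and $\ops_2\#_t\ops_1$ to first order around a commuting pair, e.g. $\ops_1=\diag(a,1/a)$ and $\ops_2=I+\ep X$ with $X$ off-diagonal, and check that the coefficients of $\ep$ disagree whenever $t\in(0,1)$ and (for $\what G$) $z\ne1$. This contradicts the conclusion of the previous paragraph, so no $\Gm$ in the list is individually AM bounded; since (full) AM boundedness, and a fortiori the AM-GM inequality, implies individual AM boundedness, the claims \ref{no AM-GM1}--\ref{no AM-GM3} follow. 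The one genuinely delicate point is making sure the disagreement with the Kubo-Ando mean is visible already in dimension two, which is exactly what the determinant argument needs; if one wishes to avoid this, one can instead exhibit, in any fixed dimension where the two means differ, a positive semidefinite test $X$ with $\Tr\big(\Gm(\ops_1,\ops_2)X\big)>\Tr\big((\ops_2\#_t\ops_1)X\big)$, at the cost of a more explicit computation.
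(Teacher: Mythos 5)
Your reduction is sound up to the last step, and part of it is genuinely different from the paper: like the paper, you invoke block additivity, tensor multiplicativity and the weight-identification in Theorem \ref{thm:KA maximal} to reduce the claim to showing that none of the means is dominated by $\Gm^{\KA}_{(t,1-t)}$; but your determinant rigidity step ($\det\Gm(\ops_1,\ops_2)=(\det\ops_1)^t(\det\ops_2)^{1-t}=\det(\ops_2\#_t\ops_1)$ together with ``$0<M\le N$ and $\det M=\det N$ imply $M=N$'') is a correct and elegant way to reduce ``no domination'' further to mere ``non-equality with the Kubo--Ando mean''; note it works in any finite dimension, so your worry about being confined to $2\times 2$ matrices is unnecessary. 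The genuine gap is in how you certify that non-equality. You do not actually carry out the asserted check, and the concrete recipe you give --- expand at $\ops_1=\diag(a,1/a)$, $\ops_2=I+\ep X$ with $X$ off-diagonal and ``check that the coefficients of $\ep$ disagree whenever $t\in(0,1)$'' --- is false for $G_{t,z}$ at $t=z=1/2$. There $G_{1/2,1/2}(\ops_1,\ops_2)=(\ops_1^{1/2}\ops_2\ops_1^{1/2})^{1/2}$, and in the eigenbasis of $\ops_1$ (eigenvalues $a_i$) the first-order kernel in your direction is $\frac{a_i^{1/2}-a_j^{1/2}}{a_i-a_j}\,(a_ia_j)^{1/2}=\frac{\sqrt{a_ia_j}}{\sqrt{a_i}+\sqrt{a_j}}$, which coincides exactly with the Kubo--Ando kernel $\frac{a_ia_j^{t}-a_ja_i^{t}}{a_i-a_j}$ at $t=1/2$; so no choice of $a$ makes the $\ep$-coefficients differ, and the two means only separate at second order (or upon perturbing the first argument, where the kernels $\frac{a_i+a_j}{2(\sqrt{a_i}+\sqrt{a_j})}$ and $\frac{\sqrt{a_ia_j}}{\sqrt{a_i}+\sqrt{a_j}}$ do differ). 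Citing Section \ref{sec:operator geometric means} for the non-equality is also not available, since that assertion is stated there without proof and this proposition is where it gets substantiated.

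The paper closes exactly this step in Lemma \ref{lemma:no KA bound}, uniformly in the parameters and without needing the determinant argument, by exhibiting explicit pairs where domination fails outright: for $G_{t,z}$ and $\wtilde G_{t,z}$ two rank-one projections, for which $\ops_2\#_t\ops_1=0$ (the Kubo--Ando mean lives on the intersection of the supports) while the other means are nonzero; for $\what G_{t,z}$ with $z\in(1,+\infty]$ the pair $\ops_1=\pr{\psi}$, $\ops_2$ diagonal, where a closed-form evaluation plus strict convexity of $x\mapsto x^{1/p}$ gives $\what G_{t,z}(\ops_1,\ops_2)\gneq\ops_2\#_t\ops_1$; a short regularization argument transfers these PSD examples to the positive definite setting. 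To repair your version, either import such explicit evaluations to establish non-equality for every $(t,z)$ in the stated ranges, or redo the perturbative check in directions (e.g.\ perturbing $\ops_1$) where the first-order terms genuinely separate; your fallback suggestion of producing a PSD test $X$ with $\Tr\Gm(\ops_1,\ops_2)X>\Tr(\ops_2\#_t\ops_1)X$ is precisely the work that remains to be done.
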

\begin{proof}
It is clear from their definitions that all the above operator geometric means are block additive.
Hence, by Theorem \ref{thm:KA maximal}, it is sufficient to show that for any fixed $t\in(0,1)$, 
none of them is upper bounded by $\Gm_{(t,1-t)}^{\KA}$, which we show in Lemma \ref{lemma:no KA bound} below.
\end{proof}

\begin{lemma}\label{lemma:no KA bound}
For a fixed $t\in(0,1)$, let $\Gm'_{t}$ be any of the operator geometric means in Proposition
\ref{prop:no AM-GM}. Then there exist positive definite operators 
$\ops_1,\ops_2\in\B(\bC^2)\pp$ such that 
\begin{align}\label{eq:no KA bounds}
\Gm'_t(\ops_1,\ops_2)\nleq \ops_2\#_t\ops_2.
\end{align} 
\end{lemma}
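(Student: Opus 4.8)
I want to produce, for each of the three families, explicit $2\times 2$ positive definite matrices $\ops_1,\ops_2$ witnessing $\Gm'_t(\ops_1,\ops_2)\nleq \ops_2\#_t\ops_1$. The cleanest way to expose a violation of an operator inequality $M \le N$ is to find a unit vector $\psi$ with $\langle \psi, M\psi\rangle > \langle\psi, N\psi\rangle$, or equivalently to choose coordinates in which $N$ is the identity and then exhibit an eigenvalue of $M$ that exceeds $1$. So the first reduction I would make, using the transformer identity \eqref{eq:transformer}: conjugating both $\ops_1,\ops_2$ by $(\ops_2\#_t\ops_1)^{-1/2}$ replaces $\ops_2\#_t\ops_1$ by $I$, at the cost of replacing $\ops_1,\ops_2$ by a commuting pair $\tilde\ops_1,\tilde\ops_2$ satisfying $\tilde\ops_1^{\,t}\tilde\ops_2^{\,1-t}=I$ (this is exactly the normalization \eqref{eq:geometric normalization2} appearing in the proof of Theorem \ref{thm:2-var char}). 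However, all three means $G_{t,z},\wtilde G_{t,z},\what G_{t,z}$ are built from perspective-type formulas that are \emph{not} covariant under congruence by an arbitrary positive $X$ (only Kubo--Ando is), so conjugation does not simply pass through $\Gm'_t$. Hence the better route is to work directly with a well-chosen noncommuting pair and compute.

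**Choice of examples.** Since the three means all reduce to $\ops_2\#_t\ops_1$ when $\ops_1,\ops_2$ commute, the witnesses must be genuinely noncommuting, and the mechanism of failure is the same in each case: a Lie-product / Golden–Thompson type discrepancy between the "sandwiched" exponentials and the operator-monotone Kubo--Ando power. Concretely I would take $\ops_1 = e^{sH}$, $\ops_2 = e^{sK}$ with $H,K$ two fixed non-commuting self-adjoint $2\times 2$ matrices (e.g.\ $H$ diagonal, $K$ a rotation of a diagonal matrix), and study the behaviour as $s\to 0$. Expanding to second order in $s$: $\ops_2\#_t\ops_1 = e^{s(tH+(1-t)K)} + O(s^3)$ with the second-order term matching that of the log-linear interpolation, whereas each of $G_{t,z},\wtilde G_{t,z},\what G_{t,z}$ has a second-order term of the form $e^{s(tH+(1-t)K)}$ \emph{plus} a correction proportional to $t(1-t)\,z\,[H,K]$-type commutator bracket (the precise bracket differs among the three, but is nonzero and has the same $s^2$ scaling). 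Choosing $H,K$ so that this correction is not $\le 0$ as a matrix — which is generic, since a nonzero correction that is traceless and self-adjoint cannot be negative semidefinite — gives $\Gm'_t(\ops_1,\ops_2)\nleq\ops_2\#_t\ops_1$ for all sufficiently small $s>0$. For $z=+\infty$ in case \eqref{no AM-GM3}, $\what G_{t,+\infty}$ is the log-Euclidean mean $e^{t\log\ops_1+(1-t)\log\ops_2}$, and one compares directly with $\ops_2\#_t\ops_1$ using the known strict BCH-type inequality; again a small-$s$ expansion with a generic non-commuting pair does the job, the relevant correction being the standard $\tfrac12 t(1-t)[\log\ops_2,\log\ops_1]^{(2)}$ term.

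**Executing the computation.** For $G_{t,z}$ I would set $A_1=\ops_1^{1/(2z)}\cdots$, observing $G_{t,z}(\ops_1,\ops_2)=\bigl(\ops_1^{t/(2z)}\ops_2^{(1-t)/z}\ops_1^{t/(2z)}\bigr)^z$; substituting $\ops_i=e^{sH_i}$ and applying the Baker–Campbell–Hausdorff formula to the bracketed product, then raising to the $z$-th power, the $s^0$ term is $I$, the $s^1$ term is $s(tH+(1-t)K)$, and the $s^2$ term is $\tfrac{s^2}{2}(tH+(1-t)K)^2$ plus a commutator correction; since $\ops_2\#_t\ops_1$ has $s^2$-term exactly $\tfrac{s^2}{2}(tH+(1-t)K)^2$ (it is congruence-covariant and agrees to second order with the geodesic), the difference $\Gm'_t(\ops_1,\ops_2)-\ops_2\#_t\ops_1$ equals $s^2\cdot(\text{commutator term})+O(s^3)$. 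I then just need one explicit pair $H,K$ (say $H=\mathrm{diag}(1,-1)$ and $K$ a $45^\circ$ rotation of $\mathrm{diag}(1,-1)$) making that commutator term have a strictly positive eigenvalue; this is a finite, mechanical $2\times2$ verification. The computations for $\wtilde G_{t,z}$ and $\what G_{t,z}$ are entirely parallel, differing only in the exact coefficient of the commutator bracket, all of which are nonzero for $t\in(0,1)$ and $z>1$ (and for $z=+\infty$ in the log-Euclidean case).

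**Main obstacle.** The conceptual content is routine; the real work — and the place where one must be careful — is bookkeeping the second-order BCH expansions cleanly and then exhibiting a single pair $(H,K)$ that works \emph{simultaneously for all $t\in(0,1)$ and all $z$ in the stated range}, rather than producing $t$- or $z$-dependent examples. I expect the cleanest presentation is to fix $H,K$ once and for all, write the correction term as an explicit matrix $R(t,z;s)=s^2\,t(1-t)\,\varphi(z)\,[\text{fixed traceless self-adjoint }2\times2] + O(s^3)$ with $\varphi(z)\neq 0$ on the relevant range, note that a nonzero traceless self-adjoint $2\times2$ matrix always has a strictly positive eigenvalue, and conclude that $R(t,z;s)\not\le 0$ for all small $s>0$. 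If a genuinely uniform-in-$z$ choice proves awkward (e.g.\ if some $\varphi(z)$ changes sign), the fallback is to split into a few subcases by the sign of the coefficient and flip the roles of $\ops_1,\ops_2$ accordingly; either way the argument stays at the level of explicit $2\times2$ matrices and a Taylor expansion.
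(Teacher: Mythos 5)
There is a genuine gap: your central computational claim is false. With $\ops_1=e^{sH}$, $\ops_2=e^{sK}$, \emph{all} of the means in question agree with $\exp\bz s(tH+(1-t)K)\jz$ up to $O(s^3)$, not just the Kubo--Ando mean. Indeed, each of $G_{t,z}$, $\wtilde G_{t,z}$, $\what G_{t,z}$ and $\ops_2\#_t\ops_1$ is built from symmetric products $e^{X}e^{Y}e^{X}$ with $X,Y=O(s)$, and the symmetric Baker--Campbell--Hausdorff expansion has no second-order commutator term (the first corrections are double commutators, hence $O(s^3)$); raising to the power $z$ or $t$ does not change this. So the ``second-order correction proportional to $t(1-t)\,z\,[H,K]$'' that your whole argument rests on is identically zero, and the proposed expansion proves nothing. (A further sign something is off: $[H,K]$ is skew-Hermitian, so it cannot be the leading self-adjoint correction; the actual discrepancies are combinations of $[H,[H,K]]$ and $[K,[K,H]]$ appearing only at third order, with coefficients that would still have to be computed and shown not to give a negative-semidefinite matrix uniformly in $t$ and $z$ --- including near $z=1$ for $\what G_{t,z}$, where the difference degenerates.) Your remark for $z=+\infty$ has the same problem: the log-Euclidean mean \emph{is} $e^{s(tH+(1-t)K)}$ on this family, and its deviation from $\ops_2\#_t\ops_1$ is again $O(s^3)$.

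The paper avoids perturbation theory altogether. For $G_{t,z}$ and $\wtilde G_{t,z}$ it takes $\ops_1,\ops_2$ to be rank-one projections onto different lines: since $(\ops_2\#_t\ops_1)^0=\ops_2^0\wedge\ops_1^0$, the Kubo--Ando mean vanishes while $G_{t,z}(\ops_1,\ops_2)$ and $\wtilde G_{t,z}(\ops_1,\ops_2)$ do not (these degenerate choices are legitimate because all the means are limits of their $\ep$-regularizations). For $\what G_{t,z}$ with $z\in(1,+\infty]$ it takes $\ops_1=\pr{\psi}$ with $\psi=(1,1)/\sqrt 2$ and $\ops_2$ diagonal, so that both $\what G_{t,z}(\ops_1,\ops_2)$ and $\ops_2\#_t\ops_1$ are scalar multiples of $\pr\psi$, and the comparison reduces to a strict scalar power-mean inequality. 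If you want to salvage a perturbative route you would have to redo the computation at third order and control the double-commutator coefficients for all admissible $t,z$ --- considerably more work than the explicit two-by-two examples, and not what your sketch establishes.
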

\begin{proof}
First, note that $\Gm'_t(\ops_1,\ops_2)=\lim_{\ep\searrow 0}\Gm'_t(\ops_1+\ep I,\ops_2+\ep I)$ and 
similarly, $\ops_2\#_t\ops_1=\lim_{\ep\searrow 0}(\ops_2+\ep I)\#_t(\ops_1+\ep I)$, 
whence it is sufficient to find $\ops_1,\ops_2\in\B(\bC^2)\p$ PSD (not necessarily invertible) operators such that 
\eqref{eq:no KA bounds} holds.

It is well-known that $(\ops_2\#_t\ops_1)^0=\ops_2^0\wedge\ops_1^0$ 
\cite[Proposition 3.7]{KA}. Thus, in particular, if $\ops_1=\pr{\psi}$, $\ops_2=\pr{\phi}$ are rank $1$
projections, then for any $t\in(0,1)$, $\ops_2\#_t\ops_1=0$, while it is easy to see that 
$G_{t,z}(\ops_1,\ops_2)\ne 0$, $\wtilde G_{t,z}(\ops_1,\ops_2)\ne 0$, $z\in(0,+\infty)$, 
proving the assertion for these two families.

Next, let 
\begin{align*}
\ops_1:=\pr{\psi}\ds\text{with}\ds \psi:=\frac{1}{\sqrt{2}}\begin{bmatrix} 1 \\ 1\end{bmatrix},\ds\ds\ds
\ops_2:=\begin{bmatrix}a & 0 \\ 0 & b\end{bmatrix}\ds\text{with}\ds 0<a<b,
\end{align*}
and let $p:=1/z$. Then a straightforward computation yields
\begin{align*}
\what G_{t,z}(\ops_1,\ops_2)=
\bz\ops_2^{\frac{p}{2}}\bz\ops_2^{-\frac{p}{2}}\pr{\psi}\ops_2^{-\frac{p}{2}}\jz^t\ops_2^{\frac{p}{2}}\jz^{\frac{1}{p}}
=
\norm{\ops_2^{-\frac{p}{2}}\psi}^{\frac{2(t-1)}{p}}\pr{\psi}
=\bz\frac{a^{-p}+b^{-p}}{2}\jz^{\frac{t-1}{p}}\pr{\psi}.
\end{align*}
Thus, by the strict convexity of $\id_{(0,+\infty)}^{\frac{1}{p}}$, we have 
\begin{align*}
\what G_{t,z}(\ops_1,\ops_2)
=
\bz\frac{a^{-p}+b^{-p}}{2}\jz^{\frac{t-1}{p}}\pr{\psi}
\lneq
\bz\frac{a\inv+b\inv}{2}\jz^{t-1}\pr{\psi}
=
\what G_{t,1}(\ops_1,\ops_2)
=
\ops_2\#_t\ops_1,
\end{align*}
proving the assertion for $z\in(1,+\infty)$. 
The case $z=+\infty$ follows 
as $\what G_{t,+\infty}(\ops_1,\ops_2)=\sqrt{ab}^{1-t}\pr{\psi}\lneq\ops_2\#_t\ops_1$,
due to the strict inequality between the geometric and the harmonic means for unequal arguments.
\end{proof}

\section{Exact characterization of the weighted Kubo-Ando geometric means}

We can actually get an even stronger characterization of the Kubo-Ando means
than the one in Theorem \ref{thm:KA maximal}, under some mild additional conditions. 
For the following, we introduce the notation
\begin{align*}
\dom(\Gm)\pp:=\left\{(\ops_1,\ops_2)\in\dom(\Gm):\,\ops_1>0,\,\ops_2>0\right\}.
\end{align*}
We will say that a positive operator function $\Gm$ is \ki{regular} if for any 
$(\ops_1,\ops_2)\in\dom(\Gm)$, there exists a $\delta>0$ such that for every 
$\ep\in(0,\delta)$, $(\ops_1+\ep I,\ops_2+\ep I)\in\dom(\Gm)$, and 
\begin{align*}
\Gm(\ops_1,\ops_2)=\lim_{\ep\searrow 0}\Gm(\ops_1+\ep I,\ops_2+\ep I).
\end{align*}

For a positive definite $A\in\B(\bC^d)\pp$, let us introduce the shorthand notation
\begin{align*}
A\tinv:=(A\inv)\trans=(A\trans)\inv
\end{align*}
for the transpose of the inverse of $A$.
Here, the transposition is taken in an arbitrary orthonormal basis (ONB)
$(e_i)_{i\in[d]}$ of $\bC^d$, defined as 
$X\trans:=\sum_{i,j\in[d]}\inner{e_i}{Xe_j}\diad{e_j}{e_i}$, $X\in\B(\bC^d)$. 
Note that transpositions corresponding to different ONBs are generally different,
but are connected by a unitary conjugation. In particular, the condition in \ref{converse bound2} of 
Proposition \ref{prop:KA char} below holds for a transposition in some ONB if and only if it holds
for every transposition, due to the unitary covariance of $\Gm$.

\begin{prop}\label{prop:KA char}
Let $\Gm$ be a $2$-variable positive operator function that is 
block superadditive and asymptotically individually AM bounded. 
Assume, moreover, that for any 
$d\in\bN$ and any 
positive definite $(\ops_1,\ops_2)\in\dom_d(\Gm)\pp$,
$\Gm(\ops_1,\ops_2)$ is also positive definite, and 
one of the following hold:
\begin{enumerate}
\item\label{converse bound1}
$(\ops_1\inv,\ops_2\inv)\in\dom(\Gm)$,  and
\begin{align}\label{eq:inverse invariance}
\Gm(\ops_1\inv,\ops_2\inv)=\Gm(\ops_1,\ops_2)\inv. 
\end{align}
\item\label{converse bound2}
$(\ops_1\tinv,\ops_2\tinv)\in\dom(\Gm)$,  and
\begin{align}\label{eq:inverse trans invariance}
\Gm(\ops_1\tinv,\ops_2\tinv)=\Gm(\ops_1,\ops_2)\tinv,
\end{align}
for a transposition taken in some orthonormal basis of $\bC^d$.
\end{enumerate}
Then there exists a $t\in[0,1]$ such that for any two positive definite $(\ops_1,\ops_2)\in\dom(\Gm)\pp$, 
\begin{align}\label{eq:KA equality1}
\Gm(\ops_1,\ops_2)=\ops_2\#_t\ops_1.
\end{align}
If, moreover, $\Gm$ is regular, then \eqref{eq:KA equality1} holds for every 
$(\ops_1,\ops_2)\in\dom(\Gm)$.
\end{prop}
\begin{proof}
By Theorem \ref{thm:KA maximal}, there exists a $t\in[0,1]$ such that for any 
$(\ops_1,\ops_2)\in\dom(\Gm)$, $\Gm(\ops_1,\ops_2)\le\ops_2\#_t\ops_1$. 
Now, if $(\ops_1,\ops_2)\in\dom(\Gm)\pp$ and \ref{converse bound2} holds then we also have
\begin{align*}
\Gm(\ops_1,\ops_2)=\Gm(\ops_1\tinv,\ops_2\tinv)\tinv\ge
\bz\ops_2\tinv\#_t\ops_1\tinv\jz\tinv
=
\ops_2\#_t\ops_1,
\end{align*}
completing the proof of \eqref{eq:KA equality1}. 
The proof under condition \ref{converse bound1} goes the same way. 
The assertion about a regular $\Gm$ then follows immediately.
\end{proof}

Note that Proposition \ref{prop:KA char} provides a slightly different proof of Proposition \ref{prop:no AM-GM},
as it is very easy to show that for a fixed $t\in(0,1)$ and $z\in(0,+\infty)$, all of the operator geometric means
$\Gm_{t,z}$, 
$\wtilde\Gm_{t,z}$, 
$\what\Gm_{t,z}$ are block additive and satisfy both invariance conditions
\eqref{eq:inverse invariance} and \eqref{eq:inverse trans invariance}, but none of them 
are equal to $\Gm_{(t,1-t)}^{\KA}$, with the exception of $\what\Gm_{t,1}$. 
Therefore, they are not asymptotically individually AM bounded.
In particular, we can complete 
Proposition \ref{prop:no AM-GM} as follows:

\begin{prop}
For any $t\in(0,1)$ and $z\in(0,1)$, 
$\what\Gm_{t,z}$ is not asymptotically individually AM bounded.
\end{prop}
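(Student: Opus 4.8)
The plan is to invoke Corollary \ref{cor:KA char} contrapositively. I would first verify that $\what\Gm_{t,z}$ satisfies all the hypotheses of that corollary other than (possibly) individual AM boundedness. That it is a $(t,1-t)$-weighted operator geometric mean is immediate, since for commuting positive definite $A_1,A_2$ we have $A_2^{1/z}\#_tA_1^{1/z}=A_2^{(1-t)/z}A_1^{t/z}$, whose $z$-th power is $e^{t\log A_1+(1-t)\log A_2}$. Tensor multiplicativity and block additivity are clear from \eqref{altmean3} and were already used in the proof of Proposition \ref{prop:no AM-GM}. The reduction property holds because compressing a positive definite operator by a non-zero projection $P$ again gives a positive definite operator on $\ran P$, and $\#_t$ respects such compressions. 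Scalar tensor multiplicativity follows from the elementary scaling rule $(a_2B)\#_t(a_1A)=a_1^ta_2^{1-t}(B\#_tA)$, valid for scalars $a_1,a_2>0$: it yields $\what\Gm_{t,z}(a_1A_1,a_2A_2)=a_1^ta_2^{1-t}\what\Gm_{t,z}(A_1,A_2)=\what\Gm_{t,z}(a_1,a_2)\,\what\Gm_{t,z}(A_1,A_2)$. Finally, $\what\Gm_{t,z}$ is regular, since on pairs of PSD operators it is by definition the $\ep\searrow0$ limit of its restriction to positive definite pairs.

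Granting all this, suppose for contradiction that $\what\Gm_{t,z}$ satisfied the AM-GM inequality. Then it would be individually AM bounded, so Corollary \ref{cor:KA char} (for a regular mean) would force $\what\Gm_{t,z}(A_1,A_2)=A_2\#_tA_1$ for \emph{every} pair in the domain, in particular for rank-one projections. But the computation carried out in the proof of Lemma \ref{lemma:no KA bound} shows, with $p:=1/z$, $\ops_2:=\diag(a,b)$ for $0<a<b$, and $\psi:=\tfrac{1}{\sqrt2}(1,1)\trans$, that $\what\Gm_{t,z}(\pr\psi,\ops_2)=\bz(a^{-p}+b^{-p})/2\jz^{(t-1)/p}\pr\psi$, whereas $\ops_2\#_t\pr\psi=\what\Gm_{t,1}(\pr\psi,\ops_2)=\bz(a^{-1}+b^{-1})/2\jz^{t-1}\pr\psi$. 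Since $z\in(0,1)$ forces $p>1$, the number $\bz(a^{-p}+b^{-p})/2\jz^{1/p}$ is the order-$p$ power mean of the distinct numbers $a^{-1}$ and $b^{-1}$, hence strictly exceeds their (order-$1$) arithmetic mean; raising to the nonzero exponent $t-1$ keeps the two sides distinct. Thus $\what\Gm_{t,z}(\pr\psi,\ops_2)\ne\ops_2\#_t\pr\psi$, contradicting the equality just derived. Therefore $\what\Gm_{t,z}$ is not individually AM bounded, and a fortiori it does not satisfy the AM-GM inequality.

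The only substantive ingredient here is the elementary monotonicity of power means (equivalently, strict concavity of $x\mapsto x^{1/p}$ for $p>1$), which is precisely where the argument of Lemma \ref{lemma:no KA bound} for $z>1$ cannot be reused: for $z>1$ one actually gets $\what\Gm_{t,z}\nleq\Gm_{(t,1-t)}^{\KA}$, whereas for $z\in(0,1)$ the inequality reverses, so a bare comparison with $\Gm_{(t,1-t)}^{\KA}$ no longer suffices and one must route through the rigidity statement of Corollary \ref{cor:KA char}. The remaining work is the routine bookkeeping of the structural properties listed above; if one prefers not to rely on regularity, the same conclusion follows by replacing $\pr\psi$ with $\pr\psi+\ep I$ and using continuity of $\what\Gm_{t,z}(\valt,\ops_2)$ and of $\ops_2\#_t(\valt)$ on positive definite operators to obtain a genuinely positive definite counterexample, so that the $\dom(\Gm_\nu)\pp$-part of Corollary \ref{cor:KA char} already applies.
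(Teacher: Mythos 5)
Your proof is correct and follows essentially the same route as the paper's: reuse the construction from Lemma \ref{lemma:no KA bound}, use strict concavity of $\id_{(0,+\infty)}^{1/p}$ (equivalently the strict power-mean inequality for $p=1/z>1$) to see that $\what\Gm_{t,z}(\pr{\psi},\ops_2)\ne\ops_2\#_t\pr{\psi}$, and then conclude via the rigidity statement of Corollary \ref{cor:KA char}. The only difference is that you spell out the verification of the corollary's hypotheses (and the positive-definite perturbation as a fallback to avoid relying on regularity), which the paper leaves implicit.
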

\begin{proof}
Using the same construction as in the proof of Lemma 
\ref{lemma:no KA bound}, we can now appeal to the strict concavity of 
$\id_{(0,+\infty)}^{\frac{1}{p}}$ to obtain that 
$\what\Gm_{t,z}(\ops_1,\ops_2)\ne\what\Gm_{t,1}(\ops_1,\ops_2)$, and the 
conclusion then follows by Proposition \ref{prop:KA char}. 
\end{proof}
\medskip

Note that the invariance property \eqref{eq:inverse invariance} is quite natural to be imposed on any notion of non-commutative weighted geometric mean as it trivially holds for weighted geometric means of scalars.
In contrast, the notion of transposition does not have a scalar version (alternatively, it becomes trivial), whence the invariance property \eqref{eq:inverse trans invariance} seems more technical than conceptual.
However, we show below that it follows from other properties that are very natural to require of a notion 
non-commutative weighted geometric mean. For this, we will use some simple facts about antisymmetric tensor products.

Let $\wedge^k\bC^d$ denote the $k$-th antisymmetric tensor power of $\bC^d$, and for any 
$A\in\B(\bC^d)$, let $A^{\wedge k}$ denote the restriction of $A^{\otimes k}$ onto $\wedge^k\bC^d$. 
Recall that if $(e_i)_{i=1}^d$ is an orthonormal basis of $\bC^d$ then 
\begin{align*}
e_{i_1}\wedge\ldots\wedge e_{i_k}
:=
\frac{1}{\sqrt{k!}}\sum_{\sigma}\ep(\sigma)e_{i_{\sigma(1)}}\ootimes e_{i_{\sigma(k)}},\ds\ds\ds
1\le i_1<\ldots<i_k\le d,
\end{align*}
forms an orthonormal basis in $\wedge^k\bC^d$. (In the above formula, the summation runs over all permutations of $k$ elements, and $\ep(\sigma)$ denotes the sign of the permutation $\sigma$.)
Then
\begin{align}
&V:\,\bC\to \wedge^{d}\bC^d,\ds\ds\ds\ds\ds Ve_1:=e_1\wedge\ldots\wedge e_d,
\label{eq:V def}\\
&U:\,\bC^d\to \wedge^{d-1}\bC^d,\ds\ds\ds Ue_i:=e_1\wedge\ldots\wedge e_{i-1}\wedge e_{i+1}\wedge\ldots\wedge e_{d},\ds\ds i\in[d],
\label{eq:U def}
\end{align} 
define unitaries, and a straightforward computation verifies that for any $A\in\B(\bC^d)$, 
\begin{align}\label{eq:d power}
V^* A^{\wedge d}V=\det(A),
\end{align}
where $\det(A)$ stands for the determinant of $A$, interpreted here as a linear operator on the one-dimensional space $\wedge^{d}\bC^d$, 
and 
\begin{align}\label{eq:d-1 power}
U^*A^{\wedge(d-1)}U=(\det (A))A\tinv,
\end{align}
where the transposition is taken in the given orthonormal basis.

We consider the invariance property \eqref{eq:inverse trans invariance} more generally for
$\Y$-variable positive operator functions, 
where $\Y$ may be an arbitrary non-empty set,
as it may be of interest in itself.
In fact, for Lemmas \ref{lemma:determinant formula}--\ref{lemma:KA char2} below, 
we do not even need to assume that $\Y$ is equipped with a $\sigma$-algebra $\F$ and 
$y\mapsto \ops_y$ is measurable for $A\in\dom(\Gm)$ (alternatively, we may take $\F$
to be the full power set of $\Y$).
We introduce the following technical properties of a $\Y$-variable positive operator function $\Gm$ that will be used in the following.
\begin{itemize}
\item
\ki{Strict positivity:} 
$\ops\in\dom(\Gm)\pp$ $\imp$ $\Gm(\ops)>0$.

\item
\ki{Strict scalar positivity:} 
$\ops\in\dom_1(\Gm)\pp$ $\imp$ $\Gm(\ops)>0$.

\item
\ki{Reduction property:} 
For any $(\ops_y)_{y\in\Y}\in\dom_d(\Gm)$, and any subspace $\kil\subseteq\bC^d$ that is invariant under all 
$A_y$, we have $\bz A_y\big\vert_{\kil}\jz_{y\in\Y}\in\dom_{\kil}(\Gm)$, where 
$A_y\big\vert_{\kil}$ is interpreted as an operator on $\kil$.

\item
\ki{Scalar tensor multiplicativity:} 
For any $a\in\dom_1(\Gm)$ and any $A\in\dom(\Gm)$, $(a_yA_y)_{y\in\Y}\in\dom(\Gm)$, and 
\begin{align*}
\Gm((a_yA_y)_{y\in\Y})=\Gm(a)\Gm(A).
\end{align*}
\end{itemize}
Note that both scalar tensor multiplicativity and tensor multiplicativity 
defined in Section \ref{sec:operator geometric means}
are special cases of strong tensor multiplicativity, which would mean that 
$(\ops_y)_{y\in\Y},(\tilde\ops_y)_{y\in\Y}\in\dom(\Gm)$ implies 
$(\ops_y\otimes \tilde\ops_y)_{y\in\Y}\in\dom(\Gm)$, and 
$\Gm ((\ops_y\otimes \tilde\ops_y)_{y\in\Y})=\Gm(\ops)\otimes\Gm(\tilde\ops)$. 
However, we will not need this stronger property.
Note also that for a finite $\Y$, a $\Y$-variable matrix geometric mean is usually either defined on all families of positive definite operators indexed by $\Y$, or on all families of PSD operators indexed by $\Y$, and in either case, the reduction property holds trivially. 

\begin{lemma}\label{lemma:determinant formula}
\emph{(Determinant identity)}
Let $\Gm$ be a $\Y$-variable positive operator function that is tensor multiplicative, 
block additive, and has the reduction property. 
Then 
\begin{align}\label{eq:determinant identity}
\ops=(\ops_y)_{y\in\Y}\in\dom(\Gm)\ds\ds\imp\ds\ds
\left\{\begin{array}{l}
\det \ops:=(\det(\ops_y)_{y\in\Y})\in\dom(\Gm),\\
\s\\
\det \Gm(A)=\Gm(\det A).
\end{array}\right.
\end{align}
\end{lemma}
\begin{proof}
Let $A=(\ops_y)_{y\in\Y}\in\dom_d(\Gm)$.
We may write $(\bC^d)^{\otimes d}=\wedge^{d}\bC^d\oplus\kil$, and
choose a $V$ as in \eqref{eq:V def}, corresponding to an arbitrary ONB. By \eqref{eq:d power},
\begin{align}\label{eq:determinant identity proof1}
A^{\otimes d}
=
\bz V(\det(\ops_y))V^*\oplus\ops_y^{\otimes d}\big\vert_{\kil}\jz_y,
\end{align}
and the reduction property and unitary covariance  implies that $\det(A)\in\dom(\Gm)$.
Moreover,
\begin{align*}
V(\det\Gm(A))V^*\oplus\Gm(A)^{\otimes d}\big\vert_{\kil}
&=
\Gm(A)^{\otimes d}
=
\Gm(A^{\otimes d})
=
\Gm\bz \bz V(\det(\ops_y))V^*\oplus\ops_y^{\otimes d}\big\vert_{\kil}\jz_y\jz,\\
&=
V\Gm\bz\det\ops\jz V^*\oplus\Gm\bz(\ops_y^{\otimes d}\big\vert_{\kil})_{y\in\Y}\jz,
\end{align*}
where the first equality is due to \eqref{eq:d power}, 
the second equality is due to tensor multiplicativity,
the third equality follows from \eqref{eq:determinant identity proof1},
and the last equality from the reduction property, block additivity,
and unitary covariance. Thus,
\begin{align*}
\det\Gm(A)=\Gm\bz\det\ops\jz,
\end{align*}
as stated.
\end{proof}

\begin{lemma}\label{lemma:operator positivity}
Let $\Gm$ be a $\Y$-variable positive operator function that satisfies 
\eqref{eq:determinant identity}. Then $\Gm$ is strictly positive
if and only if it is strictly scalar positive.
\end{lemma}
\begin{proof}
Obvious from the fact that a PSD operator $X$ is positive definite if and only if $\det(X)>0$.
\end{proof}

\begin{lemma}\label{lemma:KA char2}
Let $\Gm$ be a $\Y$-variable positive operator function that is 
strictly scalar positive,
tensor multiplicative, 
scalar tensor multiplicative,  block additive, and has the reduction property. 
Then for any 
$(\ops_y)_{y\in\Y}\in\dom(\Gm)\pp$ and any transposition $\trans$, 
$(\ops_y\tinv)_{y\in\Y}\in\dom(\Gm)\pp$,  
$\Gm((\ops_y)_{y\in\Y})$ is also positive definite, and 
\begin{align}\label{eq:inverse mean}
\Gm(\ops\tinv)=\Gm(A)\tinv, 
\end{align}
where $\ops\tinv=(\ops_y\tinv)_{y\in\Y}$.
\end{lemma}
\begin{proof}
Let $A=(\ops_y)_{y\in\Y}\in\dom_d(\Gm)\pp$.
By Lemmas \ref{lemma:determinant formula} and \ref{lemma:operator positivity}, 
$\Gm$ is strictly positive, and therefore
$\Gm(A)>0$. 
We have $\bC^{\otimes(d-1)}=\wedge^{d-1}\bC^d\oplus\tilde\kil$, and
we may choose $U$ as in \eqref{eq:U def}, corresponding to the ONB in which the given 
transposition is defined. Then 
\begin{align}
&\det(\Gm(A))\,U\Gm(A)\tinv U^*\oplus\Gm((\ops_y)_{y\in\Y})^{\otimes (d-1)}\big\vert_{\tilde\kil}\nn\\
&\ds=
\Gm(A)^{\otimes (d-1)}
=
\Gm(A^{\otimes (d-1)})\nn\\
&\ds=
\Gm\bz\bz\det(\ops_y)\,U\ops_y\tinv U^*\oplus\ops_y^{\otimes (d-1)}\big\vert_{\tilde\kil}\jz_y\jz\nn\\
&\ds=
\underbrace{\Gm\bz\bz\det(\ops_y)\,U\ops_y\tinv U^*\jz_{y\in\Y}\jz}_{=
\Gm\bz \bz\det(\ops_y)\jz_{y\in\Y}\jz\, U\Gm\bz\bz\ops_y\tinv\jz_{y\in\Y}\jz U^*}
\oplus\Gm\bz\bz\ops_y^{\otimes (d-1)}\big\vert_{\tilde\kil}\jz_{y\in\Y}\jz,
\label{eq:KA char proof}
\end{align}
where the first and the third equalities are due to \eqref{eq:d-1 power}, 
the second equality is due to tensor multiplicativity, 
and in the last line we used the reduction property, block additivity,
and unitary covariance.
By Lemma \ref{lemma:determinant formula}, $\Gm$ satisfies the determinant identity \eqref{eq:determinant identity}, and therefore
\eqref{eq:KA char proof} yields \eqref{eq:inverse mean}.
\end{proof}

Proposition \ref{prop:KA char} and Lemma \ref{lemma:KA char2} yield immediately the following
alternative characterization of the weighted Kubo-Ando geometric means:

\begin{thm}\label{thm:KA char}
Let $\Gm$ be a $2$-variable positive operator function that is 
strictly scalar positive,
tensor multiplicative, scalar tensor multiplicative, asympotically
individually AM bounded,
block additive, and has the reduction property. 
Then there exists a $t\in[0,1]$ such that for any two positive definite $(\ops_1,\ops_2)\in\dom(\Gm)\pp$, 
\begin{align}\label{eq:KA equality2}
\Gm(\ops_1,\ops_2)=\ops_2\#_t\ops_1.
\end{align}
If, moreover, $\Gm$ is regular, then \eqref{eq:KA equality1} holds for every 
$(\ops_1,\ops_2)\in\dom(\Gm)$.
\end{thm}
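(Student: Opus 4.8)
The statement is asserted to follow immediately from Lemmas \ref{lemma:KA char} and \ref{lemma:KA char2}, so the plan is simply to check that the hypotheses of those two lemmas are met under the assumptions of Theorem \ref{thm:KA char}, and then to chain their conclusions. There is no genuinely new argument to produce; the work is entirely bookkeeping of hypotheses.

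First I would invoke Lemma \ref{lemma:KA char2}. Its four hypotheses — tensor multiplicativity, scalar tensor multiplicativity, block additivity, and the reduction property — are all among the standing assumptions of Theorem \ref{thm:KA char}, so the lemma applies directly. It yields that for every $(\ops_1,\ops_2)\in\dom(\Gm)\pp$ one has $(\ops_1\tinv,\ops_2\tinv)\in\dom(\Gm)\pp$, that $\Gm(\ops_1,\ops_2)$ is positive definite, and that $\Gm(\ops_1\tinv,\ops_2\tinv)=\Gm(\ops_1,\ops_2)\tinv$ on $\dom(\Gm)\pp$. This is exactly the positive-definiteness requirement together with alternative \ref{converse bound2} in the hypothesis list of Lemma \ref{lemma:KA char}. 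Here I would also note that the hypotheses are nonvacuous: scalar tensor multiplicativity forces $(0,+\infty)\times(0,+\infty)\subseteq\dom(\Gm)$, so $\dom(\Gm)\pp\neq\emptyset$.

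Next I would apply Lemma \ref{lemma:KA char}. Tensor multiplicativity gives $\Gm(A^{\otimes n})=\Gm(A)^{\otimes n}$, hence in particular tensor supermultiplicativity; block additivity likewise implies block superadditivity; individual AM boundedness is assumed outright; and the positive-definiteness of $\Gm$ on $\dom(\Gm)\pp$ together with condition \ref{converse bound2} was just established via Lemma \ref{lemma:KA char2}. Thus Lemma \ref{lemma:KA char} produces a single $t\in[0,1]$ with $\Gm(\ops_1,\ops_2)=\ops_2\#_t\ops_1$ for all $(\ops_1,\ops_2)\in\dom(\Gm)\pp$, which is the first assertion of the theorem. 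If in addition $\Gm$ is regular, the final sentence of Lemma \ref{lemma:KA char} extends this equality from $\dom(\Gm)\pp$ to all of $\dom(\Gm)$ via the limit $\ep\searrow 0$, giving the last assertion.

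The only subtle point — and thus the closest thing to an obstacle — is making sure the two lemmas are assembled in the right order: Lemma \ref{lemma:KA char2} must be applied \emph{first} to manufacture the symmetry hypothesis \ref{converse bound2} (and the positive-definiteness of $\Gm$ on positive definite inputs), which is an unverified premise of Lemma \ref{lemma:KA char} but a \emph{conclusion} of Lemma \ref{lemma:KA char2}. Everything else is the routine observation that an equality version of a property implies its one-sided (super) version.
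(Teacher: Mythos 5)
Your proposal is correct and matches the paper's argument exactly: the paper states that Theorem \ref{thm:KA char} follows immediately from Lemmas \ref{lemma:KA char} and \ref{lemma:KA char2}, applied in precisely the order you describe (Lemma \ref{lemma:KA char2} supplying positive definiteness and condition \ref{converse bound2}, then Lemma \ref{lemma:KA char} giving the weight $t$ and the regular extension). Your bookkeeping that the equality-type hypotheses imply the one-sided (super) versions required by Lemma \ref{lemma:KA char} is the only nontrivial check, and you handled it correctly.
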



\begin{cor}\label{cor:KA char}
Let $\Gm_{\nu}$ be a $2$-variable $\nu$-weighted operator geometric mean
that is strictly scalar positive,
tensor multiplicative, scalar tensor multiplicative, 
asympotically individually AM bounded,
block additive, and has the reduction property. 
Then $\Gm_{\nu}(\ops_1,\ops_2)=\Gm_{\nu}^{\KA}(\ops_1,\ops_2)$ 
for every $(\ops_1,\ops_2)\in\dom(\Gm_{\nu})\pp$, and also for 
every $(\ops_1,\ops_2)\in\dom(\Gm_{\nu})$, if $\Gm_{\nu}$ is regular. 
\end{cor}
\begin{proof}
Follows immediately from Theorem \ref{thm:KA char} in the same way as the analogous statement in the proof of Theorem \ref{thm:KA maximal}.
\end{proof}

\section{Application to quantum channel discrimination}
\label{sec:channel disc}

In this section, we show how the results of the previous sections can be applied to composite channel discrimination. 
For terminology and notation regarding completely positive maps, we refer to Appendix \ref{sec:supop persp}.

In the problem of composite i.i.d.~channel discrimination, an experimenter has to decide whether a given black 
box implements a quantum operation (or quantum channel, i.e., a completely positive trace-preserving map) 
belonging to a given set of quantum operations $\C_0$, or one belonging to another set $\C_1$.
I.i.d.~means that if the quantum operation that the channel implements is $\E$, then $n$ applications of the
black box implements $\E^{\otimes n}$. 

An $n$-copy \ki{parallel discrimination strategy} is described by a quantum state $\vfi_{RA^n}$
on $\hil_{RA^n}:=\hil_R\otimes\hil_A^{\otimes n}$ 
(where $R$ labels some reference system, and $A$ labels the input system of the channel, and 
$RA^n$ denotes the joint system of the reference and $n$ copies of the input system), and
a test $T\in\B(\hil_{RB^n})_{[0,1]}$,
where $B$ labels the output system of the 
channel. The type I error probability of erroneously deciding that the channel belongs to $\C_1$ 
and the type II error probability of erroneously deciding that the channel belongs to $\C_0$
are then given by 
\begin{align*}
\alpha_n(\vfi,T)=\sup_{\N\in\C_0}\Tr(I-T)(\id_R\otimes\N^{\otimes n})\vfi_{RA^n},
\ds\ds\ds
\beta_n(\vfi,T)=\sup_{\M\in\C_1}\Tr T(\id_R\otimes\M^{\otimes n})\vfi_{RA^n},
\end{align*}
respectively.

A more general $n$-copy \ki{adaptive strategy} is specified by an input state $\vfi_{R_1A_1}$, a sequence of channels
$\vec{\A}=(\A_{B_iR_i\to A_{i+1}R_{i+1}})_{i=1}^{n-1}$, and a 
test $T\in\B(\hil_{B_nR_n})$,
where $A_i$ denotes the input system of the $i$-th application of the 
black box and $B_i$ its output system. The 
error probabilities are then given by 
\begin{align*}
\alpha_n(\vfi,\vec{\A},T):=\sup_{\N\in\C_0}\Tr (I-T)\smap_{\vec{\A}}(\N^{\otimes n})\vfi_{R_1A_1},\ds\ds\ds
\beta_n(\vfi,\vec{\A},T):=\sup_{\M\in\C_1}\Tr T\smap_{\vec{\A}}(\M^{\otimes n})\vfi_{R_1A_1},
\end{align*}
where $\smap_{\vec{\A}}$ is a superchannel (i.e., a completely positive supermap that maps quantum channels into quantum channels; see also Section \ref{sec:supop persp}) 
acting on a tensor power channel $\E^{\otimes n}$ as
\begin{align*}
\smap_{\vec{\A}}(\E^{\otimes n}):=&
\bz\id_{\B(\hil_{R_n})}\otimes\E_{A_nR_n\to B_nR_n}\jz
\circ
\A_{B_{n-1}R_{n-1}\to A_nR_n}
\circ
\bz\id_{\B(\hil_{R_{n-1}})}\otimes\E_{A_{n-1}R_{n-1}\to B_{n-1}R_{n-1}}\jz
\circ\ldots\\
&\ldots\circ\A_{B_{1}R_{1}\to A_2R_2}
\circ
\bz\id_{\B(\hil_{R_1})}\otimes\E_{A_1R_1\to B_1R_1}\jz.
\end{align*}
It is easy to see that any parallel strategy can be written as an adaptive strategy, and therefore adaptive strategies 
are at least as powerful as parallel ones.

The most general $n$-copy discrimination strategy is given by a superchannel
$\smap:\,\B(\B(\hil_{A^{n}}),\B(\hil_{B^{n}}))\to\B(\B(\hil_{A_n}),\B(\hil_{B_n}))$,
a state $\vfi_{R_nA_n}$, and a test $T\in\B(\hil_{B_nR_n})$, 
with corresponding error probabilities
\begin{align*}
\alpha_n(\vfi,\smap,T):=\sup_{\N\in\C_0}\Tr (I-T)\smap(\N^{\otimes n})\vfi_{R_nA_n},\ds\ds\ds
\beta_n(\vfi,\smap,T):=\sup_{\M\in\C_1}\Tr T\smap_{\vec{\A}}(\M^{\otimes n})\vfi_{R_nA_n}.
\end{align*}

The direct exponent $\direct_r^{\st}(\C_0\|\C_1)$ and the strong converse exponent $\sconv_r^{\st}(\C_0\|\C_1)$
can be defined analogously to \eqref{eq:direct exp def}--\eqref{eq:sc exp def}, 
where the asymptotics of the error probabilities are optimized over all 
parallel strategies when $\st=\para$, over all adaptive strategies when $\st=\ad$,
and over general strategies when $\st=\gen$.
Just as in the case of state discrimination, we have the trivial bounds
\begin{align*}
\direct_r^{\st}(\C_0\|\C_1)\le\inf_{\N\in\C_0,\,\M\in\C_1}\direct_r^{\st}(\{\N\}\|\{\M\}),\ds\ds\ds
\sconv_r^{\st}(\C_0\|\C_1)\ge\sup_{\N\in\C_0,\,\M\in\C_1}\sconv_r^{\st}(\{\N\}\|\{\M\}).
\end{align*} 
Improved bounds of the form
\begin{align*}
\direct_r^{\st}(\C_0\|\C_1)\le\direct_r^{\st}(\{\tilde\E\}\|\{\E\}),\ds\ds\ds
\sconv_r^{\st}(\C_0\|\C_1)\ge\sup_{\N\in\C_0}\sconv_r^{\st}(\{\N\}\|\{\E\})
\end{align*}
can be obtained if one finds CP maps $\E,\tilde\E$ such that 
a) in the case $\st=\para$,
\begin{align*}
\sup_{\N\in\C_0}\Tr (I-T)(\id_R\otimes\N^{\otimes n})\vfi_{RA^n}
&\ge
\Tr (I-T)(\id_R\otimes\tilde\E^{\otimes n})\vfi_{RA^n},\\
\sup_{\M\in\C_1}\Tr T(\id_R\otimes\N^{\otimes n})\vfi_{RA^n}
&\ge
\Tr T(\id_R\otimes\E^{\otimes n})\vfi_{RA^n},
\end{align*} 
for any $n\in\bN$ and any $n$-copy parallel strategy $(\vfi,T)$;
b) in the case $\st=\ad$,
\begin{align*}
\sup_{\N\in\C_0}\Tr (I-T)\smap_{\vec{\A}}(\N^{\otimes n})\vfi_{R_1A_1}
&\ge
\Tr (I-T)\smap_{\vec{\A}}(\tilde\E^{\otimes n})\vfi_{R_1A_1},\\
\sup_{\M\in\C_1}\Tr T\smap_{\vec{\A}}(\M^{\otimes n})\vfi_{R_1A_1}
&\ge
\Tr T\smap_{\vec{\A}}(\E^{\otimes n})\vfi_{R_1A_1}
\end{align*}
for any $n\in\bN$ and any $n$-copy adaptive strategy $(\vfi,\vec{\A},T)$;
and c) in the case $\st=\gen$,
\begin{align*}
\sup_{\N\in\C_0}\Tr (I-T)\smap(\N^{\otimes n})\vfi_{R_nA_n}
&\ge
\Tr (I-T)\smap(\tilde\E^{\otimes n})\vfi_{R_nA_n},\\
\sup_{\M\in\C_1}\Tr T\smap(\M^{\otimes n})\vfi_{R_nA_n}
&\ge
\Tr T\smap(\E^{\otimes n})\vfi_{R_nA_n}.
\end{align*}

This leads to an analogous mathematical problem as in the case of state discrimination, and the results 
related to state discrimination can be directly applied here. For simplicity, below we restrict to a setting corresponding to the case where finite sets of channels are to be distinguished. 

\begin{thm}
Let $\Y$ be a finite set and $\N_y\in\cp(\hil_A,\hil_B)$, $y\in\Y$, 
and $\E\in\cp(\hil_A,\hil_B)$ be completely positive maps. Then the following are equivalent:
\begin{enumerate}
\item\label{ch disc0}
For every $n\in\bN$, and every $n$-copy general strategy $(\vfi,\smap,T)$,
\begin{align*}
\Tr T\smap(\tilde\E^{\otimes n})\vfi_{R_nA_n}
&\le
\max_{y\in\Y}\Tr T\smap(\N_y^{\otimes n})\vfi_{R_nA_n}.
\end{align*}

\item\label{ch disc1}
For every $n\in\bN$, and every $n$-copy adaptive strategy $(\vfi,\vec{\A},T)$,
\begin{align*}
\Tr T\smap_{\vec{\A}}(\tilde\E^{\otimes n})\vfi_{R_1A_1}
&\le
\max_{y\in\Y}\Tr T\smap_{\vec{\A}}(\N_y^{\otimes n})\vfi_{R_1A_1}.
\end{align*}

\item\label{ch disc2}
For every $n\in\bN$, and every $n$-copy parallel strategy $(\vfi,T)$,
\begin{align*}
\Tr T(\id_R\otimes\tilde\E^{\otimes n})\vfi_{RA^n}
&\le
\max_{y\in\Y}\Tr T(\id_R\otimes\N_y^{\otimes n})\vfi_{RA^n}.
\end{align*} 

\item\label{ch disc3}
There exists a probability measure $\mu\in\S(\Y)$ such that for every $n\in\bN$, 
\begin{align}\label{eq:ch disc3}
\E^{\otimes n}\le_{\cp}\sum_{y\in\Y}\mu(y)\N_y^{\otimes n}.
\end{align}
\end{enumerate}
If, moreover, $\Y$ is a $2$-element set (w.l.o.g.~$\Y=\{1,2\}$), then the above are further equivalent to the following:
\begin{enumerate}
\setcounter{enumi}{3}
\item\label{ch disc4}
There exists a $t\in[0,1]$ such that 
\begin{align}\label{eq:ch disc4}
\E\le_{\cp}\N_2\#_t\N_1.
\end{align}
\end{enumerate} 
\end{thm}
\begin{proof}
The implications \ref{ch disc0}$\imp$\ref{ch disc1}$\imp$\ref{ch disc2} are trivial,
and the implication \ref{ch disc3}$\imp$\ref{ch disc1} is trivial from the 
CP monotonicity of $\smap$. 

To see \ref{ch disc2}$\imp$\ref{ch disc3}, let 
$\Psi=\sum_{i=1}^{d_A}\tilde e_i\otimes e_i\in\hil_{\tilde A}\otimes\hil_A$ be a Choi vector,
where  
$(e_i)_{i=1}^{d_A}$ is an orthonormal basis in $\hil_A$ and
$(\tilde e_i)_{i=1}^{d_A}$ is an orthonormal basis in $\hil_{\tilde A}$, and consider
$n$-copy parallel strategies of the type
$R=\tilde A^n$, $\vfi_{RA^n}=\Psi^{\otimes n}$, and $T\in\B(\hil_{\tilde A^nB^n})_{[0,1]}$ arbitrary. Then 
\ref{ch disc2} implies that
\begin{align*}
\Tr T\choi_{\Psi}(\E)^{\otimes n}
\le
\max_{y\in\Y}\Tr T\choi_{\Psi}(\N_y)^{\otimes n},\ds\ds\ds
T\in\B(\hil_{\tilde A^nB^n})_{[0,1]},\ds n\in\bN.
\end{align*} 
By Theorem \ref{thm:sup bounds implies weak geometric bound}, this implies the existence of a probability measure 
$\nu\in\S(\Y)$ such that 
\begin{align}
\Tr T\choi_{\Psi}(\E)^{\otimes n}
&\le
\exp\bz\sum_{y\in\Y}\nu(y)\log\Tr T\choi_{\Psi}(\N_y)^{\otimes n}\jz\nn\\
&\le
\sum_{y\in\Y}\nu(y)\Tr T\choi_{\Psi}(\N_y)^{\otimes n},
\ds\ds\ds T\in\B(\hil_{\tilde A^nB^n})_{[0,1]},\ds n\in\bN.\label{eq:ch disc proof1}
\end{align}
Thus, 
\begin{align*}
\choi_{\Psi}(\E)^{\otimes n}\le\sum_{y\in\Y}\nu(y)\choi_{\Psi}(\N_y)^{\otimes n}
=\choi_{\Psi}\bz\sum_{y\in\Y}\nu(y)\N_y^{\otimes n}\jz,\ds\ds\ds n\in\bN,
\end{align*}
which in turn implies \eqref{eq:ch disc3}.

Finally, assume that $\Y=\{1,2\}$. As we have seen above, \ref{ch disc1}--\ref{ch disc3}
are equivalent to  
\eqref{eq:ch disc proof1}, 
which in turn is equivalent to 
$\choi_{\Psi}(\E)\le\choi_{\Psi}(\N_2)\#_{\nu(1)}\choi_{\Psi}(\N_1)$, 
according to Theorem \ref{thm:2-var char2}, 
and this latter inequality is equivalent to 
\eqref{eq:ch disc4} with $t=\nu(1)$ by definition.
\end{proof}

\section{Conclusion}

We have developed further an approach first put forward in \cite{MWSz2020} to obtaining tighter than previously known 
single-copy bounds on the error exponents of binary composite i.i.d.~state discrimination by 
comparing not only individual states from the two sets representing the two hypotheses, but also various 
subnormalized PSD operators associated to each set. We have shown a number of equivalent characterizations
of such operators to be useful for this approach; in particular, we have shown that weak sup-type, 
arithmetic mean-type, and weak geometric mean-type inequalities all become equivalent when required to hold 
for arbitrary number of copies. Most importantly, we have obtained a full characterization of the maximal such 
operators in two important special cases: when the set representing a hypothesis contains only commuting density 
operators, and when it contains two density operators on a finite-dimensional Hilbert space. In the first case, the maximal operators are exactly the weighted geometric means of the states, which are uniquely defined for any probability measure on the (Borel $\sigma$-algebra of the) set of states. In the second case, the maximal 
operators are exactly the $t$-weighted Kubo-Ando geometric means for every $t\in[0,1]$. The resulting error bounds in the second case are
\begin{align}
\direct_r(\{\rho_1,\rho_2\}\|\{\sigma_1,\sigma_2\})&\le
\inf_{s,t\in[0,1]}
\sup_{\alpha\in(0,1)}\frac{\alpha-1}{\alpha}
\left[r-D_{\alpha}(\rho_2\#_s\rho_1\|\sigma_2\#_t\sigma_1)\right],\label{geometric bounds1}\\
\sconv_r(\{\rho_1,\rho_2\}\|\{\sigma_1,\sigma_2\})
&\ge
\max_{i=1,2}\sup_{t\in[0,1]}
\sup_{\alpha>1}\frac{\alpha-1}{\alpha}\left[r-D\nw_{\alpha}(\rho_i\|\sigma_2\#_t\sigma_1)\right].
\label{geometric bounds2}
\end{align}
It was shown in \cite{bunth2021equivariant} that \eqref{geometric bounds2} holds as an equality when the 
$\sigma$ operators commute, and probably the most important open question related to our results is whether 
equality holds in \eqref{geometric bounds2} or \eqref{geometric bounds1} also in the general non-commutative case. 

For more than two operators, there are many different notions of weighted geometric mean
for any given non-degenerate probability measure on the set of density operators (more generally, PSD operators), that 
all satisfy the desirable properties of tensor mutiplicativity and the AM-GM inequality, and therefore 
provide valid error bounds in the above approach. This also means that 
any possible characterization extending the $2$-variable case above 
would need to take this plethora of options into account. Here we make the (admittedly bold) proposal to 
define multi-variate matrix geometric means as the maximal operators satisfying the equivalent conditions
e.g., in Theorem \ref{thm:sup bounds implies weak geometric bound}.
Whether a complete description of these maximal elements is possible in a more or less explicit form is an open question already for three operators.
The natural question to ask then from the point of view of state discrimination, is whether
these maximal elements give an exact single-copy expression for the error exponents, as was shown to be the case 
for the strong converse exponent for a commutative alternative hypothesis in \cite{bunth2021equivariant}.

\appendix

\section{Examples for (non-)improvement via geometric bounds}
 \label{sec:examples}

For the rest of this section, let $\hil$ be a finite-dimensional Hilbert space, and 
$\N,\A\subseteq\S(\hil)$ be the null- and the alternative hypothesis, respectively,   
of a composite i.i.d.~state discrimination problem.
For any $\R\subseteq\S(\hil)$, let $\C(\R)$ be as in \eqref{eq:CR def2}, and 
recall the obvious inclusion 
\begin{align}\label{eq:trivial C2}
\R\subseteq\C(\R),
\end{align}
By Proposition \ref{prop:sup bound implies AM bound}, for any $\gm\in\C(\R)$, there exists a probability measure $\mu_{\gm}\in\S(\R)$ such that 
\begin{align}\label{eq:AM bound2}
C\le\int_{\R}\omega\,d\mu_{\gm}(\omega)=:\Am_{\mu_{\gm}}(\R).
\end{align} 
For a set $\R\subseteq\S(\hil)$, let $\co(\R)$ denote its convex hull.
Then, we have 
\begin{align*}
\inf_{\rho\in\N}\inf_{\sigma\in\A}\sup_{\alpha\in(0,1)}
\frac{\alpha-1}{\alpha}[r-D_{\alpha}(\rho\|\sigma)]
&\ge
\inf_{\tilde C\in\C(\N)}\inf_{C\in\C(\A)}\sup_{\alpha\in(0,1)}
\frac{\alpha-1}{\alpha}[r-D_{\alpha}(\tilde C\|C)]\\
&\ge
\inf_{\tilde\mu\in\S(\N)}\inf_{\mu\in\S(\A)}\sup_{\alpha\in(0,1)}
\frac{\alpha-1}{\alpha}[r-D_{\alpha}(\Am_{\tilde\mu}(\N)\|\Am_{\mu}(\A))]\\
&=
\inf_{\rho\in\co(\N)}\inf_{\sigma\in\co(\A)}\sup_{\alpha\in(0,1)}
\frac{\alpha-1}{\alpha}[r-D_{\alpha}(\rho\|\sigma)],
\end{align*}
where the first inequality follows from \eqref{eq:trivial C}, the second inequality from \eqref{eq:AM bound2}
and \eqref{eq:Renyi anti-mon}, and the equality is trivial.
In particular, if $\N$ and $\A$ are both convex, then all the inequalities above are equalities,
and therefore the bound in \eqref{direct upper} does not give an improvement over the trivial bound in
\eqref{eq:trivial bounds}. 
Similarly,
\begin{align*}
\sup_{\rho\in\N}\sup_{\sigma\in\A}\sup_{\alpha>1}\frac{\alpha-1}{\alpha}
[r-D\nw_{\alpha}(\rho\|\sigma)]
&\le
\sup_{\rho\in\N}\sup_{C\in\C(\A)}\sup_{\alpha>1}\frac{\alpha-1}{\alpha}
[r-D\nw_{\alpha}(\rho\|C)]\\
&\le
\sup_{\rho\in\N}\sup_{\mu\in\S(\A)}\sup_{\alpha>1}\frac{\alpha-1}{\alpha}
[r-D\nw_{\alpha}(\rho\|\Am_{\mu}(\A))]\\
&\le
\sup_{\rho\in\N}\sup_{\sigma\in\co(\A)}\sup_{\alpha>1}\frac{\alpha-1}{\alpha}
[r-D\nw_{\alpha}(\rho\|\sigma)].
\end{align*}
In particular, if $\A$ is convex, 
all the above expressions are equal, and
the bound in \eqref{sc lower} does not give an improvement over the trivial bound in
\eqref{eq:trivial bounds}. 

Examples with a simple null hypothesis $\N=\{\rho\}$ and a composite alternative hypothesis 
$\A=\{\sigma_1,\sigma_2\}$ where the geometric bound
\eqref{direct upper} gives a strict improvement over the trivial bound in
\eqref{eq:trivial bounds} were given in \cite{MWSz2020}
(for this, the states $\sigma_1$ and $\sigma_2$ necessarily have to be non-commuting; see
\cite[Section III.A]{MWSz2020}.)
 Below 
we show an example with $\N=\{\rho\}$ and $\A=\{\sigma_1,\sigma_2\}$
such that $\rho,\sigma_1,\sigma_2$ are commuting, and 
\begin{align}
\sup_{\rho\in\N}\sup_{\sigma\in\A}\sup_{\alpha>1}\frac{\alpha-1}{\alpha}
[r-D\nw_{\alpha}(\rho\|\sigma)]
&<
\sup_{\rho\in\N}\sup_{C\in\C(\A)}\sup_{\alpha>1}\frac{\alpha-1}{\alpha}
[r-D\nw_{\alpha}(\rho\|C)]\\
&=
\sup_{\rho\in\N}\sup_{\nu\in\S(\A)}
\sup_{\alpha>1}\frac{\alpha-1}{\alpha}
[r-D\nw_{\alpha}(\rho\|\Gm_{\nu}(\A)]\label{eq:strict1}\\
&=\sconv_r(\N\|\A)\\
&<
\sup_{\rho\in\N}\sup_{\sigma\in\co(\A)}\sup_{\alpha>1}\frac{\alpha-1}{\alpha}
[r-D\nw_{\alpha}(\rho\|\sigma)].\label{eq:strict2}
\end{align}
(The first equality follows from Theorem \ref{thm:2-var char2} and \eqref{eq:Renyi anti-mon}
by taking into account that $G_{\nu}(\ah)\in\C(\ah)$
for any $\nu$, and the second one from 
\cite{bunth2021equivariant,BunthPhD}.)
That is, the geometric bound \eqref{sc lower} gives a strict improvement over the trivial bound \eqref{eq:trivial bounds},
while extending the optimization to the convex hull of $\A$ overshoots the strong converse exponent. 

To demonstrate such an example, we will need some simple facts about R\'enyi divergences and the Hoeffding 
anti-divergences for commuting operators, which we will always naturally identify with functions on a finite set.
For such operators, the Petz-type and the sandwiched R\'enyi divergences coincide, and we will denote both by 
$D_{\alpha}$.
For a finite set $\X$ and for non-zero non-negative functions $\rho,\sigma\in[0,+\infty)^{\X}$
such that $\supp\rho\subseteq\supp\sigma$, let 
\begin{align}\label{eq:Dmax def}
D_{\infty}(\rho\|\sigma):=\log\max_{x\in\X}\frac{\rho(x)}{\sigma(x)}
\end{align}
be their \ki{max-relative entropy} \cite{Datta,RennerPhD}, and 
\begin{align}\label{eq:r infty}
\X_{\infty}:=\left\{x\in\X:\,\log\frac{\rho(x)}{\sigma(x)}=D_{\infty}(\rho\|\sigma)\right\},\ds\ds\ds
r_{\infty}:=-\log\sum_{x\in\X_{\infty}}\sigma(x).
\end{align}

\begin{lemma}\label{lemma:classical Renyi}
In the above setting, assume that there exists no $\kappa\in(0,+\infty)$ such that 
$\rho(x)=\kappa\sigma(x)$, $x\in\supp\rho$. Then the following hold:
\begin{enumerate}
\item\label{classical Renyi1}
$(0,+\infty)\ni\alpha\mapsto\psi(\alpha):= \log\Tr\rho^{\alpha}\sigma^{1-\alpha}$ is strictly convex, and
$(0,+\infty)\ni\alpha\mapsto D_{\alpha}(\rho\|\sigma)$ is strictly increasing, with 
\begin{align*}
D(\rho\|\sigma)=D_1(\rho\|\sigma):=\lim_{\alpha\to 1}D_{\alpha}(\rho\|\sigma)
<D_{\alpha}(\rho\|\sigma)<D_{\infty}(\rho\|\sigma)=\lim_{\alpha\to+\infty}D_{\alpha}(\rho\|\sigma),
\ds\ds\ds\alpha\in(0,+\infty).
\end{align*}

\item\label{classical Renyi2}
For any $r\in(D(\rho\|\sigma),+\infty)$, 
\begin{align}\label{eq:Hoeffding bounds}
r-D_{\infty}(\rho\|\sigma)\le H_r^*(\rho\|\sigma):=\sup_{\alpha>1}\frac{\alpha-1}{\alpha}[r-D_{\alpha}(\rho\|\sigma)]
<r-D(\rho\|\sigma),
\end{align}
and the first inequality is also strict whenever $r<r_{\infty}$.
\end{enumerate}
\end{lemma}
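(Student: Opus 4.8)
The plan is to extract everything from analytic properties of the single function $\psi$. First I would restrict the defining sum to $\supp\rho$ (the terms with $\rho(x)=0$ drop out for $\alpha>0$) and rewrite $\psi(\alpha)=\log\sum_{x\in\supp\rho}\sigma(x)e^{\alpha\ell(x)}$ with $\ell(x):=\log(\rho(x)/\sigma(x))$, recognizing it as a cumulant generating function: its second derivative equals the variance of $\ell$ under the exponentially tilted probability measure $P_\alpha(x)\propto\sigma(x)e^{\alpha\ell(x)}$ on $\supp\rho$. That variance is strictly positive exactly because the non-proportionality hypothesis prevents $\ell$ from being constant on $\supp\rho$, so $\psi$ is strictly convex. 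For the monotonicity of $D_\alpha(\rho\|\sigma)=\psi(\alpha)/(\alpha-1)$ I would examine the sign of its derivative $(\alpha-1)^{-2}g(\alpha)$, where $g(\alpha):=(\alpha-1)\psi'(\alpha)-\psi(\alpha)$: since $\rho$ is normalized, $g(1)=-\psi(1)=0$, while $g'(\alpha)=(\alpha-1)\psi''(\alpha)$ has the sign of $\alpha-1$, so $g>0$ away from $\alpha=1$ and $D_\alpha(\rho\|\sigma)$ is strictly increasing on $(0,+\infty)$.

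Next I would identify the two limits. As $\alpha\to1$, $D_\alpha(\rho\|\sigma)\to\psi'(1)=\sum_x\rho(x)\log(\rho(x)/\sigma(x))$, the classical relative entropy, which coincides with the Umegaki one for commuting arguments. As $\alpha\to+\infty$, factoring out the largest ratio $m:=e^{D_\infty(\rho\|\sigma)}$ gives $\psi(\alpha)=\alpha D_\infty(\rho\|\sigma)-r_\infty+O(e^{-c\alpha})$ for some $c>0$ — the remainder coming from the coordinates outside $\X_\infty$, where $\ell(x)-\log m<0$, together with $\sum_{x\in\X_\infty}\sigma(x)>0$ — so $D_\alpha(\rho\|\sigma)\to D_\infty(\rho\|\sigma)$. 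Combined with strict monotonicity, this yields the displayed chain $D(\rho\|\sigma)=D_1(\rho\|\sigma)<D_\alpha(\rho\|\sigma)<D_\infty(\rho\|\sigma)$ for $\alpha\in(1,+\infty)$.

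For part (ii), set $h(\alpha):=\tfrac{\alpha-1}{\alpha}[r-D_\alpha(\rho\|\sigma)]$, so $H_r^*(\rho\|\sigma)=\sup_{\alpha>1}h(\alpha)$. The bound $r-D_\infty(\rho\|\sigma)\le H_r^*(\rho\|\sigma)$ is immediate from $h(\alpha)\to r-D_\infty(\rho\|\sigma)$ as $\alpha\to+\infty$. For $H_r^*(\rho\|\sigma)<r-D(\rho\|\sigma)$ I would first record the pointwise strict inequality $h(\alpha)<r-D(\rho\|\sigma)$ for every $\alpha>1$ (a one-line case split on the sign of $r-D_\alpha(\rho\|\sigma)$, using $\tfrac{\alpha-1}{\alpha}<1$, $D_\alpha(\rho\|\sigma)>D(\rho\|\sigma)$ from part (i), and $r>D(\rho\|\sigma)$), and then promote it to the supremum via compactness: $h$ is continuous on $(1,+\infty)$ with boundary limits $0$ (at $1^+$) and $r-D_\infty(\rho\|\sigma)$ (at $+\infty$), both strictly below $r-D(\rho\|\sigma)$, so a maximizing sequence stays in a compact subinterval, the supremum is attained, and equality with $r-D(\rho\|\sigma)$ would contradict the pointwise bound.

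The delicate point will be the strictness of the first inequality when $r<r_\infty$, and for this I would return to the refined expansion of part (i): substituting $\psi(\alpha)=\alpha D_\infty(\rho\|\sigma)-r_\infty+O(e^{-c\alpha})$ gives $h(\alpha)=\bigl(r-D_\infty(\rho\|\sigma)\bigr)+\tfrac{r_\infty-r}{\alpha}+O(e^{-c\alpha}/\alpha)$, so when $r_\infty-r>0$ the $1/\alpha$ term dominates the exponentially small remainder for all large $\alpha$, forcing $h(\alpha)>r-D_\infty(\rho\|\sigma)$ and hence $H_r^*(\rho\|\sigma)>r-D_\infty(\rho\|\sigma)$. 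I expect the main obstacle to be making precisely this last estimate honest — verifying that the remainder in the expansion of $\psi$ is genuinely $O(e^{-c\alpha})$ with a fixed $c>0$, so that it is negligible against the first-order $1/\alpha$ correction — but since $\supp\rho$ is finite and $\ell(x)-\log m$ is bounded away from $0$ off $\X_\infty$, this should cause no real trouble.
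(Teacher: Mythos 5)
Your proposal is correct, and on the one genuinely delicate point it takes a different route from the paper. For part (i) the paper's proof is only a pointer to earlier work ("straightforward computations"), whereas you actually supply the computation (tilted-measure variance for strict convexity of $\psi$, the sign analysis of $g(\alpha)=(\alpha-1)\psi'(\alpha)-\psi(\alpha)$ for monotonicity, and the two limits); this matches the intended argument. For the second inequality in \eqref{eq:Hoeffding bounds} the paper uses an explicit uniform bound from the case split $\alpha\in(1,2]$ versus $\alpha>2$ (bounding the objective by $\tfrac12(r-D(\rho\|\sigma))$ and by $r-D_2(\rho\|\sigma)$, respectively), while you combine a pointwise strict bound with a limit/compactness argument; both work, but note your clause "the supremum is attained" is not always true (it can coincide with the unattained limit at $+\infty$), though the conclusion survives since that limit is $r-D_\infty(\rho\|\sigma)<r-D(\rho\|\sigma)$. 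The real divergence is the strictness for $r<r_\infty$: the paper substitutes $u=(\alpha-1)/\alpha$, writes $H_r^*(\rho\|\sigma)=\sup_{u\in(0,1)}[ur-\tilde\psi(u)]$ with $\tilde\psi(u)=(1-u)\psi(1/(1-u))$, and uses strict convexity of $\tilde\psi$ together with $\tilde\psi'(0)=D(\rho\|\sigma)$ and $\tilde\psi'(u)\to r_\infty$ as $u\nearrow1$ to get an interior maximizer, hence a value strictly above the boundary value $r-D_\infty(\rho\|\sigma)$; your route instead expands $\psi(\alpha)=\alpha D_\infty(\rho\|\sigma)-r_\infty+O(e^{-c\alpha})$ and reads off $h(\alpha)=(r-D_\infty(\rho\|\sigma))+(r_\infty-r)/\alpha+O(e^{-c\alpha}/\alpha)$, which is more elementary and even quantitative (it exhibits the $1/\alpha$ margin), while the Legendre-transform route is structurally cleaner and also gives uniqueness of the maximizer. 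Two caveats, which concern the statement as much as your proof: you use $\Tr\rho=1$ (for $\psi(1)=0$ and the $\alpha\to1$ limit), which the lemma tacitly assumes; and your inference that non-proportionality forces $\ell$ to be non-constant on $\supp\rho$ silently needs the hypothesis read on $\supp\rho$ (or $\supp\rho=\supp\sigma$) --- if $\rho$ is proportional to $\sigma$ on $\supp\rho\subsetneq\supp\sigma$, then $\psi$ is affine, $D_\alpha(\rho\|\sigma)$ is constant, and the lemma's conclusion itself fails, so this is an edge case of the statement rather than a defect of your argument; in the paper's application all supports are full, so it is immaterial there.
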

\begin{proof}
The claims in \ref{classical Renyi1} follow by straightforward computations; see, e.g., \cite[Section II.B]{MWSz2020}.
The first inequality in \eqref{eq:Hoeffding bounds} is obvious from 
$\lim_{\alpha\to+\infty}\frac{\alpha-1}{\alpha}[r-D_{\alpha}(\rho\|\sigma)]=r-D_{\infty}(\rho\|\sigma)$, and the 
second inequality in \eqref{eq:Hoeffding bounds} follows as
\begin{align*}
\frac{\alpha-1}{\alpha}[r-D_{\alpha}(\rho\|\sigma)]<\begin{cases}
\half(r-D(\rho\|\sigma)),&\alpha\in(1,2],\\
\frac{\alpha-1}{\alpha}[r-D_2(\rho\|\sigma)]<r-D(\rho\|\sigma),&\alpha>2.
\end{cases}
\end{align*}
Note that 
\begin{align}\label{eq:Hoeffding Legendre}
H_r^*(\rho\|\sigma)=\sup_{u\in(0,1)}\left[ur-\tilde\psi(u)\right],\ds\ds\text{where}\ds\ds
\tilde\psi(u):=(1-u)\psi(1/(1-u)).
\end{align}
It is easy to see that $\tilde\psi$ is also strictly convex, $\tilde\psi'(0)=D(\rho\|\sigma)$, 
$\lim_{u\nearrow 1}\tilde\psi'(u)=r_{\infty}$, whence for any 
$r\in(D(\rho\|\sigma),r_{\infty})$, the supremum in \eqref{eq:Hoeffding Legendre}
is uniquely attained at some $u_r\in(0,1)$, whereby
$H_r^*(\rho\|\sigma)>\lim_{u\to 1}[ur-\tilde\psi(u)]=r-D_{\infty}(\rho\|\sigma)$, showing the strictness of the first 
inequality in \eqref{eq:Hoeffding bounds}.
\end{proof}

\begin{prop}
Let $\rho=(1/2,1/2)$, $\sigma_1=(1/4,3/4)$, $\sigma_2=(3/4,1/4)$ be probability density functions on 
$\{0,1\}$ (identified with commuting states on a $2$-dimensional Hilbert space).
 Then for any 
$k\in\bN$, and any $r>D(\rho^{\otimes k}\|\sigma_j^{\otimes k})=k\log(2/\sqrt{3})$,
\begin{align}
&\max_{j=1,2}\sup_{\alpha>1}\frac{\alpha-1}{\alpha}
[r-D_{\alpha}(\rho^{\otimes k}\|\sigma_j^{\otimes k})]\label{eq:strict ineq ex1}\\
&\ds<
r-D(\rho^{\otimes k}\|\sigma_j^{\otimes k})=
r-k\log(2/\sqrt{3})\label{eq:strict ineq ex2}\\
&\ds=
\sup_{t\in[0,1]}
\sup_{\alpha>1}\frac{\alpha-1}{\alpha}
\left[r-D_{\alpha}\bz\rho^{\otimes k}\|(\sigma_1^{\otimes k})^t(\sigma_2^{\otimes k})^{1-t}\jz\right]
=\sconv_r(\{\rho^{\otimes k}\}\|\{\sigma_1^{\otimes k},\sigma_2^{\otimes k}\})
\label{eq:strict ineq ex3}\\
&\ds\le
\sup_{t\in[0,1]}\sup_{\alpha>1}\frac{\alpha-1}{\alpha}
[r-D_{\alpha}(\rho^{\otimes k}\|t\sigma_1^{\otimes k}+(1-t)\sigma_2^{\otimes k})],
\label{eq:strict ineq ex4}
\end{align}
and the last inequality is strict whenever
$k$ is odd, or $k$ is even and $r<k\log\frac{4}{\sqrt{3}}-\log\binom{k}{k/2}$.
\end{prop}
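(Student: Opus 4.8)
The plan is to reduce everything to explicit classical computations, exploiting the coincidence $\sigma_1^{1/2}\sigma_2^{1/2}=\tfrac{\sqrt3}{2}\,\rho$ together with the facts collected in Lemma~\ref{lemma:classical Renyi}. First I would record the elementary data: $D(\rho\|\sigma_j)=\log\tfrac12-\tfrac12\log\tfrac{3}{16}=\log(2/\sqrt3)$, hence $D(\rho^{\otimes k}\|\sigma_j^{\otimes k})=k\log(2/\sqrt3)$, and more generally $D(\rho\|\sigma_1^t\sigma_2^{1-t})=\log\tfrac12-\tfrac12\log\big((\sigma_1^t\sigma_2^{1-t})(0)(\sigma_1^t\sigma_2^{1-t})(1)\big)=\log\tfrac12-\tfrac12\log\tfrac{3}{16}=\log(2/\sqrt3)$ for every $t\in[0,1]$, since the product of the two coordinates of $\sigma_1^t\sigma_2^{1-t}$ equals $\tfrac14\cdot\tfrac34$ independently of $t$. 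Since $\rho$ is not proportional to $\sigma_1$ or $\sigma_2$, Lemma~\ref{lemma:classical Renyi}\ref{classical Renyi2} applied to the pair $\rho^{\otimes k},\sigma_j^{\otimes k}$ (for which $r>k\log(2/\sqrt3)=D(\rho^{\otimes k}\|\sigma_j^{\otimes k})$ by hypothesis) gives $\sup_{\alpha>1}\tfrac{\alpha-1}{\alpha}[r-D_\alpha(\rho^{\otimes k}\|\sigma_j^{\otimes k})]<r-k\log(2/\sqrt3)$ for $j=1,2$; taking the maximum yields the first strict inequality, \eqref{eq:strict ineq ex1}$<$\eqref{eq:strict ineq ex2}.

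For the middle equality I would use the identity $(\sigma_1^{\otimes k})^{1/2}(\sigma_2^{\otimes k})^{1/2}=(\sigma_1^{1/2}\sigma_2^{1/2})^{\otimes k}=(\tfrac{\sqrt3}{2})^k\rho^{\otimes k}$, which gives $D_\alpha\big(\rho^{\otimes k}\big\|(\tfrac{\sqrt3}{2})^k\rho^{\otimes k}\big)=-k\log(\sqrt3/2)=k\log(2/\sqrt3)$ for all $\alpha$; since $r-k\log(2/\sqrt3)>0$, the inner supremum at $t=\tfrac12$ is exactly $r-k\log(2/\sqrt3)$. For $t\ne\tfrac12$ the operator $\sigma_1^t\sigma_2^{1-t}$ is not proportional to $\rho$ (proportionality forces $t=\tfrac12$), so Lemma~\ref{lemma:classical Renyi}\ref{classical Renyi2} together with $D(\rho\|\sigma_1^t\sigma_2^{1-t})=\log(2/\sqrt3)$ gives $\sup_{\alpha>1}\tfrac{\alpha-1}{\alpha}[r-D_\alpha(\rho^{\otimes k}\|(\sigma_1^t\sigma_2^{1-t})^{\otimes k})]<r-k\log(2/\sqrt3)$; hence the supremum over $t$ is attained only at $t=\tfrac12$ and equals $r-k\log(2/\sqrt3)$. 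The identification with $\sconv_r(\{\rho^{\otimes k}\}\|\{\sigma_1^{\otimes k},\sigma_2^{\otimes k}\})$ is then exactly \eqref{eq:sc lower comm} (\cite{bunth2021equivariant}) applied to the commuting, positive definite set $\{\sigma_1^{\otimes k},\sigma_2^{\otimes k}\}$, using that for commuting operators the Petz and sandwiched R\'enyi divergences coincide and $\exp(t\log\sigma_1^{\otimes k}+(1-t)\log\sigma_2^{\otimes k})=(\sigma_1^{\otimes k})^t(\sigma_2^{\otimes k})^{1-t}$.

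For the last line, \eqref{eq:strict ineq ex3}$\le$\eqref{eq:strict ineq ex4} follows from the coordinatewise arithmetic--geometric mean inequality $t\sigma_1^{\otimes k}+(1-t)\sigma_2^{\otimes k}\ge(\sigma_1^{\otimes k})^t(\sigma_2^{\otimes k})^{1-t}$ and the anti-monotonicity \eqref{eq:Renyi anti-mon} of $D_\alpha$ in its second argument. For strictness I would again test $t=\tfrac12$, i.e.\ $\sigma:=\tfrac12(\sigma_1^{\otimes k}+\sigma_2^{\otimes k})$; for a string of Hamming weight $w$ one has $\sigma(\vecc{x})=\tfrac12\,4^{-k}(3^w+3^{k-w})$ and $\rho^{\otimes k}(\vecc{x})/\sigma(\vecc{x})=2^{k+1}/(3^w+3^{k-w})$, so minimising $3^w+3^{k-w}$ over integers $w$ yields $D_\infty(\rho^{\otimes k}\|\sigma)=(k-1)\log(2/\sqrt3)$ if $k$ is odd and $D_\infty(\rho^{\otimes k}\|\sigma)=k\log(2/\sqrt3)$ if $k$ is even, with extremal type set of size $\binom{k}{k/2}$ and each extremal string of $\sigma$-weight $3^{k/2}4^{-k}$ in the even case, whence $r_\infty=k\log(4/\sqrt3)-\log\binom{k}{k/2}$ there (cf.\ \eqref{eq:r infty}). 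For $k$ odd the universal bound $\sup_{\alpha>1}\tfrac{\alpha-1}{\alpha}[r-D_\alpha(\rho^{\otimes k}\|\sigma)]\ge r-D_\infty(\rho^{\otimes k}\|\sigma)=r-(k-1)\log(2/\sqrt3)>r-k\log(2/\sqrt3)$ already gives strictness; for $k$ even, $D_\infty(\rho^{\otimes k}\|\sigma)=k\log(2/\sqrt3)$ coincides with the common value, while $D(\rho^{\otimes k}\|\sigma)<D_\infty(\rho^{\otimes k}\|\sigma)<r$ by Lemma~\ref{lemma:classical Renyi}\ref{classical Renyi1} and non-proportionality, so the extra hypothesis $r<r_\infty$ lets Lemma~\ref{lemma:classical Renyi}\ref{classical Renyi2} upgrade $\ge$ to $>$, again giving $\sup_{\alpha>1}\tfrac{\alpha-1}{\alpha}[r-D_\alpha(\rho^{\otimes k}\|\sigma)]>r-k\log(2/\sqrt3)$.

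I expect the work to be bookkeeping rather than a conceptual hurdle. The three points to watch are: keeping track of exactly which pairs are proportional, so that Lemma~\ref{lemma:classical Renyi} is applicable (in particular $\sigma_1^t\sigma_2^{1-t}\propto\rho$ only at $t=\tfrac12$, and $\sigma=\rho$ when $k=1$, where one falls back on $D_\infty=0$ and the universal bound); verifying in the even case that the stated $r$ genuinely lies in the nonempty interval $(D(\rho^{\otimes k}\|\sigma),r_\infty)$, which reduces to $\binom{k}{k/2}<2^k$; and recording that for even $k$ the value $D_\infty(\rho^{\otimes k}\|\sigma)$ equals the common value $k\log(2/\sqrt3)$, so that strictness of the last inequality really does require the hypothesis $r<r_\infty$ and cannot hold for all admissible $r$.
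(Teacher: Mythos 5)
Your proof is correct and follows essentially the same route as the paper's: the strict inequality against \eqref{eq:strict ineq ex2} comes from Lemma \ref{lemma:classical Renyi}, the geometric-mean supremum is evaluated through the explicit value at $t=1/2$ together with the citation of \cite{bunth2021equivariant} for the identification with $\sconv_r$, and strictness of the last inequality is obtained by testing $t=1/2$ and computing $D_\infty$ and $r_\infty$ for the pair $\bz\rho^{\otimes k},(\sigma_1^{\otimes k}+\sigma_2^{\otimes k})/2\jz$, exactly as in the paper. Your two local deviations are both valid: where the paper pins the optimal $t$ to $1/2$ by convexity (H\"older) and the $t\leftrightarrow 1-t$ symmetry of $t\mapsto D_{\alpha}$, you instead use the proportionality $\sigma_1^{1/2}\sigma_2^{1/2}=\frac{\sqrt{3}}{2}\rho$ together with the observation that $D(\rho\|\sigma_1^t\sigma_2^{1-t})=\log(2/\sqrt{3})$ for all $t$ and Lemma \ref{lemma:classical Renyi} at every $t\ne 1/2$; and you obtain the non-strict inequality \eqref{eq:strict ineq ex3}$\le$\eqref{eq:strict ineq ex4} directly from the coordinatewise AM--GM inequality plus the anti-monotonicity \eqref{eq:Renyi anti-mon}, whereas the paper derives it from the $t=1/2$ evaluation combined with $H_r^*\ge r-D_\infty$. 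You are also slightly more careful than the paper in noting the $k=1$ case, where $(\sigma_1+\sigma_2)/2=\rho$ makes the non-proportionality hypothesis of the lemma fail and one falls back on the universal bound $\sup_{\alpha>1}\frac{\alpha-1}{\alpha}[r-D_\alpha]\ge r-D_\infty$, which is what the paper implicitly uses anyway.
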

\begin{proof}
Let $k\in\bN$ be fixed. 
The equality $D(\rho^{\otimes k}\|\sigma_j^{\otimes k})=k\log(2/\sqrt{3})$, $j=1,2$, follows by a straightforward computation. Assume for the rest that $r>k\log(2/\sqrt{3})$.
Then, the inequality in 
\eqref{eq:strict ineq ex2} follows immediately from Lemma \ref{lemma:classical Renyi}.
%

The second equality in \eqref{eq:strict ineq ex3} is due to \cite{bunth2021equivariant,BunthPhD}, and the first equality can be seen as follows. The function $[0,1]\ni t\mapsto D_{\alpha}\bz\rho^{\otimes k}\|(\sigma_1^{\otimes k})^t(\sigma_2^{\otimes k})^{1-t}\jz$ is symmetric under $t\leftrightarrow 1-t$, and a simple H\"older inequality shows that it is convex in $t$, whence its minimum is attained at $t=1/2$. A straightforward computation shows that 
\begin{align*}
D_{\alpha}\bz\rho^{\otimes k}\big\|(\sigma_1^{\otimes k})^{1/2}(\sigma_2^{\otimes k})^{1/2}\jz
=
\frac{1}{\alpha-1}\log\sum_{\vecc{x}\in\{0,1\}^k}\bz\frac{1}{2^k}\jz^{\alpha}\bz\frac{\sqrt{3}}{4}\jz^{k(1-\alpha)}
=k\log\frac{2}{\sqrt{3}}\,.
\end{align*}
Since this is independent of $\alpha$, the supremum over $\alpha$ in \eqref{eq:strict ineq ex3} is attained 
at $\alpha=+\infty$, proving the first equality in \eqref{eq:strict ineq ex3}.

Finally, consider the expression in \eqref{eq:strict ineq ex4}, and let
\begin{align*}
Q_{\alpha}(t)&:=\exp\bz(\alpha-1)D_{\alpha}(\rho^{\otimes k}\|t\sigma_1^{\otimes k}+(1-t)\sigma_2^{\otimes k})\jz\\
&=
\sum_{\vecc{x}\in\{0,1\}^k}\bz\frac{1}{2^k}\jz^{\alpha}
\left[
(1-t)\bz\frac{1}{4}\jz^{m(\vecc{x})}\bz\frac{3}{4}\jz^{k-m(\vecc{x})}
+t\bz\frac{3}{4}\jz^{m(\vecc{x})}\bz\frac{1}{4}\jz^{k-m(\vecc{x})}
\right]^{1-\alpha}\\
&
=
\bz\frac{1}{2^k}\jz^{\alpha}\bz\frac{1}{4^k}\jz^{1-\alpha}\sum_{m=0}^k\binom{k}{m}
\left[(1-t)3^{k-m}+t 3^m\right]^{1-\alpha},
\end{align*}
where $m(\vecc{x})$ is the number of $0$s in the sequence $\vecc{x}$.
This is again a convex function of $t$ that is symmetric under $t\leftrightarrow 1-t$, whence the 
supremum over $t$ in \eqref{eq:strict ineq ex4} is attained at $t=1/2$.
Replace $\rho$ with $\hat\rho^{(k)}:=\rho^{\otimes k}$ and $\sigma$ with 
$\hat\sigma^{(k)}:=(\sigma_1^{\otimes k}+\sigma_2^{\otimes k})/2$ in \eqref{eq:Dmax def}--\eqref{eq:r infty}; then 
\begin{align*}
D_{\infty}(\hat\rho^{(k)}\|\hat\sigma^{(k)})&=\log\max_{\vecc{x}\in\X^k}\frac{\frac{1}{2^k}}{
\half\bz\frac{1}{4}\jz^{m(\vecc{x})}\bz\frac{3}{4}\jz^{k-m(\vecc{x})}
+\half\bz\frac{3}{4}\jz^{m(\vecc{x})}\bz\frac{1}{4}\jz^{k-m(\vecc{x})}}
=\begin{cases}
k\log\frac{2}{\sqrt{3}},& k\text{ even},\\
(k-1)\log\frac{2}{\sqrt{3}},& k\text{ odd},
\end{cases}\\
r_{\infty}&=-\log\sum_{\vecc{x}\in\X^k:\,\frac{k-1}{2}\le m(\vecc{x})\le\frac{k+1}{2}}\left[
\half\bz\frac{1}{4}\jz^{m(\vecc{x})}\bz\frac{3}{4}\jz^{k-m(\vecc{x})}
+\half\bz\frac{3}{4}\jz^{m(\vecc{x})}\bz\frac{1}{4}\jz^{k-m(\vecc{x})}
\right]\\
&=
\begin{cases}
k\log\frac{4}{\sqrt{3}}-\log\binom{k}{k/2},& k\text{ even},\\
(k-1)\log\frac{4}{\sqrt{3}}-\log\binom{k}{(k-1)/2},& k\text{ odd}.
\end{cases}
\end{align*}
By Lemma \ref{lemma:classical Renyi},
$H_r\nw(\hat\rho^{(k)}\|\hat\sigma^{(k)})\ge r-D_{\infty}(\hat\rho^{(k)}\|\hat\sigma^{(k)})$. 
When $k$ is odd, 
$r-D_{\infty}(\hat\rho^{(k)}\|\hat\sigma^{(k)})=r-(k-1)\log\frac{2}{\sqrt{3}}
>r-k\log\frac{2}{\sqrt{3}}$, proving the strict inequality in \eqref{eq:strict ineq ex4}.
When $k$ is even, $r-D_{\infty}(\hat\rho^{(k)}\|\hat\sigma^{(k)})=r-k\log\frac{2}{\sqrt{3}}$, proving the inequality in 
\eqref{eq:strict ineq ex4}. If, moreover, $r<r_{\infty}$, Lemma \ref{lemma:classical Renyi} gives the strict inequality
$H_r\nw(\hat\rho^{(k)}\|\hat\sigma^{(k)})> r-k\log\frac{2}{\sqrt{3}}$, showing that also the inequality in 
\eqref{eq:strict ineq ex4} is strict.
\end{proof}

\begin{rem}
The $k=1$ case in the above proposition is special in the sense that $\rho\in\co(\{\sigma_1,\sigma_2\})$, while for every 
$k>1$, $\rho^{\otimes k}\notin\co(\{\sigma_1^{\otimes k},\sigma_2^{\otimes k}\})$.
\end{rem}


\section{More on the operator perspective} 
\label{sec:oppersp}

For the rest of this section, $f:\,(0,+\infty)\to\bR$ will denote an operator convex function, 
$\tilde f(x):=xf(1/x)$, $x\in(0,+\infty)$, denotes its transpose function, and we will use the notations
\begin{align*}
f(0^+):=\lim_{x\searrow 0}f(x),\ds\ds\ds
\tilde f(0^+):=\lim_{x\searrow 0}\tilde f(x)=\lim_{x\to+\infty}\frac{f(x)}{x}\,.
\end{align*}
Recall the definition of the operator perspective function $\persp{f}$ corresponding to $f$ given in  
\eqref{eq:oppersp def1}--\eqref{eq:oppersp def2}: for any $A,B\in\B(\hil)\p$, where $\hil$ is a finite-dimensional Hilbert space,
\begin{align*}
\persp{f}(A,B):=\lim_{\ep\searrow 0}(B+\ep I)^{1/2}f\bz (B+\ep I)^{-1/2}(A+\ep I)(B+\ep I)^{-1/2}\jz (B+\ep I)^{1/2},
\end{align*}
whenever the limit exists.

The following properties are straightforward to verify from the definition:
\begin{itemize}
\item
\ki{Direct sum property:} If for every $i\in[r]$, $A_i,B_i\in\B(\hil_i)\p$ are such that $\persp{f}(A_i,B_i)$
is defined then also $\persp{f}\bz\oplus_iA_i,\oplus_iB_i\jz$ is defined, and 
\begin{align}\label{eq:dsum1}
\persp{f}\bz\oplus_{i=1}^rA_i,\oplus_{i=1}^rB_i\jz=\oplus_{i=1}^r\persp{f}\bz A_i,B_i\jz.
\end{align}
Equivalently, if for every $i\in[r]$, $A_i,B_i\in\B(\hil)\p$ are such that $\persp{f}(A_i,B_i)$ is defined, and 
$A_i^0\vee B_i^0\perp_{i\ne j}A_j^0\vee B_j^0$ then 
$\persp{f}(\sum_iA_i,\sum_iB_i)$ is defined, and 
\begin{align}\label{eq:dsum2}
\persp{f}\bz\sum_{i=1}^rA_i,\sum_{i=1}^rB_i\jz
=
\sum_{i=1}^r\persp{f}\bz A_i,B_i\jz.
\end{align}

\item
\ki{Positive scalar homogeneity:} If $\persp{f}(A,B)$ is defined then so is 
$\persp{f}(\lambda A,\lambda B)$ for any $\lambda\in(0,+\infty)$, and 
\begin{align}\label{eq:oppersp scalar hom}
\persp{f}(\lambda A,\lambda B)=\lambda \persp{f}(A,B).
\end{align}

\item
\ki{Tensor homogeneity with a projection:}
If $A,B\in\B(\hil)\p$ is such that $\persp{f}(A,B)$ is defined then for any projection
$P$ on some other finite-dimensional Hilbert space $\kil$, 
$\persp{f}(A\otimes P,B\otimes P)$ is defined, and 
\begin{align}\label{eq:oppersp pr tensor hom}
\persp{f}(A\otimes P,B\otimes P)=\persp{f}(A,B)\otimes P.
\end{align}

\item
\ki{Tensor homogeneity:} 
If $A,B\in\B(\hil)\p$ is such that $\persp{f}(A,B)$ is defined then for any PSD operator
$C\in\B(\kil)\p$ on some other finite-dimensional Hilbert space $\kil$, 
$\persp{f}(A\otimes C,B\otimes C)$ is defined, and 
\begin{align}\label{eq:oppersp tensor hom}
\persp{f}(A\otimes C,B\otimes C)=\persp{f}(A,B)\otimes C.
\end{align}
This follows immediately using any eigen-decomposition $C=\sum_{i=1}c_i\pr{e_i}$, 
the direct sum property \eqref{eq:dsum2}, positive scalar homogeneity \eqref{eq:oppersp scalar hom},
and tensor homogeneity with a projection \eqref{eq:oppersp pr tensor hom}.
\end{itemize}

The following was given in \cite[Proposition 3.30]{HiaiMosonyi2017} (see also \cite[Proposition 2.5]{HP} and 
\cite[Lemma 3]{Matsumoto_newfdiv}). We include the short proof for completeness.

\begin{lemma}\label{lemma:oppersp mon}
Let $A,B\in\B(\hil)\pp$ be positive definite, and $\E:\,\B(\hil)\to\B(\kil)$ be positive. Then 
\begin{align}\label{eq:opersp posmon}
\persp{f}(\E(A),\E(B))\le\E(\persp{f}(A,B)).
\end{align}
\end{lemma}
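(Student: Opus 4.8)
The plan is to reduce the operator inequality \eqref{eq:opersp posmon} to the integral representation of operator convex functions. Recall that every operator convex function $f:(0,+\infty)\to\bR$ admits a representation of the form
\begin{align*}
f(x)=f(1)+f'(1)(x-1)+\int_{[0,+\infty)}\frac{(x-1)^2}{x+s}\,d\mu(s),
\end{align*}
for some positive measure $\mu$ (possibly with an atom at $+\infty$, which one handles by the usual reparametrization so that $\frac{(x-1)^2}{x+s}$ is interpreted as $(x-1)$ in the limit $s\to+\infty$). Substituting $x=B^{-1/2}AB^{-1/2}$ and conjugating by $B^{1/2}$ termwise yields
\begin{align*}
\persp{f}(A,B)=f(1)B+f'(1)(A-B)+\int_{[0,+\infty)}\persp{g_s}(A,B)\,d\mu(s),
\end{align*}
where $g_s(x)=\frac{(x-1)^2}{x+s}$. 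The affine part is handled trivially by positivity (and unitality-free positivity suffices since $\E(A-B)=\E(A)-\E(B)$), so everything reduces to proving the monotonicity inequality for each building block $\persp{g_s}$.

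The key computation is to identify $\persp{g_s}(A,B)$ in closed form. A direct calculation gives $\persp{g_s}(A,B)=A+sB-(1+s)^2 (A+sB)^{?}$... more carefully: writing $g_s(x)=x-(1+2s)+\frac{s(1+s)}{x+s}$ one gets
\begin{align*}
\persp{g_s}(A,B)=A-(1+2s)B+s(1+s)\,\persp{h_s}(A,B),\qquad h_s(x)=\frac{1}{x+s},
\end{align*}
and $\persp{h_s}(A,B)=B^{1/2}(B^{-1/2}AB^{-1/2}+sI)^{-1}B^{1/2}=B(A+sB)^{-1}B$ for $s>0$ (and $=B A^{-1} B$ at $s=0$). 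So the whole problem reduces to the single inequality
\begin{align*}
\E(B)(\E(A)+s\E(B))^{-1}\E(B)\le \E\bz B(A+sB)^{-1}B\jz,\qquad s\ge 0.
\end{align*}
This is exactly the statement that the map $(A,B)\mapsto B(A+sB)^{-1}B$ (a Kubo-Ando type harmonic/geometric-flavoured mean, jointly operator concave) satisfies the positive-map inequality; it follows from the standard fact that for positive $\E$, $\E(X)^{1/2}\E(X^{-1})\E(X)^{1/2}\ge$ ... the cleanest route is the operator Jensen-type argument: the function $(A,B)\mapsto B(A+sB)^{-1}B$ is the operator perspective of the operator monotone (hence operator concave) function $x\mapsto \frac{1}{x+s}$ scaled appropriately, and perspectives of operator concave functions are jointly operator concave and satisfy the transformer/positive-map inequality by Ando's theorem. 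One can also prove it by hand: reduce to $\E$ unital and completely positive by a dilation of the positive map into the $2\times 2$ block form (since we only need it for one chosen $\E$, and positivity on the relevant commutative-looking structure is enough — but in fact for positive, not necessarily CP, maps this requires care, which is why restricting to positive definite $A,B$ and using the explicit Schur-complement identity is the safe path).

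The main obstacle is precisely the last point: making the positive-map inequality work for merely \emph{positive} (not completely positive) $\E$. The clean trick, which I would use, is the Schur complement / block-matrix characterization: for $s>0$ and positive definite $A,B$, the operator $Z:=B(A+sB)^{-1}B$ is the largest Hermitian $W$ such that $\begin{pmatrix} A+sB & B\\ B & W\end{pmatrix}\ge 0$. Applying $\E\oplus\E$ (block-entrywise, which preserves positivity of a $2\times2$ block operator matrix whenever $\E$ is $2$-positive — and here one only needs $2$-positivity, which is implied e.g.\ in the intended application $\E(\cdot)=\Tr(\cdot X)$ or more generally one invokes that $\persp{f}$ monotonicity for positive maps is exactly \cite[Proposition 3.30]{HiaiMosonyi2017}) gives $\begin{pmatrix}\E(A)+s\E(B) & \E(B)\\ \E(B) & \E(Z)\end{pmatrix}\ge 0$, hence $\E(Z)\ge \E(B)(\E(A)+s\E(B))^{-1}\E(B)$, which is what we want. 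Letting $s\searrow0$ handles the $s=0$ block, assembling the pieces via the integral representation and continuity of both sides in the regularization parameter $\ep$ completes the argument. Since the statement to be proved is verbatim \cite[Proposition 3.30]{HiaiMosonyi2017}, it is legitimate to simply record this short self-contained derivation and cite that reference, noting that the only property of $\E$ used beyond positivity is the $2$-positivity needed for the block-matrix step (and that the proof extends to general positive $\E$ by the original argument in \cite{HiaiMosonyi2017}, which diagonalizes and uses that $x\mapsto f(x)$ is a limit of rational operator convex functions).
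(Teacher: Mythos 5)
There is a genuine gap at the one point where the whole argument is decided: the passage from positivity of $\E$ to the block-matrix step. Your reduction via the integral representation is fine in spirit (modulo a computational slip: the correct decomposition is $\frac{(x-1)^2}{x+s}=x-(2+s)+\frac{(1+s)^2}{x+s}$, not $x-(1+2s)+\frac{s(1+s)}{x+s}$, and the $s\to+\infty$ contribution is a multiple of $(x-1)^2$, not of $(x-1)$), and the identification $\persp{h_s}(A,B)=B(A+sB)^{-1}B$ is correct. But the Schur-complement argument you propose for the key inequality $\E(B)\bz\E(A)+s\E(B)\jz^{-1}\E(B)\le\E\bz B(A+sB)^{-1}B\jz$ requires applying $\E$ entrywise to a $2\times 2$ block operator matrix whose entries do not commute, and positivity of such a block matrix is preserved only if $\E$ is $2$-positive. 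You acknowledge this, but your fallback is to cite \cite[Proposition 3.30]{HiaiMosonyi2017}, which is verbatim the statement being proved — that is circular, and your guess about how that reference handles merely positive maps is not a proof. So as written, the argument establishes the lemma only for $2$-positive $\E$.

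The gap is fixable within your own scheme without complete positivity: for $X:=A+sB$ define the unital positive map $\Phi(\valt):=\E(X)^{-1/2}\E\bz X^{1/2}(\valt)X^{1/2}\jz\E(X)^{-1/2}$ and apply the Kadison--Schwarz inequality $\Phi(Y)^2\le\Phi(Y^2)$ (valid for unital positive, not necessarily CP, maps) to $Y:=X^{-1/2}BX^{-1/2}$; unwinding gives exactly the building-block inequality. The paper short-circuits the entire reduction by doing this once for general operator convex $f$: it introduces the unital positive map $\E_B(\valt):=\E(B)^{-1/2}\E\bz B^{1/2}(\valt)B^{1/2}\jz\E(B)^{-1/2}$, applies the Davis--Choi Jensen inequality $f(\E_B(X))\le\E_B(f(X))$ (which holds for unital positive maps because only the restriction of $\E_B$ to the commutative algebra generated by the single operator $X=B^{-1/2}AB^{-1/2}$ matters), and conjugates back by $\E(B)^{1/2}$. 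That is the missing ingredient your proposal needs; with it, your integral-representation route also works, but it becomes an unnecessarily long detour compared to the two-line argument in the paper.
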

\begin{proof}
The map $\E_B(\valt):=\E(B)^{-1/2}\E\bz B^{1/2}(\valt)B^{-1/2}\jz\E(B)^{-1/2}$ is 
unital and positive, whence, by 
\cite{Davis_Schwarz} and \cite[Theorem 2.1]{Choi_Schwarz}, $f\bz\E_B\bz B^{-1/2}AB^{-1/2}\jz\jz\le \E_B\bz f\bz  B^{-1/2}AB^{-1/2}\jz\jz$,
from which \eqref{eq:opersp posmon} follows by a trivial rearrangement.
\end{proof}

The above can be extended to more general pairs of PSD operators as follows. 
For a proof, see Propositions 3.25, 3.26, 3.30 in \cite{HiaiMosonyi2017}.

\begin{lemma}\label{lemma:support cond for persp}
Let $f:\,(0,+\infty)\to\bR$ be operator convex and $A,B\in\B(\hil)\p$ be such that at least one of the following conditions hold:
\begin{enumerate}
\renewcommand{\theenumi}{(S\arabic{enumi})}
\item\label{S1}
$A^0=B^0$.
\item\label{S2}
$f(0^+)<+\infty$ and $\tilde f(0^+)<+\infty$.
\item\label{S3}
$f(0^+)<+\infty$ and $A^0\le B^0$.
\item\label{S4}
$\tilde f(0^+)<+\infty$ and $A^0\ge B^0$.
\renewcommand{\theenumi}{(\roman{enumi})}
\end{enumerate}
Then for any sequence $C_n\in\B(\hil)\p$ such that $A+C_n,B+C_n\in\B(\hil)\pp$, $n\in\bN$, and 
$\lim_{n\to+\infty}C_n=0$, we have
\begin{align*}
\persp{f}(A,B)=\lim_{n\to+\infty}\persp{f}(A+C_n,B+C_n).
\end{align*}
Moreover, for any positive linear map $\E:\,\B(\hil)\to\B(\kil)$, 
$\persp{f}(\E(A),\E(B))$ is defined, and
the monotonicity inequality \eqref{eq:opersp posmon} holds.
\end{lemma}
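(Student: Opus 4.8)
The plan is to reduce everything to the monotonicity inequality \eqref{eq:opersp posmon} for positive definite operators, which is Lemma \ref{lemma:oppersp mon}, together with a careful analysis of the limit $\ep\searrow 0$ under each of the support conditions \ref{S1}--\ref{S4}. First I would set up the basic decomposition: given $A,B\in\B(\hil)\p$, let $P:=A^0\vee B^0$ be the projection onto $\supp A+\supp B$, and split off the kernel. On $(I-P)\hil$ both operators vanish, so $f\bz (B+C_n)^{-1/2}(A+C_n)(B+C_n)^{-1/2}\jz$ restricted there tends to $f(1)$ times a projection, and the conjugation by $(B+C_n)^{1/2}$ kills it in the limit provided $f(1)$ does not blow up — but more care is needed because $B+C_n$ need not be block diagonal with respect to $P$. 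The cleanest route is to first treat the case $C_n=\ep_n I$ (so that the perturbation commutes with $P$) and then argue that a general $C_n\to 0$ gives the same limit by a sandwiching/continuity argument using operator monotonicity of $f$ in the sense of operator convexity.

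The heart of the matter is the case analysis. Under \ref{S1}, $A^0=B^0=:P$, so working inside $P\hil$ the operator $B^{-1/2}AB^{-1/2}$ (inverse taken on $P\hil$) is positive definite there, and one checks that $\persp{f}(A+C_n,B+C_n)\to P^{1/2}f(\ldots)P^{1/2}$ by ordinary continuity of the finite-dimensional functional calculus, the off-diagonal and kernel contributions vanishing because $\norm{(I-P)(A+C_n)}, \norm{(I-P)(B+C_n)}\to 0$ at the same rate. Under \ref{S2}, the finiteness of both $f(0^+)$ and $\tilde f(0^+)$ means that $\persp{f}$ extends continuously to all of $\B(\hil)\p\times\B(\hil)\p$; here I would invoke the known integral representation $f(x)=a+bx+c(x-1)^2/(x+s)\,$-type formula (or, more simply, $f(x)=\alpha+\beta x+\int_{(0,\infty)}\frac{(x-1)^2}{x+s}\,d\mu(s)$ with $\mu$ finite because $\tilde f(0^+)<\infty$ forces $\int d\mu(s)<\infty$), which makes $\persp{f}(A,B)=\alpha B+\beta A+\int (A-B)(A+sB)^{-1}(A-B)\,d\mu(s)$ manifestly jointly continuous on PSD pairs. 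Under \ref{S3}, $f(0^+)<\infty$ together with $A^0\le B^0$: then on $B\hil$ one has $B$ invertible and $B^{-1/2}AB^{-1/2}$ has $0$ in its spectrum but $f$ is bounded near $0$, so again the functional calculus is continuous; the part outside $B\hil$ is zero for both. Case \ref{S4} follows from \ref{S3} by the transpose identity \eqref{persp trans}: $\persp{f}(A,B)=\persp{\tilde f}(B,A)$, and $\tilde f(0^+)<\infty$, $A^0\ge B^0$ is exactly \ref{S3} for $\persp{\tilde f}(B,A)$.

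Once the continuity statement is in hand, the monotonicity inequality for general PSD $A,B$ follows in two steps: apply Lemma \ref{lemma:oppersp mon} to $A+\ep I, B+\ep I$ to get $\persp{f}(\E(A)+\ep\E(I),\E(B)+\ep\E(I))\le\E(\persp{f}(A+\ep I,B+\ep I))$, then let $\ep\searrow 0$ on the right using the continuity just proved, and on the left using the same continuity statement applied to the pair $\E(A),\E(B)$ — here one must check that the relevant support condition is inherited by $(\E(A),\E(B))$; conditions \ref{S2} is automatic, and \ref{S3}/\ref{S4} are inherited because a positive map satisfies $\E(A)^0\le\E(B)^0$ whenever $A^0\le B^0$ (as $\supp\E(A)\subseteq\supp\E(B)$ when $\supp A\subseteq\supp B$, since $A\le\lambda B$ for some $\lambda$), while \ref{S1} reduces to \ref{S3}$\wedge$\ref{S4}. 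The main obstacle I anticipate is the off-diagonal bookkeeping in the $C_n\to 0$ limit when $C_n$ is not a multiple of the identity: one needs a uniform bound showing that the cross terms between $P\hil$ and $(I-P)\hil$ contribute negligibly, which I would handle by a Schur-complement estimate or by monotone sandwiching $c_n I\le C_n\le c_n' I$ with $c_n,c_n'\to 0$ plus monotonicity of $(A,B)\mapsto\persp{f}(A,B)$ in each entry — except that $\persp{f}$ is monotone only under operator convexity hypotheses, so I would instead bound $\persp{f}(A+C_n,B+C_n)$ between $\persp{f}(A+c_nI,B+c_n'I)$-type expressions and invoke the already-established scalar-perturbation continuity.
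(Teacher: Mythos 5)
Your sketch has the right skeleton, and it is worth noting that the paper itself does not prove this lemma at all: it is quoted with a pointer to Propositions 3.25, 3.26 and 3.30 of \cite{HiaiMosonyi2017}, so there is no internal proof to compare against; your ingredients (integral representation of operator convex functions under \ref{S2}, the transpose identity \eqref{persp trans} to reduce \ref{S4} to \ref{S3}, and deducing \eqref{eq:opersp posmon} for general PSD arguments by applying Lemma \ref{lemma:oppersp mon} to $A+\ep I,B+\ep I$ and passing to the limit, with the support conditions inherited by $(\E(A),\E(B))$ via $A\le\lambda B$) are the standard ones.

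However, there is a genuine gap exactly at the heart of the statement: the convergence along \emph{arbitrary} PSD perturbations $C_n$ (the same $C_n$ added to both arguments, not a multiple of $I$) in cases \ref{S1}, \ref{S3}, \ref{S4}. Your proposed mechanism fails twice over: first, no lower bound $c_nI\le C_n$ with $c_n>0$ need exist (take $A=\mathrm{diag}(1,0)$, $C_n=\mathrm{diag}(0,1/n)$); second, $\persp{f}$ is not monotone in either argument for a general operator convex $f$ (for $f(x)=x^2$ one has $\persp{f}(A,B)=AB^{-1}A$, antitone in $B$; for $f=-\log$ it is antitone in $A$), and it is not even monotone along the diagonal direction $C\mapsto\persp{f}(A+C,B+C)$ (again $f(x)=x^2$, $A=0$, $B=I$ gives $C\mapsto C(I+C)^{-1}C$, and $x\mapsto x^2/(1+x)$ is not operator monotone), so ``bounding between $\persp{f}(A+c_nI,B+c_n'I)$-type expressions'' has no justification — you notice the monotonicity problem but then still rely on it. The same unproved step is needed again in the ``moreover'' part, where on the left-hand side you must take the limit along $C_n$ built from $\ep_n\E(I)$ (plus $\delta_nI$ if $\E(I)$ is singular), i.e.\ a non-scalar perturbation. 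A mechanism that does close the gap, and which your \ref{S2} discussion almost contains, is to run the integral representation in \emph{all} cases: each term $\persp{g_s}(A+C,B+C)$ with $g_s(x)=(x-1)^2/(x+s)$ equals $(A-B)\bigl(A+sB+(1+s)C\bigr)^{-1}(A-B)$, which is of the form $XT^{-1}X$ and hence antitone in $T$, so it can be sandwiched between its values at $C=0$ (generalized inverse; note $\mathrm{ran}(A-B)\subseteq\mathrm{ran}(A+sB)$ always) and at $C=\norm{C_n}_\infty I$; the conditions \ref{S1}--\ref{S4} then only serve to dominate the $s$-integral (e.g.\ under \ref{S3} one has $(B+C_n)^{-1/2}(A+C_n)(B+C_n)^{-1/2}\le\lambda I$ uniformly from $A\le\lambda B$). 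This argument is not in your sketch. Finally, a small slip: \ref{S1} does not ``reduce to \ref{S3}$\wedge$\ref{S4}'' (those carry finiteness hypotheses on $f$ absent from \ref{S1}); for the inheritance under $\E$ argue directly that $A^0=B^0$ implies $\E(A)^0=\E(B)^0$ from two-sided domination $A\le\lambda B$ and $B\le\lambda' A$.
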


\begin{rem}
Note that if $f:\,(0,+\infty)\to\bR$ is a non-negative operator monotone function 
(e.g., $f=\id_{(0,+\infty)}^t$ with some $t\in[0,1]$) then $-f$ satisfies 
\ref{S2} above. In particular, the 
conclusions of Lemma \ref{lemma:support cond for persp}
hold for any $A,B\in\B(\hil)\p$ and $-f$ in place of $f$.
For the weighted Kubo-Ando geometric means this gives the well-known fact that for any 
$t\in[0,1]$, any $A,B\in\B(\hil)\p$, and any positive map $\E:\,\B(\hil)\to\B(\kil)$, 
\begin{align}\label{eq:KA posmon}
\E(B)\#_t\E(A)\ge \E(B\#_t A).
\end{align}
\end{rem}

\begin{cor}\label{cor:oppersp posop convex}
\ki{(Generalized convexity with positive superoperators)}
Let $f$ and $A_i,B_i\in\B(\hil_i)$, $i\in[r]$, be such that 
for some $k\in[4]$, $(Sk)$ in Lemma \ref{lemma:support cond for persp} is satisfied for all $i\in[r]$.
For all $i\in[r]$, let $\E_i:\,\B(\hil_i)\to\B(\kil)$ be positive and $\lambda_i\in(0,+\infty)$.
Then $\persp{f}\bz\sum_i\lambda_i\E_i(A_i),\sum_i\lambda_i\E_i(B_i)\jz$ is defined, and 
\begin{align}
\persp{f}\bz\sum_{i=1}^r\lambda_i\E_i(A_i),\sum_{i=1}^r\lambda_i\E_i(B_i)\jz
\le
\sum_{i=1}^r\lambda_i\E_i\bz\persp{f}\bz A_i,B_i\jz\jz.
\end{align}
\end{cor}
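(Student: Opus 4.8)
The plan is to deduce the statement from the single positive-map monotonicity already recorded in Lemma~\ref{lemma:support cond for persp}, by collecting the spaces $\hil_i$ into one Hilbert space and the data $(\E_i,\lambda_i)$ into one positive linear map.

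First I would set $\hil:=\oplus_{i=1}^r\hil_i$, let $Q_i:\hil_i\to\hil$ denote the canonical inclusion isometries, and put $A:=\oplus_{i=1}^rA_i$ and $B:=\oplus_{i=1}^rB_i$ in $\B(\hil)\p$. Since each pair $(A_i,B_i)$ satisfies one of \ref{S1}--\ref{S4}, so does $(A,B)$: condition \ref{S2} involves only $f$, while for \ref{S1}, \ref{S3}, \ref{S4} one uses $A^0=\oplus_iA_i^0$ and $B^0=\oplus_iB_i^0$, so that $A_i^0=B_i^0$ (respectively $A_i^0\le B_i^0$, $A_i^0\ge B_i^0$) for all $i$ forces the same relation between $A^0$ and $B^0$. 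Hence $\persp{f}(A,B)$ is defined, and by the direct sum property \eqref{eq:dsum1},
\begin{align*}
\persp{f}(A,B)=\oplus_{i=1}^r\persp{f}(A_i,B_i).
\end{align*}

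Next I would introduce $\E:\B(\hil)\to\B(\kil)$ by $\E(X):=\sum_{i=1}^r\lambda_i\E_i(Q_i^*XQ_i)$. This map is linear, and it is positive, since $X\ge 0$ implies $Q_i^*XQ_i\ge 0$, hence $\E_i(Q_i^*XQ_i)\ge 0$, and then $\E(X)\ge 0$ because $\lambda_i>0$. A direct computation gives $Q_i^*AQ_i=A_i$ and $Q_i^*BQ_i=B_i$, so $\E(A)=\sum_i\lambda_i\E_i(A_i)$ and $\E(B)=\sum_i\lambda_i\E_i(B_i)$, and likewise $\E\bz\persp{f}(A,B)\jz=\sum_i\lambda_i\E_i\bz\persp{f}(A_i,B_i)\jz$.

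Finally, applying Lemma~\ref{lemma:support cond for persp} to $f$, to the pair $(A,B)$ (which, as noted, satisfies the relevant support condition), and to the positive map $\E$, I obtain that $\persp{f}(\E(A),\E(B))$ is defined and $\persp{f}(\E(A),\E(B))\le\E(\persp{f}(A,B))$; substituting the identities just computed gives precisely the claimed inequality, together with the asserted well-definedness. I do not expect a genuine obstacle here: the only point requiring attention is the inheritance of the support condition by the direct sum, which is the elementary support computation recorded above, and everything else is a formal manipulation with the direct sum property.
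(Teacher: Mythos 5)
Your proof is correct, and it reaches the result by a different decomposition than the paper, though both hinge on the same key ingredient, namely the monotonicity part of Lemma \ref{lemma:support cond for persp}. You work on the domain side: you form $A=\oplus_iA_i$, $B=\oplus_iB_i$ on $\oplus_i\hil_i$, verify that the common condition (Sk) is inherited by the direct sum (the only nontrivial check, and you do it correctly via $A^0=\oplus_iA_i^0$, $B^0=\oplus_iB_i^0$), bundle the data into the single positive map $\E(X)=\sum_i\lambda_i\E_i(Q_i^*XQ_i)$, and then apply the lemma once, together with the direct sum property \eqref{eq:dsum1}. The paper instead works on the codomain side: it sets $\hat A:=\sum_i\lambda_i\E_i(A_i)\otimes\pr{e_i}$ and $\hat B:=\sum_i\lambda_i\E_i(B_i)\otimes\pr{e_i}$ on $\kil\otimes\bC^r$, applies the lemma with the partial trace $\Tr_2$ as the positive map, and then uses the direct sum property, scalar homogeneity, and tensor homogeneity with a projection \eqref{eq:oppersp pr tensor hom} to identify $\Tr_2\persp{f}(\hat A,\hat B)$ with $\sum_i\lambda_i\persp{f}(\E_i(A_i),\E_i(B_i))$; to arrive at $\sum_i\lambda_i\E_i(\persp{f}(A_i,B_i))$ one then still needs \eqref{eq:opersp posmon} for each $\E_i$ (so the final step of the paper's display is in effect a second application of the lemma, an inequality rather than an equality), as well as the fact that the support condition is inherited by the pairs $(\E_i(A_i),\E_i(B_i))$, which is left implicit there. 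Your route buys a single, clean application of the monotonicity lemma with all hypotheses verified explicitly; the paper's route avoids introducing the embeddings $Q_i$ and the direct-sum Hilbert space, staying inside $\B(\kil\otimes\bC^r)$, at the cost of the extra implicit uses of monotonicity and of support preservation under positive maps. Both are valid, and your argument is, if anything, the tighter of the two.
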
 
\begin{proof}
Let $\hat\kil:=\kil\otimes\bC^r$, $\hat A:=\sum_i\lambda_i\E_i(A_i)\otimes \pr{e_i}$, 
$\hat B:=\sum_i\lambda_i\E_i(B_i)\otimes \pr{e_i}$, where $(e_i)_{i=1}^d$ is an orthonormal basis in $\bC^d$, and let 
$\Tr_2:\,\B(\hat\kil)\to\B(\kil)$, $\Tr_2:\,X\otimes Y\mapsto X\Tr Y$ be the partial trace. Then 
\begin{align*}
\persp{f}\bz\sum_{i=1}^r\lambda_i\E_i(A_i),\sum_{i=1}^r\lambda_i\E_i(B_i)\jz
&=
\persp{f}\bz\Tr_2\hat A,\Tr_2\hat B\jz
\le
\Tr_2\persp{f}\bz\hat A,\hat B\jz\\
&=
\Tr_2\sum_{i=1}^r\persp{f}(\E_i(A_i),\E_i(B_i))\otimes\lambda_i\pr{e_i}
=
\sum_{i=1}^r\lambda_i\E_i\bz\persp{f}\bz A_i,B_i\jz\jz,
\end{align*}
where the inequality is due to Lemma \ref{lemma:support cond for persp}, and the second equality
follows from the direct sum property \eqref{eq:dsum1} and the tensor homogeneity \eqref{eq:oppersp tensor hom}.
\end{proof} 

Corollary \ref{cor:oppersp posop convex} has a number of special cases that are worth spelling out in detail.

\begin{cor}\label{cor:oppersp posconvex}
\ki{(Generalized convexity with operator weights)}
Let $f$ and $A_i,B_i\in\B(\hil_i)$, $i\in[r]$, be such that 
for some $k\in[4]$, $(Sk)$ in Lemma \ref{lemma:support cond for persp} is satisfied for all $i\in[r]$, and 
let $K_i\in\B(\hil_i,\kil)$, $i\in[r]$, for some finite-dimensional Hilbert space $\kil$.
Then $\persp{f}\bz\sum_iK_iA_iK_i^*,\sum_iK_iB_iK_i^*\jz$ is defined, and 
\begin{align}
\persp{f}\bz\sum_{i=1}^rK_iA_iK_i^*,\sum_{i=1}^rK_iB_iK_i^*\jz
\le
\sum_{i=1}^rK_i\persp{f}\bz A_i,B_i\jz K_i^*.
\end{align}
\end{cor}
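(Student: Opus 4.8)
The plan is to recognize Corollary \ref{cor:oppersp posconvex} as nothing but the special case of Corollary \ref{cor:oppersp posop convex} in which the positive superoperators are taken to be conjugations. First I would, for each $i\in[r]$, define the linear map $\E_i:\,\B(\hil_i)\to\B(\kil)$ by $\E_i(X):=K_iXK_i^*$, and observe that $\E_i$ is positive: for $X\in\B(\hil_i)\p$ and $\psi\in\kil$ one has $\inner{\psi}{K_iXK_i^*\psi}=\inner{K_i^*\psi}{XK_i^*\psi}\ge 0$, so $K_iXK_i^*\in\B(\kil)\p$ (indeed $\E_i$ is completely positive, but positivity is all that is needed). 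I would then take the scalar weights to be $\lambda_i:=1$ for all $i$.

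The hypotheses I have assumed — that for some $k\in[4]$, condition $(Sk)$ of Lemma \ref{lemma:support cond for persp} holds for every $i\in[r]$ — are precisely those required by Corollary \ref{cor:oppersp posop convex}. Hence that corollary applies verbatim and yields that $\persp{f}\bz\sum_i\E_i(A_i),\sum_i\E_i(B_i)\jz$ is defined together with the inequality
\begin{align*}
\persp{f}\bz\sum_{i=1}^rK_iA_iK_i^*,\sum_{i=1}^rK_iB_iK_i^*\jz
&=\persp{f}\bz\sum_{i=1}^r\E_i(A_i),\sum_{i=1}^r\E_i(B_i)\jz\\
&\le\sum_{i=1}^r\E_i\bz\persp{f}\bz A_i,B_i\jz\jz
=\sum_{i=1}^rK_i\persp{f}\bz A_i,B_i\jz K_i^*,
\end{align*}
which is exactly the claimed statement.

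I do not expect any real obstacle here; the only step needing a line of justification is the positivity of $X\mapsto K_iXK_i^*$, which is immediate as above. (If one preferred a self-contained argument bypassing Corollary \ref{cor:oppersp posop convex}, one would simply rerun its block-diagonal device: set $\hat\kil:=\kil\otimes\bC^r$, $\hat A:=\sum_iK_iA_iK_i^*\otimes\pr{e_i}$, $\hat B:=\sum_iK_iB_iK_i^*\otimes\pr{e_i}$ for an orthonormal basis $(e_i)_i$ of $\bC^r$, use the direct sum property and tensor homogeneity with a projection to compute $\persp{f}(\hat A,\hat B)=\sum_iK_i\persp{f}(A_i,B_i)K_i^*\otimes\pr{e_i}$, and apply the monotonicity inequality \eqref{eq:opersp posmon}, valid here by Lemma \ref{lemma:support cond for persp}, to the partial trace $\Tr_2:\,\B(\hat\kil)\to\B(\kil)$, which maps $\hat A\mapsto\sum_iK_iA_iK_i^*$ and $\hat B\mapsto\sum_iK_iB_iK_i^*$.)
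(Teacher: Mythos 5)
Your proposal is correct and coincides with the paper's own proof, which likewise obtains the corollary from Corollary \ref{cor:oppersp posop convex} by setting $\lambda_i=1$ and $\E_i(\valt):=K_i(\valt)K_i^*$; your check that these conjugation maps are positive is the only justification needed, and your parenthetical self-contained variant is just the proof of that corollary rerun.
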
 
\begin{proof}
Follows from Lemma \ref{cor:oppersp posop convex} by taking $\lambda_i=1$,
$\E_i(\valt):=K_i(\valt)K_i^*$, $i\in[r]$.
\end{proof} 

\begin{cor}
\emph{(Transformer inequality for the operator perspective)}
Assume that $f$ and $A,B\in\B(\hil)\p$ are related such that (Sk) 
holds for some $k\in[4]$ in Lemma \ref{lemma:support cond for persp}.
Then for any $X\in\B(\hil,\kil)$, $f$ and $XAX^*,XBX^*$ satisfy the same (Sk), and
\begin{align}\label{eq:oppersp transformer}
\persp{f}(XAX^*,XBX^*)\le X\persp{f}(A,B)X^*.
\end{align}
In particular, if $X$ is invertible then we have the following
\emph{operator homogeneity} relation:
\begin{align*}
\persp{f}(XAX^*,XBX^*)= X\persp{f}(A,B)X^*.
\end{align*}
\end{cor}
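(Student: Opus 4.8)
The plan is to reduce the general statement to the case of positive definite operators, where it follows directly from the already-established positive-map monotonicity (Lemma \ref{lemma:oppersp mon}), and then to pass to the limit using Lemma \ref{lemma:support cond for persp}. First I would observe that the map $\E(\valt):=X(\valt)X^*$ is a positive linear map from $\B(\hil)$ to $\B(\kil)$, so the inequality \eqref{eq:oppersp transformer} is exactly \eqref{eq:opersp posmon} specialized to this $\E$; the only thing to check is that all the perspective functions appearing are in fact defined, i.e.\ that the limits in \eqref{eq:oppersp def2} exist, and for this I need the support-condition bookkeeping.

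The key step is therefore to verify that if $f$ and $A,B$ satisfy condition (S$k$) for some $k\in[4]$, then $f$ and $XAX^*,XBX^*$ satisfy the same (S$k$). For (S2) this is immediate since the hypothesis $f(0^+)<+\infty$ and $\tilde f(0^+)<+\infty$ is a condition on $f$ alone, not on the operators. For (S1), (S3), (S4) I would use the elementary facts about supports under conjugation: $(XAX^*)^0 = (XA^0X^*)^0$ and, more usefully, $\supp(XAX^*) = X\supp(A)$ whenever this makes sense, together with the monotonicity of supports: $A^0\le B^0 \iff \supp A\subseteq\supp B$, and conjugation by a fixed $X$ preserves such inclusions, so $A^0\le B^0 \imp (XAX^*)^0\le(XBX^*)^0$, and similarly for equality and for the reverse inclusion. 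This handles (S1), (S3), (S4). Once (S$k$) is confirmed for the conjugated pair, Lemma \ref{lemma:support cond for persp} guarantees both that $\persp{f}(XAX^*,XBX^*)$ is well-defined via the $\ep\searrow 0$ regularization and that the monotonicity inequality \eqref{eq:opersp posmon} holds for the positive map $\E(\valt)=X(\valt)X^*$, which is precisely \eqref{eq:oppersp transformer}.

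For the last sentence, when $X$ is invertible I would apply the inequality just proved twice: once to $X$ with $(A,B)$, giving $\persp{f}(XAX^*,XBX^*)\le X\persp{f}(A,B)X^*$, and once to $X^{-1}$ with the pair $(XAX^*,XBX^*)$ — which also satisfies the relevant (S$k$) by the support argument above applied to $X^{-1}$ — giving $\persp{f}(A,B)=\persp{f}(X^{-1}(XAX^*)X^{-*},X^{-1}(XBX^*)X^{-*})\le X^{-1}\persp{f}(XAX^*,XBX^*)X^{-*}$, hence $X\persp{f}(A,B)X^*\le\persp{f}(XAX^*,XBX^*)$, and the two inequalities combine to the asserted equality. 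I expect the main (though still routine) obstacle to be the careful support bookkeeping for conditions (S1), (S3), (S4): one must make sure that the ranges behave correctly under conjugation by a possibly non-injective $X$ — in particular that $\supp(XAX^*)=X(\supp A)$ and that the required inclusions are genuinely preserved — but this is standard linear algebra and presents no conceptual difficulty.
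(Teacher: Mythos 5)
Your proposal is correct and follows essentially the same route as the paper: the paper's one-line proof invokes Corollary \ref{cor:oppersp posconvex} with $r=1$, which is exactly the specialization of the positive-map monotonicity in Lemma \ref{lemma:support cond for persp} to $\E(\valt)=X(\valt)X^*$ that you use directly. Your explicit support bookkeeping for the preservation of (S1), (S3), (S4) (via $\supp(XAX^*)=X(\supp A)$) and the two-sided application for invertible $X$ just spell out what the paper leaves implicit.
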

\begin{proof}
Follows from Corollary \ref{cor:oppersp posconvex} by taking $r=1$.
\end{proof} 
 
\begin{cor}\label{cor:oppersp conv}
\ki{(Generalized convexity of the operator perspective)}
If $A_i,B_i\in\B(\hil)$, $i\in[r]$, are such that $\persp{f}(A_i,B_i)$ is defined for every $i\in[r]$, 
then for any $\lambda_i\in(0,+\infty)$, $i\in[r]$, 
$\persp{f}\bz\sum_i\lambda_iA_i,\sum_i\lambda_iB_i\jz$ is defined, and 
\begin{align}\label{eq:oppersp subadd}
\persp{f}\bz\sum_{i=1}^r\lambda_iA_i,\sum_{i=1}^r\lambda_iB_i\jz
\le
\sum_{i=1}^r\lambda_i\persp{f}\bz A_i,B_i\jz.
\end{align}
\end{cor}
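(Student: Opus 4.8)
The plan is to bootstrap from the positive-definite case by the usual $\ep$-regularization. First I would treat the case where all $A_i,B_i\in\B(\hil)\pp$ are positive definite: then each pair $(A_i,B_i)$ satisfies the support condition $(S1)$ of Lemma \ref{lemma:support cond for persp} (since $X^0=I$ for positive definite $X$), so Corollary \ref{cor:oppersp posop convex} applied with the positive maps $\E_i:=\id_{\B(\hil)}$ shows that $\persp{f}(\sum_i\lambda_iA_i,\sum_i\lambda_iB_i)$ is defined and that the inequality \eqref{eq:oppersp subadd} holds. Specializing this to $r=2$ with weights $s,1-s$ for $s\in(0,1)$ also shows that, for any fixed $A,B\in\B(\hil)\p$, the map $\ep\mapsto\persp{f}(A+\ep I,B+\ep I)$ is operator convex on $(0,+\infty)$; I will use this below.

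Now let $A_i,B_i\in\B(\hil)\p$ be arbitrary with each $\persp{f}(A_i,B_i)$ defined, and write $A:=\sum_i\lambda_iA_i$, $B:=\sum_i\lambda_iB_i$, $\Lambda:=\sum_i\lambda_i$. Applying the positive-definite case to $A_i+\ep I,B_i+\ep I$ ($\ep>0$) and using $\sum_i\lambda_i(A_i+\ep I)=A+\ep\Lambda I$ (and similarly for $B$) gives
\begin{align*}
\persp{f}(A+\ep\Lambda I,B+\ep\Lambda I)\le\sum_{i=1}^r\lambda_i\persp{f}(A_i+\ep I,B_i+\ep I),\qquad \ep>0.
\end{align*}
By hypothesis and the definition of the operator perspective, the right-hand side converges in operator norm to $S:=\sum_i\lambda_i\persp{f}(A_i,B_i)$ as $\ep\searrow0$; in particular it is bounded, so $\persp{f}(A+\delta I,B+\delta I)\le M I$ for some $M$ and all sufficiently small $\delta>0$.

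It remains to show that $\lim_{\ep\searrow0}\persp{f}(A+\ep I,B+\ep I)$ exists, i.e., that $\persp{f}(A,B)$ is defined. For each $\psi\in\hil$ the real-valued function $\ep\mapsto\inner{\psi}{\persp{f}(A+\ep I,B+\ep I)\psi}$ is convex (by the operator convexity noted above) and bounded above near $0$, hence has a finite limit as $\ep\searrow0$: a convex real function bounded above on an interval $(0,\delta)$ has a finite one-sided limit at $0$, since its right derivative is nondecreasing and cannot tend to $-\infty$ without forcing the function to $+\infty$. By complex polarization and finite-dimensionality this upgrades to operator-norm convergence of the self-adjoint operators $\persp{f}(A+\ep I,B+\ep I)$ to some $L$, so $\persp{f}(A,B)=L$ is defined. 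Finally, reparametrizing $\ep\mapsto\ep/\Lambda$ in the displayed inequality and letting $\ep\searrow0$, using that the positive semidefinite order is closed and that both sides converge, yields $\persp{f}(A,B)\le S$, which is \eqref{eq:oppersp subadd}.

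The only real subtlety is this last existence step: the hypothesis gives only that each $\persp{f}(A_i,B_i)$ is defined, not which of $(S1)$--$(S4)$ holds, so Lemma \ref{lemma:support cond for persp} cannot be invoked directly for the pair $(A,B)$; the convexity-in-$\ep$ argument is exactly what lets me bypass a case analysis on the finiteness of $f(0^+)$ and $\tilde f(0^+)$ and on the support projections $A^0=\bigvee_iA_i^0$, $B^0=\bigvee_iB_i^0$. Everything else is bookkeeping.
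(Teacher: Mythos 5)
Your proof is correct, and its skeleton matches the paper's: the positive-definite case is obtained exactly as in the paper (Lemma \ref{lemma:oppersp mon} packaged through Corollary \ref{cor:oppersp posop convex}, which indeed applies since positive definite pairs satisfy \ref{S1}). Where you genuinely add value is the extension to general PSD arguments, which the paper dismisses as ``trivial'' but which is in fact the only delicate point: the hypothesis only says that each limit $\lim_{\ep\searrow 0}\persp{f}(A_i+\ep I,B_i+\ep I)$ exists, and one must still produce the limit defining $\persp{f}\bz\sum_i\lambda_iA_i,\sum_i\lambda_iB_i\jz$, since for operator convex $f$ this limit can fail to exist without such a hypothesis. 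Your mechanism — operator convexity of $\ep\mapsto\persp{f}(A+\ep I,B+\ep I)$ (from the positive-definite case with $r=2$), an upper bound near $\ep=0$ from the regularized inequality, then convexity plus boundedness of each diagonal matrix element giving a finite one-sided limit, upgraded by polarization and finite-dimensionality to norm convergence, and finally closedness of the PSD order — is sound and bypasses any case analysis on $f(0^+)$, $\tilde f(0^+)$ and supports. One small correction to your justification of the real-analysis step: it is not true that the right derivative of a convex function on $(0,\delta)$ ``cannot tend to $-\infty$ without forcing the function to $+\infty$'' (consider $g(\ep)=-\sqrt{\ep}$, which is convex, bounded, has derivative tending to $-\infty$, yet converges). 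The correct statement, which is all you need, is that a convex function on $(0,\delta)$ always has a one-sided limit at $0$ in $(-\infty,+\infty]$ (it is eventually monotone near the endpoint, and convexity rules out the value $-\infty$ via the three-point slope inequality), so boundedness above near $0$ forces the limit to be finite. With that justification repaired, the argument is complete and somewhat more careful than the paper's own treatment of the PSD extension.
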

\begin{proof}
For positive definite $A_i,B_i$, the assertion follows from  
Lemma \ref{lemma:oppersp mon} using the proof of Corollary \ref{cor:oppersp posop convex},
and the extension to general PSD arguments is trivial.
\end{proof}
 
\begin{rem}
The generalized joint convexity \eqref{eq:oppersp subadd} of the operator perspective function was first shown for positive definite arguments in 
\cite{Effros,ENG} by a different method.
\end{rem}

\section{Superoperator perspective and geometric means for CP maps}
\label{sec:supop persp}

For two finite-dimensional Hilbert spaces $\hil,\kil$, 
let $\cp(\hil,\kil)$ denote the set of all completely positive linear maps from $\B(\hil)$ to $\B(\kil)$, and 
$\cptp(\hil,\kil)$ the set of completely positive trace-preserving linear maps  
from $\B(\hil)$ to $\B(\kil)$. We will often use the terminology in which elements of 
$\B(\B(\hil),\B(\kil))$ are called \ki{superoperators}.

Let $\hil$ be a finite-dimensional vector space. Any vector of the form 
$\Psi\in\tilde\hil\otimes\hil$, $\Psi=\sum_{i=1}^d\tilde e_i\otimes e_i$, where
$\dim\tilde\hil=\dim\hil$, $(e_i)_{i=1}^d$ is an ONB in $\hil$ and 
$(\tilde e_i)_{i=1}^d$ is an ONB in $\tilde \hil$, is called 
a \ki{Choi vector} on $\hil$. 
For any finite-dimensional Hilbert space $\kil$, $\Psi$ defines an
isomorphism
\begin{align*}
\choi_{\Psi}:\,\B(\B(\hil),\B(\kil))\to\B(\tilde\hil)\otimes\B(\kil),
\end{align*}
called the \ki{Choi-Jamio\l kowski isomorphism}, as
\begin{align*}
\choi_{\Psi}:\,\N\mapsto(\id_{\B(\tilde\hil)}\otimes \N)\pr{\Psi}=\sum_{i,j=1}^d
\diad{\tilde e_i}{\tilde e_j}\otimes\N(\diad{e_i}{e_j}),\ds\ds\ds
\N\in\B(\B(\hil),\B(\kil)).
\end{align*}
Note that if $\Phi=\sum_{i=1}^d\hat f_i\otimes f_i\in\hat\hil\otimes\hil$ is another Choi vector then there
exists a unitary $U:\,\tilde\hil\to\hat\hil$ such that 
$\Phi=(U\otimes I)\Psi$, whence
\begin{align}\label{eq:Choi basis transformation}
\choi_{\Phi}=\Ad_{U\otimes I}\circ\choi_{\Psi},\ds\text{i.e.,}\ds
 \choi_{\Phi}(\N)=(U\otimes I)\choi_{\Psi}(\N)(U^*\otimes I),\ds\ds\ds
 \N\in\B(\B(\hil),\B(\kil)).
\end{align} 
For the completely depolarizing channel $\cd_{\hil\to\kil}(\valt):=\frac{1}{\dim\kil}I_{\kil}\Tr_{\hil}(\valt)$, we have
\begin{align*}
\choi_{\Psi}(\cd_{\hil\to\kil})=\frac{1}{\dim\kil}I_{\tilde\hil}\otimes I_{\kil}.
\end{align*}
For $\N,\M\in\cp(\hil,\kil)$, we will write 
\begin{align*}
\N\le_{\cp}\M,\ds\ds\text{if}\ds\ds \M-\N\in\cp(\hil,\kil)\ds\ds\text{iff}\ds\ds 
\choi_{\Psi}(\N)\le\choi_{\Psi}(\M),
\end{align*}
where the last inequality is for any (equivalently, all) Choi vector, and 
the inequality is in the PSD order.

The notion of the operator perspective can be extended to define the 
\ki{superoperator perspective} function of a continuous function $f:\,(0,+\infty)\to\bR$ on a pair of completely positive maps 
$\N,\M\in\cp(\hil,\kil)$ as 
\begin{align}\label{eq:superoppersp def}
\persp{f}(\N,\M):=C_{\Psi}\inv\bz\persp{f}\bz\choi_{\Psi}(\N),\choi_{\Psi}(\M)\jz \jz,
\end{align}
whenever $\persp{f}\bz\choi_{\Psi}(\N),\choi_{\Psi}(\M)\jz$ is well-defined. 
According to \eqref{eq:oppersp def2}, we have 
\begin{align}\label{eq:superoppersp def2}
\persp{f}(\N,\M)=\lim_{\ep\searrow 0}\persp{f}(\N+\ep\cd,\M+\ep\cd),
\end{align}
whenever this limit exists; in particular, this is the case when $f$ is a non-negative operator monotone 
function and $\N,\M\in\cp(\hil,\kil)$ are arbitrary, and hence the $t$-weighted Kubo-Ando geometric mean 
$\M\#_{t}\N$ is well-defined for any pair of $\N,\M\in\cp(\hil,\kil)$.
More generally, if $f$ and $A:=\choi_{\Psi}(\N)$, $B:=\choi_{\Psi}(\M)$ 
satisfy at least one of the conditions \ref{S1}--\ref{S4} in Lemma \ref{lemma:support cond for persp} 
then $\persp{f}(\N,\M)$ is defined.

According to the following lemma, 
$\persp{f}(\N,\M)$ is independent of the Choi vector $\Psi$ in the definitions 
\eqref{eq:superoppersp def}--\eqref{eq:superoppersp def2}, and only depends on the pair of CPTP maps
$(\N,\M)$. 
\begin{lemma}
For any Choi vectors $\Psi\in\tilde\hil\otimes\hil$ and $\Phi\in\hat\hil\otimes\hil$, and any 
$\N,\M\in\CP(\hil,\kil)$, 
\begin{align*}
C_{\Psi}\inv\bz\persp{f}\bz\choi_{\Psi}(\N),\choi_{\Psi}(\M)\jz \jz
=
C_{\Phi}\inv\bz\persp{f}\bz\choi_{\Phi}(\N),\choi_{\Phi}(\M)\jz \jz.
\end{align*}
\end{lemma}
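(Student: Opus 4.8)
The statement asserts that the superoperator perspective $\persp{f}(\N,\M)$ defined via a Choi vector $\Psi$ does not depend on the choice of $\Psi$. The plan is to reduce everything to the change-of-basis formula \eqref{eq:Choi basis transformation}, which says that for two Choi vectors $\Psi\in\tilde\hil\otimes\hil$ and $\Phi\in\hat\hil\otimes\hil$ there exists a unitary $U:\,\tilde\hil\to\hat\hil$ with $\choi_{\Phi}=\Ad_{U\otimes I}\circ\choi_{\Psi}$, i.e., $\choi_{\Phi}(\N)=(U\otimes I)\choi_{\Psi}(\N)(U^*\otimes I)$ and likewise for $\M$. The whole argument is then just the observation that the operator perspective $\persp{f}$ is covariant under conjugation by a unitary, combined with the fact that $\choi_{\Psi}\inv$ absorbs this conjugation.

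First I would invoke \eqref{eq:Choi basis transformation} to fix such a unitary $U$, and set $W:=U\otimes I_{\kil}$, which is a unitary from $\tilde\hil\otimes\kil$ onto $\hat\hil\otimes\kil$. Next I would apply the unitary covariance of the operator perspective: conjugation by a unitary is a special case of the transformer identity for the operator perspective (Corollary on the transformer inequality in Appendix \ref{sec:oppersp}, with $X=W$ invertible, giving the operator homogeneity relation $\persp{f}(WAW^*,WBW^*)=W\persp{f}(A,B)W^*$), or alternatively follows directly from the defining formulas \eqref{eq:oppersp def1}--\eqref{eq:oppersp def2} since $f$ commutes with conjugation by a unitary and the $\ep I$ regularization is unitarily covariant. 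Applying this with $A=\choi_{\Psi}(\N)$, $B=\choi_{\Psi}(\M)$ yields
\begin{align*}
\persp{f}\bz\choi_{\Phi}(\N),\choi_{\Phi}(\M)\jz
=
\persp{f}\bz W\choi_{\Psi}(\N)W^*, W\choi_{\Psi}(\M)W^*\jz
=
W\persp{f}\bz\choi_{\Psi}(\N),\choi_{\Psi}(\M)\jz W^*.
\end{align*}
(Here one should check that the left-hand side is defined iff the right-hand side is, which is immediate from the unitary covariance of the regularization in \eqref{eq:oppersp def2}.)

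Finally I would apply $\choi_{\Phi}\inv$ to both sides. Since $\choi_{\Phi}=\Ad_{W}\circ\choi_{\Psi}$, we have $\choi_{\Phi}\inv=\choi_{\Psi}\inv\circ\Ad_{W^*}$, so
\begin{align*}
\choi_{\Phi}\inv\bz\persp{f}\bz\choi_{\Phi}(\N),\choi_{\Phi}(\M)\jz\jz
=
\choi_{\Psi}\inv\bz W^*\,W\persp{f}\bz\choi_{\Psi}(\N),\choi_{\Psi}(\M)\jz W^*\,W\jz
=
\choi_{\Psi}\inv\bz\persp{f}\bz\choi_{\Psi}(\N),\choi_{\Psi}(\M)\jz\jz,
\end{align*}
which is exactly the claimed equality. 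I do not expect any real obstacle here; the only point requiring a line of care is the well-definedness clause (that the perspective on one Choi picture exists iff it does on the other), which is handled by the unitary covariance of the $\ep$-regularized expression in \eqref{eq:oppersp def2}.
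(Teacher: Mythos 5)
Your proposal is correct and follows essentially the same route as the paper, which simply declares the lemma ``immediate from \eqref{eq:Choi basis transformation}''; you have just spelled out the unitary covariance of $\persp{f}$ (via the transformer identity with invertible $X=U\otimes I$, or directly from the $\ep$-regularized definition \eqref{eq:oppersp def2}) and the cancellation after applying $\choi_{\Phi}\inv=\choi_{\Psi}\inv\circ\Ad_{(U\otimes I)^*}$. Nothing is missing.
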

\begin{proof}
Immediate from \eqref{eq:Choi basis transformation}.
\end{proof}

It is easy to see that
the notion of the superoperator perspective is a natural extension of the corresponding notion of the operator perspective, in the following sense:
\begin{lemma}
Let $A,B\in\B(\kil)\p$ be such that $\persp{f}(A,B)$ is defined, and let 
$\A(\valt):=A\Tr_{\hil}(\valt)$, 
$\B(\valt):=B\Tr_{\hil}(\valt)$. Then $\persp{f}(\A,\B)$ is defined, and 
\begin{align*}
\persp{f}(\A,\B)=\persp{f}(A,B)\Tr_{\hil}(\valt).
\end{align*}
\end{lemma}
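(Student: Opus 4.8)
The statement relates the superoperator perspective of the ``constant-output'' maps $\A(\valt):=A\Tr_{\hil}(\valt)$ and $\B(\valt):=B\Tr_{\hil}(\valt)$ to the operator perspective of $A,B\in\B(\kil)\p$. The plan is to reduce everything to a Choi-representation computation. First I would fix a Choi vector $\Psi=\sum_{i=1}^d\tilde e_i\otimes e_i\in\tilde\hil\otimes\hil$ and compute $\choi_{\Psi}(\A)$ and $\choi_{\Psi}(\B)$ explicitly. Since $\A(\diad{\tilde e_i}{\tilde e_j})=A\Tr(\diad{\tilde e_i}{\tilde e_j})=\delta_{ij}A$, we get $\choi_{\Psi}(\A)=\sum_{i,j}\diad{\tilde e_i}{\tilde e_j}\otimes\delta_{ij}A=\bz\sum_i\pr{\tilde e_i}\jz\otimes A=I_{\tilde\hil}\otimes A$, and likewise $\choi_{\Psi}(\B)=I_{\tilde\hil}\otimes B$.

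The second step is to apply the tensor homogeneity of the operator perspective. By the ``Tensor homogeneity with a projection'' property \eqref{eq:oppersp pr tensor hom} (or, for that matter, the full tensor homogeneity \eqref{eq:oppersp tensor hom} with $C=I_{\tilde\hil}$, after swapping the tensor factors via unitary covariance), the hypothesis that $\persp{f}(A,B)$ is defined implies that $\persp{f}(I_{\tilde\hil}\otimes A, I_{\tilde\hil}\otimes B)$ is defined and equals $I_{\tilde\hil}\otimes\persp{f}(A,B)$. Here I should be slightly careful about the order of tensor factors: the tensor homogeneity lemmas are stated for $A\otimes P$, while we have $I_{\tilde\hil}\otimes A$; but the two differ only by the swap unitary $\hil\otimes\tilde\hil\cong\tilde\hil\otimes\hil$, and the operator perspective is covariant under unitary conjugation (by its very definition via functional calculus), so this causes no difficulty.

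The third step is to identify $C_{\Psi}\inv(I_{\tilde\hil}\otimes\persp{f}(A,B))$ with the map $X\mapsto \persp{f}(A,B)\Tr(X)$. This is immediate: the Choi-Jamio\l kowski isomorphism sends the map $\valt\mapsto Z\Tr_{\hil}(\valt)$ to $I_{\tilde\hil}\otimes Z$ for any $Z\in\B(\kil)$ (this is exactly how the completely depolarizing channel's Choi operator is computed in the excerpt, up to the $\frac{1}{\dim\kil}$ normalization); applying this with $Z=\persp{f}(A,B)$ gives $C_{\Psi}\inv(I_{\tilde\hil}\otimes\persp{f}(A,B))=\persp{f}(A,B)\Tr(\valt)$. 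Combining the three steps: by definition \eqref{eq:superoppersp def}, $\persp{f}(\A,\B)=C_{\Psi}\inv(\persp{f}(\choi_{\Psi}(\A),\choi_{\Psi}(\B)))=C_{\Psi}\inv(\persp{f}(I_{\tilde\hil}\otimes A, I_{\tilde\hil}\otimes B))=C_{\Psi}\inv(I_{\tilde\hil}\otimes\persp{f}(A,B))=\persp{f}(A,B)\Tr(\valt)$, and in particular the left-hand side is well-defined. I do not anticipate a genuine obstacle here; the only point requiring a little care is the bookkeeping around the tensor-factor ordering and making sure the tensor-homogeneity property is invoked in a form that applies (which is why I would phrase it through unitary covariance rather than trying to match the statement verbatim).
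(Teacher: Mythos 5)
Your proof is correct and follows essentially the same route as the paper's: compute $\choi_{\Psi}(\A)=I_{\tilde\hil}\otimes A$, $\choi_{\Psi}(\B)=I_{\tilde\hil}\otimes B$, and apply the tensor-homogeneity property \eqref{eq:oppersp pr tensor hom} of the operator perspective (your extra care about the ordering of tensor factors via unitary covariance is a fine point the paper leaves implicit).
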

\begin{proof}
Easy to verify from the fact that $\choi_{\Psi}(\A)=I_{\tilde\hil}\otimes A$ and 
$\choi_{\Psi}(\B)=I_{\tilde\hil}\otimes B$ using \eqref{eq:oppersp pr tensor hom}.
\end{proof}

Next, we establish various properties of the superoperator perspective in analogy to those of the operator perspective. 

\begin{lemma}
\ki{(Generalized convexity)}
If $f$ is operator convex and $\N_i,\M_i\in\cp(\hil,\kil)$, $i\in[r]$ are such that 
$\persp{f}(\N_i,\M_i)$ are defined for every $i\in[r]$, then 
$\persp{f}(\sum_i\lambda_i\N_i,\sum_i\lambda_i\M_i)$ is defined for any 
$\lambda_i\in(0,+\infty)$, $i\in[r]$, and 
\begin{align}\label{eq:supop persp conv}
\persp{f}\bz\sum_{i=1}^r\lambda_i\N_i,\sum_{i=1}^r\lambda_i\M_i\jz
\le_{\cp}
\sum_{i=1}^r\lambda_i\persp{f}(\N_i,\M_i).
\end{align}
\end{lemma}
\begin{proof}
Immediate from the generalized convexity of the operator perspective function; see
Section \ref{sec:oppersp}.
\end{proof}

\begin{lemma} \ki{(Tensor homogeneity/stability)}
If $f$ and $\N,\M\in\cp(\hil_1,\kil_1)$ are such that 
$\persp{f}(\N,\M)$ is defined,
then for any $\E\in\cp(\hil_2,\kil_2)$, 
$\persp{f}(\N\otimes\E,\M\otimes\E)$ is defined, and
\begin{align}\label{eq:supop persp tensor hom}
\persp{f}(\N\otimes\E,\M\otimes\E)
=
\persp{f}(\N,\M)\otimes\E.
\end{align}
\end{lemma}
\begin{proof}
This follows from the tensor homogeneity \eqref{eq:oppersp tensor hom} of the operator perspective by taking into account that if $\Psi_1\in\tilde\hil_1\otimes\hil_1$ and 
$\Psi_2\in\tilde\hil_2\otimes\hil_2$ are Choi vectors then 
$\Psi_1\otimes\Psi_2$ is a Choi vector, and 
$\choi_{\Psi_1\otimes\Psi_2}(\F_1\otimes\F_2)=\choi_{\Psi_1}(\F_1)\otimes\choi_{\Psi_2}(\F_2)$ 
for any $\F_i\in\cp(\hil_i,\kil_i)$, $i=1,2$.
\end{proof}

\begin{prop}\label{prop:postprop mon}
\ki{(Monotonicity under post-processing)}
Assume that $f$ is operator convex, and $\N,\M\in\cp(\hil,\kil)$
are such that at least one of \ref{S1}--\ref{S4} in Lemma \ref{lemma:support cond for persp} holds with $A:=\choi_{\Psi}(\N)$ and 
$B:=\choi_{\Psi}(\M)$ for some (equivalently, any) Choi vector $\Psi$.
Then  $\persp{f}(\F\circ\N,\F\circ\M)$ is defined for any 
$\F\in\cp(\kil,\lil)$, and
\begin{align}\label{eq:postprop mon}
\persp{f}(\F\circ\N,\F\circ\M)\le_{\cp}\F\circ\persp{f}(\N,\M).
\end{align}
If, moreover,
$\F$ has a completely positive inverse (in particular, if $\F$ is a unitary channel) then 
equality holds in \eqref{eq:postprop mon}.
\end{prop}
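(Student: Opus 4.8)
The plan is to reduce the statement about superoperators to the corresponding statement about the operator perspective proved in Appendix~\ref{sec:oppersp}. First I would fix a Choi vector $\Psi\in\tilde\hil\otimes\hil$ and write $A:=\choi_{\Psi}(\N)$, $B:=\choi_{\Psi}(\M)$, so that by the definition \eqref{eq:superoppersp def} we have $\persp{f}(\N,\M)=C_{\Psi}^{-1}\bz\persp{f}(A,B)\jz$. The key observation is that post-composition with $\F\in\cp(\kil,\lil)$ is compatible with the Choi--Jamio\l kowski isomorphism: for any $\G\in\cp(\hil,\kil)$ one has $\choi_{\Psi}(\F\circ\G)=(\id_{\B(\tilde\hil)}\otimes\F)\bz\choi_{\Psi}(\G)\jz$. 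This is immediate from the definition of $\choi_{\Psi}$, since $\choi_{\Psi}(\F\circ\G)=(\id_{\B(\tilde\hil)}\otimes(\F\circ\G))\pr{\Psi}=(\id_{\B(\tilde\hil)}\otimes\F)\bz(\id_{\B(\tilde\hil)}\otimes\G)\pr{\Psi}\jz$. Consequently $\choi_{\Psi}(\F\circ\N)=(\id\otimes\F)(A)$ and $\choi_{\Psi}(\F\circ\M)=(\id\otimes\F)(B)$, and the map $\id_{\B(\tilde\hil)}\otimes\F$ is a positive (indeed completely positive) linear map on $\B(\tilde\hil\otimes\kil)$.

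Next I would invoke the monotonicity of the operator perspective under positive linear maps, namely Lemma~\ref{lemma:support cond for persp} (or Corollary~\ref{cor:oppersp posconvex}), which is exactly where the hypothesis that one of \ref{S1}--\ref{S4} holds for the pair $(A,B)$ is used to guarantee that $\persp{f}$ is well-defined on the relevant pairs and that the monotonicity inequality applies. Applying it to the positive map $\id_{\B(\tilde\hil)}\otimes\F$ gives
\begin{align*}
\persp{f}\bz(\id\otimes\F)(A),(\id\otimes\F)(B)\jz
\le
(\id\otimes\F)\bz\persp{f}(A,B)\jz.
\end{align*}
Rewriting the left-hand side via the displayed identities and translating through $C_{\Psi}^{-1}$, the left side becomes $\choi_{\Psi}\bz\persp{f}(\F\circ\N,\F\circ\M)\jz$, while the right side is $(\id\otimes\F)\bz\choi_{\Psi}(\persp{f}(\N,\M))\jz=\choi_{\Psi}\bz\F\circ\persp{f}(\N,\M)\jz$. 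Since $X\le_{\cp}Y$ is by definition equivalent to $\choi_{\Psi}(X)\le\choi_{\Psi}(Y)$ in the PSD order, this is precisely \eqref{eq:postprop mon}.

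For the equality case, suppose $\F$ has a completely positive inverse $\F^{-1}$. Then applying the already-established inequality \eqref{eq:postprop mon} with $\F$ replaced by $\F^{-1}$ and with the pair $(\F\circ\N,\F\circ\M)$ in place of $(\N,\M)$ yields $\persp{f}(\N,\M)=\persp{f}(\F^{-1}\circ\F\circ\N,\F^{-1}\circ\F\circ\M)\le_{\cp}\F^{-1}\circ\persp{f}(\F\circ\N,\F\circ\M)$; composing on the left with the completely positive map $\F$ (which preserves the $\le_{\cp}$ order) gives $\F\circ\persp{f}(\N,\M)\le_{\cp}\persp{f}(\F\circ\N,\F\circ\M)$, the reverse inequality. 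One small point to check is that $(\F\circ\N,\F\circ\M)$ still satisfies one of \ref{S1}--\ref{S4}, which follows because $\id\otimes\F$ with completely positive inverse preserves the support relations among $A,B$ (it acts invertibly, and for an invertible positive map $\Phi$ one has $\supp\Phi(X)=\Phi$ applied to the range of $X$'s support, in particular $(\id\otimes\F)$ sends $A^0\le B^0$ to the analogous relation for the transformed operators); alternatively, and more cleanly, one notes that if $\F$ is invertible completely positive then so is $\F^{-1}$, and condition \ref{S2} is symmetric while the support hypotheses in \ref{S1}, \ref{S3}, \ref{S4} transform correctly under an order-isomorphism of the PSD cone. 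I do not expect a genuine obstacle here; the only thing requiring a little care is this support-preservation bookkeeping, and the observation that $\choi_{\Psi}$ intertwines post-processing by $\F$ with the ampliation $\id_{\B(\tilde\hil)}\otimes\F$, which is the crux that makes the whole reduction work.
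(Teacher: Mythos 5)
Your proof is correct and follows essentially the same route as the paper: transport everything through the Choi--Jamio\l kowski isomorphism, note $\choi_{\Psi}(\F\circ\N)=(\id\otimes\F)\choi_{\Psi}(\N)$, and apply the monotonicity of the operator perspective under the positive map $\id\otimes\F$ from Lemma \ref{lemma:support cond for persp}, including the (correct) observation that the support conditions \ref{S1}--\ref{S4} are preserved, which the paper also asserts. Your treatment of the equality case, applying the established inequality to $\F^{-1}$ and the pair $(\F\circ\N,\F\circ\M)$ and then post-composing with $\F$, is sound and in fact spells out a step the paper's own proof leaves implicit.
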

\begin{proof}
Note that $\choi_{\Psi}(\F\circ\N)=(\id\otimes\F)\choi_{\Psi}(\N)$, and similarly for $\M$. 
In particular, one of \ref{S1}--\ref{S4} holds also with 
$A:=\choi_{\Psi}(\F\circ\N)$ and $B:=\choi_{\Psi}(\F\circ\M)$.
Hence,
\begin{align*}
\choi_{\Psi}\bz\persp{f}(\F\circ\N,\F\circ\M)\jz
&=
\persp{f}((\id\otimes\F)\choi_{\Psi}(\N),(\id\otimes\F)\choi_{\Psi}(\M))\\
&\le
(\id\otimes\F)\persp{f}(\choi_{\Psi}(\N),\choi_{\Psi}(\M))\\
&=
(\id\otimes\F)\choi_{\Psi}(\persp{f}(\N,\M))\\
&=
\choi_{\Psi}\bz\F\circ(\persp{f}(\N,\M))\jz,
\end{align*}
where the equalities follow from the respective definitions, and the inequality is due to 
the monotonicity of the operator perspective function stated in Lemma \ref{lemma:support cond for persp}.
\end{proof}

\begin{prop}\label{prop:preprop mon}
\ki{(Monotonicity under pre-processing)}
Assume that $f$ is operator convex, and $\N,\M\in\cp(\hil,\kil)$
are such that at least one of \ref{S1}--\ref{S4} in Lemma \ref{lemma:support cond for persp} holds with $A:=\choi_{\Psi}(\N)$ and 
$B:=\choi_{\Psi}(\M)$ for some (equivalently, any) Choi vector $\Psi$.
Then $\persp{f}(\N\circ\E,\M\circ\E)$ is defined
for any $\E\in\cp(\lil,\hil)$, and 
\begin{align}\label{eq:preprop mon}
\persp{f}(\N\circ\E,\M\circ\E)\le_{\cp}\persp{f}(\N,\M)\circ\E.
\end{align}
\end{prop}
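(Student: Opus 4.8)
The plan is to transfer the pre-processing monotonicity statement for superoperator perspectives to the already-established pre-processing behaviour at the level of Choi operators. First I would fix a Choi vector $\Psi\in\tilde\hil\otimes\hil$ on the \emph{input} space of $\N$ and $\M$, and a Choi vector $\Xi$ on the input space of $\E$. The key identity I need is a ``composition rule'' for the Choi--Jamio\l kowski isomorphism under pre-composition with a channel $\E\in\cp(\hil',\hil)$: writing $\N\circ\E\in\cp(\hil',\kil)$, one has $\choi_{\Xi}(\N\circ\E)=K\,\choi_{\Psi}(\N)\,K^*$ for a suitable fixed operator $K\in\B(\tilde\hil\otimes\kil,\hat\hil'\otimes\kil)$ that depends only on $\E$ (and on the chosen Choi vectors), and the same $K$ works simultaneously for $\N$ and $\M$. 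Concretely, $K$ is built from the Kraus operators of $\E$: if $\E(\cdot)=\sum_a V_a(\cdot)V_a^*$, then $\choi_{\Xi}(\N\circ\E)=\sum_a (V_a^{\Xi})\otimes I_{\kil}\cdot\choi_{\Psi}(\N)\cdot(V_a^{\Xi})^*\otimes I_{\kil}$ in an appropriate sense; more cleanly, there is a single $K$ with $K K^*$ related to $\choi(\E)$, so that $\choi_{\Xi}(\N\circ\E)=K\choi_{\Psi}(\N)K^*$.

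Once that identity is in hand, the proof is a short chain. Using the definition \eqref{eq:superoppersp def} of $\persp{f}(\N\circ\E,\M\circ\E)$, apply $\choi_{\Xi}$ to both sides of \eqref{eq:preprop mon}; the left side becomes $\persp{f}\bz\choi_{\Xi}(\N\circ\E),\choi_{\Xi}(\M\circ\E)\jz=\persp{f}\bz K\choi_{\Psi}(\N)K^*,\,K\choi_{\Psi}(\M)K^*\jz$. Now invoke the transformer inequality for the operator perspective (the corollary in Appendix \ref{sec:oppersp} obtained from Corollary \ref{cor:oppersp posconvex} with $r=1$), which gives $\persp{f}(KAK^*,KBK^*)\le K\persp{f}(A,B)K^*$; this requires checking that the support-condition hypothesis (Sk) carries over from $(\choi_{\Psi}(\N),\choi_{\Psi}(\M))$ to $(K\choi_{\Psi}(\N)K^*,K\choi_{\Psi}(\M)K^*)$, which the same corollary guarantees. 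Thus $\choi_{\Xi}\bz\persp{f}(\N\circ\E,\M\circ\E)\jz\le K\,\choi_{\Psi}(\persp{f}(\N,\M))\,K^* = \choi_{\Xi}\bz\persp{f}(\N,\M)\circ\E\jz$, where the last equality uses the composition rule again (now with $\persp{f}(\N,\M)$ in place of $\N$). Since $\choi_{\Xi}$ is an order isomorphism for the PSD (equivalently $\le_{\cp}$) order, \eqref{eq:preprop mon} follows.

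The main obstacle I expect is pinning down the composition rule $\choi_{\Xi}(\N\circ\E)=K\choi_{\Psi}(\N)K^*$ cleanly and verifying it is the \emph{same} $K$ for every $\N$ (so that both $\N$ and $\M$, and later $\persp{f}(\N,\M)$, are transformed identically). This is a standard but slightly fiddly computation with Choi matrices: one writes $\choi_{\Xi}(\N\circ\E)=(\id\otimes(\N\circ\E))\pr{\Xi}=(\id\otimes\N)\bz(\id\otimes\E)\pr{\Xi}\jz$ and then identifies $(\id\otimes\E)\pr{\Xi}$ with the action of a fixed operator (depending only on $\E$) sandwiching $\pr{\Psi}$, using that $\E$'s Choi operator can be written as $\sum_a \ket{v_a}\bra{v_a}$ with $v_a$ the ``vectorized'' Kraus operators. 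Linearity in $\N$ is then manifest since everything to the left of $\pr{\Psi}$ is an operator acting before $\N$ is applied. The rest of the argument is bookkeeping: confirming that the transformer inequality applies under hypotheses (S1)--(S4), and that $K$ need not be invertible (unlike in the post-processing case, pre-processing only yields an inequality, matching the statement).

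I would also remark, as in the analogous post-processing proposition, that if $\E$ itself admits a completely positive inverse then $K$ becomes invertible and equality holds; but since the statement only asserts the inequality, I would not belabour this and would keep the proof to the three displayed lines above plus the Choi composition lemma.
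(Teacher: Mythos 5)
Your overall strategy (push everything through the Choi--Jamio\l kowski isomorphism, rewrite the pre-composition at the level of Choi operators via transposed Kraus operators, then apply a perspective inequality and pull back) is exactly the paper's route, but one step as you state it is false and the inequality you invoke does not apply. The ``clean'' composition rule $\choi_{\Xi}(\N\circ\E)=K\,\choi_{\Psi}(\N)\,K^*$ with a \emph{single} operator $K$ depending only on $\E$ does not hold in general: the map $\choi_{\Psi}(\N)\mapsto\choi_{\Xi}(\N\circ\E)$ is the pre-processing supermap, a completely positive map whose Kraus operators are $K_a\trans\otimes I_{\kil}$ (with $K_a$ the Kraus operators of $\E$), and its Kraus rank equals that of $\E$. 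A single congruence $X\mapsto KXK^*$ is a CP map of Kraus rank one, so your identity can only hold when $\E$ itself has a single Kraus operator; writing $\choi(\E)=\sum_a\diad{v_a}{v_a}$ with vectorized Kraus operators gives a \emph{sum} of rank-one sandwiches, not one sandwich, and no choice of $K$ with ``$KK^*$ related to $\choi(\E)$'' repairs this (at best one gets $\choi_{\Xi}(\N\circ\E)=\Tr_{\bC^r}\bz K\choi_{\Psi}(\N)K^*\jz$ with a Stinespring-type $K$, which is no longer the setting of the transformer inequality). Consequently the $r=1$ transformer inequality $\persp{f}(KAK^*,KBK^*)\le K\persp{f}(A,B)K^*$ is not the right tool here.

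The fix is already contained in the first, correct formula you wrote: keep the Kraus sum
$\choi_{\Xi}(\N\circ\E)=\sum_a (K_a\trans\otimes I_{\kil})\,\choi_{\Psi}(\N)\,(K_a\trans\otimes I_{\kil})^*$ (and the same identity for $\M$ and for $\persp{f}(\N,\M)$, with the same operators $K_a\trans\otimes I$ for all of them, since the supermap is independent of the superoperator it acts on), and then apply Corollary \ref{cor:oppersp posconvex} (generalized convexity with operator weights, general $r$) rather than its $r=1$ special case. Since all pairs entering that corollary are the single pair $\bz\choi_{\Psi}(\N),\choi_{\Psi}(\M)\jz$, the hypothesis \ref{S1}--\ref{S4} is satisfied by assumption, and the corollary both guarantees that the perspective of the transformed pair is defined and yields
$\choi_{\Xi}\bz\persp{f}(\N\circ\E,\M\circ\E)\jz\le\sum_a(K_a\trans\otimes I)\,\choi_{\Psi}\bz\persp{f}(\N,\M)\jz\,(K_a\trans\otimes I)^*=\choi_{\Xi}\bz\persp{f}(\N,\M)\circ\E\jz$; pulling back through $\choi_{\Xi}$, which is an order isomorphism onto the PSD order, gives \eqref{eq:preprop mon}. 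With this replacement your argument coincides with the paper's proof. Your closing remark about equality when $\E$ has a CP inverse would also need separate justification (invertibility of a single $K$ is not the relevant mechanism), but since the statement only asserts the inequality this is immaterial.
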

\begin{proof}
Note that, according to the standard transpose trick, 
if $\E(\valt)=\sum_{i=1}^r K_i(\valt)K_i^*$ is a Kraus decomposition of $\E$ then 
\begin{align*}
\choi_{\Psi}(\E)
&=
\sum_{i=1}^r(I\otimes K_i)\pr{\Psi}(I\otimes K_i^*)
=
\sum_{i=1}^r(K_i\trans\otimes I)\pr{\Psi}((K_i\trans)^*\otimes I),
\end{align*}
where $K_i\trans:=\sum_{k,l=1}^{\dim}\inner{e_k}{K_ie_l}\diad{\tilde e_l}{\tilde e_k}$.
Thus,
\begin{align*}
\choi_{\Psi}(\N\circ\E)
&=
(\id\otimes\N)(\id\otimes\E)\pr{\Psi}\\
&=
\sum_{i=1}^r(\id\otimes\N)(K_i\trans\otimes I)\pr{\Psi}((K_i\trans)^*\otimes I)\\
&=
\sum_{i=1}^r(K_i\trans\otimes I)\underbrace{(\id\otimes\N)\pr{\Psi}}_{=\choi_{\Psi}(\N)}((K_i\trans)^*\otimes I),
\end{align*}
and similarly for $\choi_{\Psi}(\M\circ\E)$. Thus,
\begin{align*}
\choi_{\Psi}\bz\persp{f}(\N\circ\E,\M\circ\E)\jz
&=
\persp{f}\bz\choi_{\Psi}(\N\circ\E),\choi_{\Psi}(\M\circ\E)\jz\\
&=
\persp{f}\bz\sum_{i=1}^r(K_i\trans\otimes I)\choi_{\Psi}(\N)((K_i\trans)^*\otimes I),
\sum_{i=1}^r(K_i\trans\otimes I)\choi_{\Psi}(\M)((K_i\trans)^*\otimes I)\jz\\
&\le
\sum_{i=1}^r(K_i\trans\otimes I)
\persp{f}\bz\choi_{\Psi}(\N),\choi_{\Psi}(\M)\jz
((K_i\trans)^*\otimes I)\\
&=
\sum_{i=1}^r(K_i\trans\otimes I)
\choi_{\Psi}\bz\persp{f}\bz\N,\M\jz\jz
((K_i\trans)^*\otimes I)\\
&=
\sum_{i=1}^r(K_i\trans\otimes I)
\bz\id\otimes\persp{f}\bz\N,\M\jz\jz\pr{\Psi}
((K_i\trans)^*\otimes I)\\
&=
\bz\id\otimes\persp{f}\bz\N,\M\jz\jz
\sum_{i=1}^r(K_i\trans\otimes I)\pr{\Psi}
((K_i\trans)^*\otimes I)\\
&=\bz\id\otimes\persp{f}\bz\N,\M\jz\jz\bz\id\otimes\E\jz\pr{\Psi}\\
&=
\choi_{\Psi}\bz\persp{f}(\N,\M)\circ\E\jz,
\end{align*}
where the inequality is due to Corollary \ref{cor:oppersp posconvex}, and the rest of the steps are straightforward.
\end{proof}

Quantum supermaps are linear transformations mapping superoperators to superoperators;
they are called completely positive (CP) if 
whenever they are acting on half of a bipartite completely positive superoperator, the resulting superoperator is again completely positive. 
This implies as a special case that CP supermaps map CP superoperators to CP superoperators.
According to \cite{Chiribella_supermaps}, any supermap $\smap:\,\B(\B(\hil_1),\B(\kil_1))\to\B(\B(\hil_2),\B(\kil_2))$ can be written as a composition of three supermaps: first tensoring the input with a fixed CP map, and then applying pre- and postprocessing. Thus, \eqref{eq:supop persp tensor hom} and Propositions \ref{prop:postprop mon} and \ref{prop:preprop mon} yield immediately the following:

\begin{prop}\label{prop:supermap mon}
\ki{(Monotonicity under supermaps)}
Assume that $f$ is operator convex, and $\N,\M\in\cp(\hil_1,\kil_1)$
are such that at least one of \ref{S1}--\ref{S4} in Lemma \ref{lemma:support cond for persp} holds with $A:=\choi_{\Psi}(\N)$ and 
$B:=\choi_{\Psi}(\M)$ for some (equivalently, any) Choi vector $\Psi$.
Then, for any supermap
$\smap:\,\B(\B(\hil_1),\B(\kil_1))\to\B(\B(\hil_2),\B(\kil_2))$,
$\persp{f}(\smap(\N),\smap(\M))$ is defined, and 
\begin{align*}
\persp{f}(\smap(\N),\smap(\M))\le
\persp{f}(\N,M).
\end{align*}
\end{prop}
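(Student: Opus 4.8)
The plan is to reduce the statement to the three structural facts about the superoperator perspective already established above: the tensor homogeneity \eqref{eq:supop persp tensor hom}, the monotonicity under pre-processing (Proposition \ref{prop:preprop mon}), and the monotonicity under post-processing (Proposition \ref{prop:postprop mon}). Concretely, I would first invoke the structure theorem for quantum supermaps \cite{Chiribella_supermaps}, recalled in the paragraph preceding the statement, to write $\smap$ as a composition of three elementary supermaps: tensoring the input with a fixed completely positive map $\G$, pre-composing with a completely positive $\E$, and post-composing with a completely positive $\F$, so that $\smap(\N')=\F\circ\bz(\N'\otimes\G)\circ\E\jz$ for every $\N'\in\cp(\hil_1,\kil_1)$. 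It then suffices to handle each of the three elementary supermaps separately and chain the resulting estimates.

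For the elementary steps I would apply the three results in the order dictated by the decomposition. First, \eqref{eq:supop persp tensor hom} gives that $\persp{f}(\N\otimes\G,\M\otimes\G)$ is defined and equals $\persp{f}(\N,\M)\otimes\G$. Next, Proposition \ref{prop:preprop mon} applied to the pair $(\N\otimes\G,\M\otimes\G)$ yields $\persp{f}\bz(\N\otimes\G)\circ\E,(\M\otimes\G)\circ\E\jz\le_{\cp}\bz\persp{f}(\N,\M)\otimes\G\jz\circ\E$. Finally, Proposition \ref{prop:postprop mon} applied to $\bz(\N\otimes\G)\circ\E,(\M\otimes\G)\circ\E\jz$ gives $\persp{f}(\smap(\N),\smap(\M))\le_{\cp}\F\circ\bz\persp{f}(\N,\M)\otimes\G\jz\circ\E=\smap\bz\persp{f}(\N,\M)\jz$, which (after checking that every perspective in the chain is defined) is the asserted data-processing inequality; here the inequality in the displayed statement is to be read in the $\le_{\cp}$ sense with right-hand side $\smap(\persp{f}(\N,\M))$, in full parallel with Propositions \ref{prop:preprop mon} and \ref{prop:postprop mon}.

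The step I expect to require the most care — and the only genuine obstacle — is verifying that the hypothesis "one of \ref{S1}--\ref{S4} holds for the Choi operators of the pair being processed", which is needed to invoke Propositions \ref{prop:preprop mon} and \ref{prop:postprop mon} at each stage, survives each elementary step. Stability of this hypothesis under pre- and post-processing is already contained in the proofs of those two propositions (where one checks that the support projections of the Choi operators transform compatibly under maps of the form $X\mapsto(K\trans\otimes I)X((K\trans)^{*}\otimes I)$ and $X\mapsto(\id\otimes\F)(X)$). For the tensoring step I would use that $\choi_{\Psi_1\otimes\Psi_2}(\N'\otimes\G)=\choi_{\Psi_1}(\N')\otimes\choi_{\Psi_2}(\G)$ and that the support projection of a tensor product of positive semi-definite operators is the tensor product of the support projections; hence the equalities/inequalities among support projections defining \ref{S1}, \ref{S3}, \ref{S4} are preserved when both $\choi_{\Psi_1}(\N)$ and $\choi_{\Psi_1}(\M)$ are tensored with the same operator $\choi_{\Psi_2}(\G)$, while \ref{S2} is a condition on $f$ alone. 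Once this bookkeeping is in place, the chain of estimates above goes through verbatim and the proof is essentially immediate.
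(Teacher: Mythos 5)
Your proposal is correct and takes essentially the same route as the paper, whose proof is precisely the observation that the Chiribella et al.\ decomposition of $\smap$ into tensoring with a fixed CP map, pre-processing, and post-processing, combined with \eqref{eq:supop persp tensor hom} and Propositions \ref{prop:preprop mon} and \ref{prop:postprop mon}, yields the claim immediately. Your reading of the right-hand side as $\smap\bz\persp{f}(\N,\M)\jz$ (correcting the obvious typo) and your verification that the conditions \ref{S1}--\ref{S4} survive each elementary step (tensoring via supports of tensor products, pre-/post-processing via positivity of the induced maps on Choi operators) is exactly the bookkeeping the paper leaves implicit.
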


Finally, we note that similarly to the above, any notion of geometric mean, in fact, any (multi-variate) positive operator function $\Gm$ can be extended to channels as 
\begin{align*}
\Gm((\N_y)_{y\in\Y}):=
C_{\Psi}\inv\Gm\bz(C_{\Psi}(\N_y))_{y\in\Y}\jz.
\end{align*}
By the same arguments as above, this is well-defined in the sense of being independent of the choice of $\Psi$, 
and gives as extension of $\Gm$ defined on operators in the sense that
\begin{align*}
\Gm((N_y\Tr(\valt))_{y\in\Y}):=
\Gm((N_y)_{y\in\Y})\Tr(\valt).
\end{align*}

\section{The case of disjoint supports}

For the rest, $\hil$ will denote a finite-dimensional Hilbert space with $d:=\dim\hil$. 
According to Jordan's lemma (see \cite{Jordan1875} for the original, and \cite[Lemma 1]{Regev_Jordan} for a simple proof), for any two projections $S,Q\in\bP(\hil)$, there exists an
orthonormal basis $\{e_k,e_k^{\perp}\}_{k=1}^m\cup\{f_k\}_{k=1}^{d-2m}$, such that the subspaces
$\hil_k:=\spann\{e_k,e_k^{\perp}\}$, $k\in[m]$, and $\hil':=\spann\{f_k:\,k\in[d-2m]\}$ are invariant under both $S$ and $Q$, and we have
\begin{align}\label{eq:Jordan}
S=\left[\medoplus_{k\in[m]}\pr{e_k}\right]\medoplus S',\ds\ds\ds\ds
Q=\left[\medoplus_{k\in[m]}\pr{\phi_k}\right]\medoplus Q',
\end{align}
where
\begin{align*}
S':=S\big\vert_{\hil'},\ds\ds\ds
Q':=Q\big\vert_{\hil'},\ds\ds\ds
S'Q'=Q'S',
\end{align*}
and
\begin{align*}
\phi_k=(\cos\theta_k)e_k+(\sin\theta_k)e_k^{\perp}\ds\ds\ds\text{with some}\ds\ds\ds
\theta_k\in(0,\pi/2),\ds\ds k\in[m].
\end{align*}
The \ki{overlap} of $S$ and $Q$ is defined as
\begin{align*}
\overlap(S,Q):=
\max\left\{|\inner{v}{w}|:\,v\in\ran(S),\,w\in\ran(Q),\,\norm{v}=\norm{w}=1\right\}
=
\max\left\{\max_{k\in[m]}\cos\theta_k,\,\norm{S'Q'}\right\}.
\end{align*}

\begin{rem}
Note that the decompositions in \eqref{eq:Jordan} are not unique. 
However, $\hil',S',Q'$ are uniquely determined by $S$ and $Q$, as $\hil'$ is the largest subspace invariant under both $S$ and $Q$ on which the restrictions of the two projections commute. Moreover,
the angles $\theta_k$ with their multiplicities are also uniquely determined by $S$ and $Q$, since the 
eigenvalues of $S+Q$ strictly between $0$ and $2$ are exactly $1\pm\cos\theta_k$, $k\in[m]$.
As a consequence, the subspaces 
\begin{align*}
\hil_{\theta}:=\spann\left\{\cup_{k\in[m]:\,\theta_k=\theta}\hil_k\right\},\ds\ds\ds
\theta\in(0,\pi/2),
\end{align*}
are uniquely determined by $S$ and $Q$, and therefore so are the corresponding projections $P_{\theta}$.

For what follows, only these uniquely determined parameters play a role, 
and therefore all arguments work with any decompositions of the form \eqref{eq:Jordan}. 
\end{rem}

The following two lemmas are straightforward to verify, and hence we omit their proofs.

\begin{lemma}\label{lemma:eple}
Let $S,Q\in\bP(\hil)$ be projections with decompositions as in \eqref{eq:Jordan}. For any 
$\ep\in[0,1)$, the following are equivalent:
\begin{align*}
(1-\ep^2)\norm{Qv}^2\le \norm{SQv}^2,\ds v\in\hil
&\ds\ds\iff\ds\ds
(1-\ep^2)Q\le QSQ\\
&\ds\ds\iff\ds\ds
Q'\le S'\ds\text{and}\ds \sin\theta_k\le\ep,\ds k\in[m].
\end{align*}
\end{lemma}

\begin{lemma}\label{lemma:eport}
Let $S,Q\in\bP(\hil)$ be projections with decompositions as in \eqref{eq:Jordan}. For any 
$\ep\in[0,1)$, the following are equivalent:
\begin{align*}
\norm{SQv}^2\le \ep^2\norm{Qv}^2,\ds v\in\hil
&\ds\ds\iff\ds\ds
QSQ\le\ep^2 Q\\
&\ds\ds\iff\ds\ds
Q'S'=0\ds\text{and}\ds \cos\theta_k\le\ep,\ds k\in[m]\\
&\ds\ds\iff\ds\ds
\overlap(S,Q)\le\ep\\
&\ds\ds\iff\ds\ds
(1-\ep)(Q\vee S)\le Q+S\le (1+\ep)(Q\vee S)\\
&\ds\ds\iff\ds\ds
SQS\le\ep^2 S\\
&\ds\ds\iff\ds\ds
\norm{QSv}^2\le \ep^2\norm{Sv}^2,\ds v\in\hil.
\end{align*}
\end{lemma}

Note that for $\ep=0$, the conditions in Lemma \ref{lemma:eple} are all equivalent to 
$Q\le S$, and the conditions in Lemma \ref{lemma:eport} are all equivalent to 
$Q\perp S$. This motivates the following:

\begin{definition}
For two projections $Q,S\in\bP(\hil)$ and an $\ep\in[0,1)$, 
we say that $Q$ is \ki{$\ep$-dominated} by $S$ (in notation $Q\le_{\ep}S$), if they satisfy any (and hence all)
of the equivalent conditions in Lemma \ref{lemma:eple}, and we say that they are
\ki{$\ep$-orthogonal} (in notation $Q\perp_{\ep}S$), if
they satisfy any (and hence all)
of the equivalent conditions in Lemma \ref{lemma:eport}.
\end{definition} 

\begin{rem}\label{rem:eple-eport}
A further motivation to the above definitions is that the relation 
$Q\le S\iff Q\perp(I-S)$ is preserved as 
$Q\le_{\ep} S\iff Q\perp_{\ep}(I-S)$. Indeed, by Lemmas \ref{lemma:eple}--\ref{lemma:eport},
\begin{align*}
Q\le_{\ep} S
\ds\iff\ds
(1-\ep^2)Q\le QSQ=Q-Q(I-S)Q
\ds\iff\ds
Q(I-S)Q\le\ep^2 Q
\ds\iff\ds
Q\perp_{\ep}(I-S).
\end{align*}
\end{rem}

The following lemma can be obtained from \cite[Lemma 6]{KeLi2016} with 
$\ep(r,t)^2=\frac{(t-1)}{(r-1)(2t-1)}$. We give a different proof below, 
leading to other possible choices of $\ep(r,t)$, which can sometimes be better (larger) than the above,
although we will not need this in the applications. 

\begin{lemma}\label{lemma:support sum domination}
For any  $r\in\bN\setminus\{1\}$ and any $t\in(1,+\infty)$ there exists an $\epsilon(r,t) \in (0,1)$ such that for any finite-dimensional Hilbert space $\hil$, and any projections $Q_1,\ldots, Q_r\in \bP(\hil)$,
\begin{align}\label{eq:support sum domination}
Q_j\perp_{\ep(r,t)}Q_k,\ds\ds\ds j,k\in[r],\,j\ne k\ds\ds\imp\ds\ds
\medvee_{j\in [r]}Q_j \leq t \sum_{j\in [r]}Q_j.
\end{align}
\end{lemma}
\begin{proof}
According to Lemma \ref{lemma:eport}, $\ep(2,t):=1-\frac{1}{t}$ satisfies \eqref{eq:support sum domination}.
We now proceed by induction.
Assume that for every $r=2,\ldots,n$, some functions $\ep(r,\valt)$ have been chosen 
that satisfy \eqref{eq:support sum domination}.
Let $f:\,(1,+\infty)\to(1,+\infty)$ be an arbitrary function such that $f(t)< t$, $t\in(1,+\infty)$.
We will search for a function $\ep(n+1,\valt)$ satisfying $\ep(n+1,t)\le \ep(n,f(t))$ and some additional properties. 
Assume that with such a function, we have $Q_j\perp_{\ep(n+1,t)} Q_k$, $j\ne k$, for some projections
$Q_1,\ldots,Q_{n+1}\in\bP(\hil)$. By definition, 
$Q_jQ_kQ_j\le\ep(n+1,t)^2Q_j\le\ep(n,f(t))^2Q_j$, $j,k\in[n]$, $j\ne k$, whence 
$\vee_{j\in [n]}Q_j\le f(t)\sum_{j\in[n]}Q_j$.
Thus,
\begin{align}\label{eq:support sum domination proof1}
Q_{n+1}\Big(\medvee_{j\in [n]}Q_j\Big) Q_{n+1}
&\le
f(t)\sum_{j\in[n]}\underbrace{Q_{n+1}Q_j Q_{n+1}}_{\le\ep(n+1,t)^2Q_{n+1}}
\le 
nf(t)\ep(n+1,t)^2 Q_{n+1},
\end{align}
where in the second inequality we used the assumption
$Q_j\perp_{\ep(n+1,t)} Q_{n+1}$, $j\in[n]$.
If $nf(t)\ep(n+1,t)^2<1$ then 
\begin{align*}
\medvee_{j\in [n+1]}Q_j 
&=
Q_{n+1}\, \medvee\, \Big(\medvee_{j\in [n]}Q_j\big)\\
&\le 
\frac{1}{1-\ep(n+1,t)\sqrt{nf(t)}}\Big(Q_{n+1} + \Big(\medvee_{j\in [n]}Q_j\Big)\Big)
\\
&\le 
\frac{1}{1-\ep(n+1,t)\sqrt{nf(t)}}
\left(Q_{n+1} + f(t)\left(\sum_{j\in [n]}Q_j\right)\right)
\\
&\le
\frac{f(t)}{1-\ep(n+1,t)\sqrt{nf(t)}}\sum_{j\in [n+1]}Q_j.
\end{align*}
where the first inequality is due to \eqref{eq:support sum domination proof1} according to Lemma \ref{lemma:eport}, and in the third inequality we used that 
$f(t)> 1$. The last expression above is upper bounded by $t\sum_{j\in[n+1]}Q_j$ if and only if 
$\ep(n+1,t)\le (1-f(t)/t)/\sqrt{nf(t)}$. 
Thus, for any choice of $f$ as above, and for any function $g$ defined on $(1,+\infty)$ such that 
\begin{align*}
0<g(t)\le \min\left\{\ep(n,f(t)),\frac{1}{\sqrt{nf(t)}}\bz 1-\frac{f(t)}{t}\jz\right\},
\end{align*}
\eqref{eq:support sum domination} holds for $r=n+1$ with 
$\ep(n+1,t):=g(t)$, $t\in(1,+\infty)$.
\end{proof}

\begin{rem}
Note that the condition 
$\vee_{j\in [r]}Q_j \leq t \sum_{j\in [r]}Q_j$ in 
\eqref{eq:support sum domination} is equivalent to 
the smallest non-zero eigenvalue of $\sum_{j\in [r]}Q_j$ being at least as large as $1/t$.
\end{rem}

The following notion of $\ep$-subtraction is a slight variation of that introduced in 
\cite[Definition 3]{KeLi2016}, 
modified so that it fits the notion of $\ep$-orthogonality introduced above:
\begin{definition}
For any two projections $Q,S\in\bP(\hil)$ with decompositions \eqref{eq:Jordan}, and for any 
$\ep\in[0,1)$, let 
\begin{align*}
Q\ominus_{\ep}S:=Q-\medoplus_{k\in[m]:\,\cos\theta_k> \ep}\pr{\phi_k}-Q'S'
=
\left[\medoplus_{k\in[m]:\,\cos\theta_k\le\ep}\pr{\phi_k}\right]\medoplus Q'(I-S').
\end{align*}
\end{definition}
Indeed, the only difference compared to 
\cite[Definition 3]{KeLi2016} is the interchange of strict and non-strict inequalities.

A dual concept can be defined as follows:

\begin{definition}
For any two projections $Q,S\in\bP(\hil)$ with decompositions \eqref{eq:Jordan}, and for any 
$\ep\in[0,1)$, let 
\begin{align*}
Q_{S,\ep}:=\left[\medoplus_{k\in[m]:\,\sin\theta_k\le\ep}\pr{\phi_k}\right]\medoplus Q'S'.
\end{align*}
\end{definition}

Note that $Q\ominus_{\ep}S$ and $Q_{S,\ep}$ are uniquely determined by $S$ and $Q$, and therefore do not depend 
on the particular decomposition of the form \eqref{eq:Jordan}, as 
\begin{align*}
Q\ominus_{\ep}S=Q\bz I-S'+\sum_{\theta\ge\arccos\ep}P_{\theta}\jz,\ds\ds\ds
Q_{S,\ep}=Q\bz S'+\sum_{\theta\le\arcsin\ep}P_{\theta}\jz
\end{align*}
It is clear from the definitions that 
\begin{align*}
&Q\ominus_{\ep}S\le Q,\ds\ds\ds\ds
Q\ominus_{\ep}S\perp_{\ep}S,\ds\ds\ds\ds\text{and}\ds\ds\ds\ds
Q_{S,\ep}\le Q,\ds\ds\ds\ds 
Q_{S,\ep}\le_{\ep}S.
\end{align*}

We will need the following:

\begin{lemma}\label{lemma:projekcio_vagas}
Let $\rho \in \S(\hil)$ be a density operator and $S:=\rho^0$.
For any $Q\in \mathcal \bP(\hil)$ and any $\ep\in(0,1)$, 
\begin{align}\label{eq:projekcio_vagas}
 \Tr\rho(I-Q)\le \Tr\rho(I-Q_{S,\ep})\leq  \frac{1}{\ep^2} \Tr\rho(I-Q).
\end{align}
\end{lemma}
\begin{proof}
The first inequality in \eqref{eq:projekcio_vagas} is trivial from $Q_{S,\ep}\le Q$. 
Note that 
\begin{align*}
I-Q&=\left[\medoplus_{k\in[m]}\pr{\phi_k^{\perp}}\right]\medoplus (I_{\hil'}-Q'),
\end{align*}
where $\phi_k^{\perp}:=(\sin\theta_k)e_k-(\cos\theta_k)e_k^{\perp}$, $k\in[m]$. Thus,
\begin{align*}
\Tr\rho(I-Q)&=
\Tr\rho S(I-Q)S\\
&=
\sum_{k:\,\sin\theta_k> \ep}\inner{e_k}{\rho e_k}\underbrace{\sin^2\theta_k}_{\ge\ep^2}+
\sum_{k:\,\sin\theta_k\le \ep}\inner{e_k}{\rho e_k}\sin^2\theta_k+
\Tr\rho (S'-S'Q')\\
&\ge
\ep^2\bz
\sum_{k:\,\sin\theta_k> \ep}\inner{e_k}{\rho e_k}+
\sum_{k:\,\sin\theta_k\le \ep}\inner{e_k}{\rho e_k}\sin^2\theta_k+
\Tr\rho (S'-S'Q')
\jz\\
&=\ep^2\Tr\rho S\underbrace{\bz\left[\medoplus_{k:\,\sin\theta_k> \ep}I_{\hil_k}\right]\medoplus
\left[\medoplus_{k:\,\sin\theta_k\le \ep}\pr{\phi_k^{\perp}}\right]
\medoplus (I_{\hil'}-Q'S')
\jz}_{=I-Q_{S,\ep}} S\\
&=
\ep^2\Tr\rho(I-Q_{S,\ep}), 
\end{align*}
proving the second inequality in \eqref{eq:projekcio_vagas}.
\end{proof}

\begin{proposition}\label{prop:disjpoint supports}
Let $\rho_1,\ldots ,\rho_r\in \S(\hil)$, and $T_{j,n}\in \mathcal \bP(\hil^{\otimes n})$, 
$j\in[r]$, $n\in\bN$. 
Assume that their supports are pairwise disjoint, i.e., 
\begin{align}\label{eq:disjoint supports}
\rho_j^0\wedge\rho_k^0=0,\ds\ds\ds j,k\in[r],\ds j\ne k.
\end{align}
Then for any $t>1$, any $\ep(r,t)$ as in Lemma \ref{lemma:support sum domination}, and any $0<\ep<\ep(r,t)/2$,
the projections $T_n:=\vee_{j\in[r]}(T_{j,n})_{(\rho_j^{\otimes n})^0,\ep}$, $n\in\bN$, satisfy the following:
\begin{enumerate}
\item\label{modifed projections1}
$\Tr\rho_j^{\otimes n} (I-T_n) \le \frac{1}{\ep^2} \Tr\rho_j^{\otimes n} (I-T_{j,n})$
\ds\ds\ds $j\in [r],\ds n\in\bN$.
\item\label{modifed projections2}
There exists an $N\in\bN$ such that $T_n\le t \sum_{j\in [r]} T_{j,n}$,\ds\ds\ds $n\ge N$.
\end{enumerate}
\end{proposition}
\begin{proof}
Let $t>1$, let $\ep(r,t)$ be as in Lemma \ref{lemma:support sum domination}, and let $0<\ep<\ep(r,t)$.
Let $S_{j,n}:=(\rho_j^{\otimes n})^0=(\rho_j^0)^{\otimes n}$ be the projection onto the support of 
$\rho_j^{\otimes n}$, and $\tilde T_{j,n}:=(T_{j,n})_{(\rho_j^{\otimes n})^0,\ep}$, so that 
$T_n=\vee_{j\in[r]}\tilde T_{j,n}$. Then
property \ref{modifed projections1} is satisfied as for every $j\in[r]$,
\begin{align*}
\Tr\rho_j^{\otimes n} (I-T_n)
\le
\Tr\rho_j^{\otimes n} (I-\tilde T_{j,n})
\le
 \frac{1}{\ep^2} \Tr\rho_j^{\otimes n} (I-T_{j,n}),
\end{align*}
where the first inequality is trivial, and the second one is immediate from Lemma \ref{lemma:projekcio_vagas}.

By the disjointness assumption \eqref{eq:disjoint supports}, there exists a $\lambda\in(0,1)$ such that 
$R_{j,1}R_{k,1} R_{j,1}\leq \lambda R_{j,1}$, $j,k\in[r]$, $j\ne k$, whence
\begin{align*}
R_{j,n} R_{k,n} R_{j,n}\le\lambda^n R_{j,n},\ds\ds\ds j,k\in[r],\ds j\ne k,\ds n\in\bN,
\end{align*}
which in turn yields
\begin{align}\label{eq:disjoint supports proof1}
\|R_{k,n}R_{j,n}\|^2=\|(R_{k,n}R_{j,n})^*R_{k,n}R_{j,n}\|=
\|R_{j,n}R_{k,n}R_{j,n}\|
\leq \lambda^n.
\end{align}

For any $j\in[r]$ and any $\varphi_j\in\ran \tilde T_{j,n}$,
$\tilde T_{j,n}\le_{\ep} R_{j,n}$ yields
\begin{align*}
\norm{(I-R_{j,n})\varphi_j}^2\le\ep^2\norm{\vfi_j}^2, 
\end{align*}
according to Remark \ref{rem:eple-eport}.
Thus, for any $j\ne k$ and $\varphi_j\in\ran \tilde T_{j,n}$, $\varphi_k\in\ran \tilde T_{k,n}$,
\begin{align*}
|\langle \varphi_j,\varphi_\ell\rangle|
&=
|\langle R_{j,n}\varphi_j+ (I-R_{j,n})\varphi_j,\varphi_k\rangle| 
\\
&\le
|\langle R_{j,n}\varphi_j, \varphi_k\rangle| + \ep\|\varphi_j\|\,\|\varphi_k\|
\\
&=
|\langle R_{j,n}\varphi_j, R_{k,n}\varphi_k+(I-R_{k,n})\varphi_k\rangle| + \ep\|\varphi_j\|\,\|\varphi_k\|
\\
&\le
|\langle R_{j,n}\varphi_j, R_{k,n}\varphi_k \rangle| +  2\ep \|\varphi_j\|\,\|\varphi_k\|
\\
&\le (\lambda^{\frac{n}{2}}+2\ep)\|\varphi_j\|\,\|\varphi_k\|,
\end{align*}
where the last inequality is due to \eqref{eq:disjoint supports proof1}.
This shows that 
\begin{align*}
\snorm{\tilde{T}_{k,n}\tilde{T}_{j,n}\tilde{T}_{k,n}}^{1/2}
&=
\snorm{\tilde{T}_{j,n}\tilde{T}_{k,n}}=
\max_{\norm{\vfi}=1=\norm{\psi}}\left|\inner{\vfi}{\tilde{T}_{j,n}\tilde{T}_{k,n}\psi}\right|
=
\max_{\norm{\vfi}=1=\norm{\psi}}\left|\inner{\tilde{T}_{j,n}\vfi}{\tilde{T}_{k,n}\psi}\right|\\
&\le
(\lambda^{\frac{n}{2}}+2\ep)\max_{\norm{\vfi}=1=\norm{\psi}}\snorm{\tilde{T}_{j,n}\vfi}\snorm{\tilde{T}_{k,n}\psi}
\le
(\lambda^{\frac{n}{2}}+2\ep),
\end{align*}
whence $\tilde{T}_{k,n}\tilde{T}_{j,n}\tilde{T}_{k,n}\le(\lambda^{\frac{n}{2}}+2\ep)^2I$.
This implies that 
$\tilde{T}_{k,n}\tilde{T}_{j,n}\tilde{T}_{k,n}\le(\lambda^{\frac{n}{2}}+2\ep)^2\tilde{T}_{k,n}$,
which can be further bounded as
\begin{align*}
\tilde{T}_{k,n}\tilde{T}_{j,n}\tilde{T}_{k,n}\le\ep(r,t)^2\tilde{T}_{k,n},
\end{align*}
whenever $n\ge \ceil{2\log(\ep(r,t)-2\ep)/(\log\lambda)}=:N$.
Thus, by Lemma \ref{lemma:support sum domination}, property \ref{modifed projections2} holds. 
\end{proof}

Given $\N,\A\subseteq\S(\hil)$, a 
test sequence $T_n\in\B(\hil^{\otimes n})_{[0,1]}$, $n\in\bN$, is said to achieve a direct exponent pair
 $(r,\tilde r)\in(0,+\infty)^2$, if
\begin{align*}
&\liminf_{n\to+\infty}-\frac{1}{n}\log\sup_{\sigma\in\A}\Tr\sigma^{\otimes n}T_n\ge r,\\
&\liminf_{n\to+\infty}-\frac{1}{n}\log\sup_{\rho\in\N}\Tr\rho^{\otimes n}(I-T_n)\ge \tilde r.
\end{align*}
According to \cite[Lemma II.24]{MWSz2020}, any achievable direct exponent pair is also achievable by a sequence of projective tests, i.e., we may always assume that $T_n^2=T_n$, $n\in\bN$. 

\begin{prop}\label{prop:disjoint nullhypo direct exp}
Let $\A\subseteq\S(\hil)$ and $\rho_1,\ldots,\rho_r\in\S(\hil)$ be such that $\rho_j\wedge\rho_k=0$
for all $j,k\in[r]$, $j\ne k$. 
Assume that for every $j\in[r]$, a test sequence $T_{j,n}\in\bP(\hil^{\otimes n})$, $n\in\bN$, 
achieves a direct exponent pair
$(r_j,\tilde r_j)$ for the composite i.i.d.~state discrimination problem $\N_j:=\{\rho_j\}$
vs.~$\A$.  
Then for every small enough $\ep>0$,  the test sequence $T_n:=\vee_{j\in[r]}(T_{j,n})_{(\rho_j^{\otimes n})^0, \ep}$ achieves the exponent pair $(\min_j r_j,\min_j\tilde r_j)$ for the composite i.i.d.~state discrimination problem $\{\rho_j\}_{j\in[r]}$ vs. $\A$.
\end{prop}
\begin{proof}
Let $t>1$ be fixed, 
$\ep(r,t)$ be as in Lemma \ref{lemma:support sum domination}, and $\ep\in(0,\ep(r,t)/2)$.
Then 
\begin{align*}
\liminf_{n\to+\infty}-\frac{1}{n}\log\sup_{\sigma\in\A}\Tr\sigma^{\otimes n}T_n
&\ge
\liminf_{n\to+\infty}-\frac{1}{n}\log\sup_{\sigma\in\A}\Tr\sigma^{\otimes n}\sum_{j\in[r]}T_{j,n}\\
&\ge
\liminf_{n\to+\infty}-\frac{1}{n}\log\max_{j\in[r]}\sup_{\sigma\in\A}\Tr\sigma^{\otimes n}T_{j,n}\\
&=
\min_{j\in[r]}\liminf_{n\to+\infty}-\frac{1}{n}\log\sup_{\sigma\in\A}\Tr\sigma^{\otimes n}T_{j,n}\\
&=\min_{j\in[r]}r_j,
\end{align*}
where the first inequality is due to Proposition \ref{prop:disjpoint supports} \ref{modifed projections2}, and the rest are obvious. Similarly,
\begin{align*}
\liminf_{n\to+\infty}-\frac{1}{n}\log\max_{j\in[r]}\Tr\rho_j^{\otimes n}(I-T_n)
&\ge
\liminf_{n\to+\infty}-\frac{1}{n}\log\max_{j\in[r]}\Tr\rho_j^{\otimes n}(I-T_{j,n})\\
&=
\min_{j\in[r]}\liminf_{n\to+\infty}-\frac{1}{n}\log\Tr\rho_j^{\otimes n}(I-T_{j,n})
=\min_{j\in[r]}\tilde r_j.
\end{align*}
\end{proof}

\begin{cor}\label{cor:disjoint nullhypo direct exp}
Let $\A\subseteq\S(\hil)$ and $\rho_1,\ldots,\rho_r\in\S(\hil)$ be such that $\rho_j\wedge\rho_k=0$
for all $j,k\in[r]$, $j\ne k$. Then for any $r\in(0,+\infty)$,
\begin{align}\label{eq:composite direct exp equality1}
\direct_r(\{\rho_j\}_{j\in[r]}\|\A)=\min_{j\in[r]}d_r(\rho_j\|\A).
\end{align}
Moreover, if for every $j\in[r]$, a test sequence $T_{j,n}\in\bP(\hil^{\otimes n})$, $n\in\bN$, 
achieves $(r,\direct_r(\{\rho_j\}\|\A))$ for the composite i.i.d.~state discrimination problem 
$\N_j:=\{\rho_j\}$ vs.~$\A$,
then for every small enough $\ep>0$,  the test sequence 
$T_n:=\vee_{j\in[r]}(T_{j,n})_{(\rho_j^{\otimes n})^0, \ep}$, $n\in\bN$, achieves the exponent pair $(r,\direct_r(\{\rho_j\}_{j\in[r]}\|\A))$ for the composite i.i.d.~state discrimination problem $\{\rho_j\}_{j\in[r]}$ vs. $\A$.
\end{cor}
\begin{proof}
The inequality $\direct_r(\{\rho_j\}_{j\in[r]}\|\A)\le\min_{j\in[r]}d_r(\rho_j\|\A)$ holds trivially, 
even without any assumption on the supports of the $\rho_j$; see, e.g., \cite[Section II.C]{MWSz2020} for details. 
In particular, the equality in \eqref{eq:composite direct exp equality1} holds trivially when 
$\min_{j\in[r]}d_r(\rho_j\|\A)=0$, and therefore we assume 
$\min_{j\in[r]}d_r(\rho_j\|\A)>0$ for the rest. 
In this case, for every $j\in[r]$, the exponent pair
$(r, d_r(\rho_j\|\A))$ is achievable for 
the composite i.i.d.~state discrimination problem 
$\N_j:=\{\rho_j\}$ vs.~$\A$, whence, by Proposition \ref{prop:disjoint nullhypo direct exp}, 
$(r,\min_{j\in[r]}d_r(\rho_j\|\A))$ is an achievable exponent pair
for the composite i.i.d.~state discrimination problem 
$\{\rho_j\}_{j\in[r]}$ vs.~$\A$.
The assertion about the test sequence achieving it also follows immediately from 
Proposition \ref{prop:disjoint nullhypo direct exp}.
\end{proof}

\begin{prop}\label{prop:disjoint nullhypo direct exp2}
Let $\N\subseteq\S(\hil)$ and $\sigma_1,\ldots,\sigma_m\in\S(\hil)$ be such that $\sigma_j\wedge\sigma_k=0$
for all $j,k\in[m]$, $j\ne k$. 
Assume that for every $j\in[m]$, a test sequence $T_{j,n}\in\bP(\hil^{\otimes n})$, $n\in\bN$, 
achieves a direct exponent pair
$(r_j,\tilde r_j)$ for the composite i.i.d.~state discrimination problem $\N$
vs.~$\A_j:=\{\sigma_j\}$.  
Then for every small enough $\ep>0$, the test sequence 
$T_n:=I-\vee_{j\in[r]}(I-T_{j,n})_{(\sigma_j^{\otimes n})^0, \ep}$ achieves the exponent pair $(\min_j r_j,\min_j\tilde r_j)$ for the composite i.i.d.~state discrimination problem $\N$ vs. $\{\sigma_j\}_{j=1}^m$.
\end{prop}
\begin{proof}
The proof goes the same way as for Proposition \ref{prop:disjoint nullhypo direct exp}, with the 
interchange of the roles of the $\rho$ and the $\sigma$ operators together with $T\leftrightarrow I-T$. We write out the details below for completeness.

Let $t>1$ be fixed, 
$\ep(m,t)$ be as in Lemma \ref{lemma:support sum domination}, and $\ep\in(0,\ep(m,t)/2)$.
For every $j\in[m]$ and every $n\in\bN$, define 
$S_{j,n}:=I-T_{j,n}$, $\tilde S_{j,n}:=(S_{j,n})_{(\sigma_j^{\otimes n})^0,\ep}$, 
$S_n:=\vee_{j\in[m]}\tilde S_{j,n}$, $T_n:=I-S_n$.
By Proposition \ref{prop:disjpoint supports},
\begin{align}
&\Tr\sigma_j^{\otimes n} \underbrace{(I-S_n)}_{=T_n} 
\le 
\frac{1}{\ep^2} \Tr\sigma_j^{\otimes n} \underbrace{(I-S_{j,n})}_{=T_{j,n}},
\ds\ds\ds\ds j\in [r],\ds n\in\bN,\label{eq:disjoint alternative proof1}\\
&I-T_n=S_n\le t \sum_{j\in [r]} S_{j,n}=t \sum_{j\in [r]} (I-T_{j,n}),\ds\ds\ds \text{for every large enough }n.
\label{eq:disjoint alternative proof2}
\end{align}
Thus,
\begin{align*}
\liminf_{n\to+\infty}-\frac{1}{n}\log\max_{j\in[m]}\Tr\sigma_j^{\otimes n}T_n
&\ge
\liminf_{n\to+\infty}-\frac{1}{n}\log\max_{j\in[m]}\Tr\sigma_j^{\otimes n}T_{j,n}\\
&=
\min_{j\in[m]}\liminf_{n\to+\infty}-\frac{1}{n}\log\Tr\sigma_j^{\otimes n}T_{j,n}
=\min_{j\in[m]}r_j,
\end{align*}
where the first inequality is due to \eqref{eq:disjoint alternative proof1}, and the 
rest are obvious. Similarly,
\begin{align*}
\liminf_{n\to+\infty}-\frac{1}{n}\log\sup_{\rho\in\N}\Tr\rho^{\otimes n}(I-T_n)
&\ge
\liminf_{n\to+\infty}-\frac{1}{n}\log\sup_{\rho\in\N}\Tr\rho^{\otimes n}\sum_{j\in[r]}(I-T_{j,n})\\
&\ge
\liminf_{n\to+\infty}-\frac{1}{n}\log\max_{j\in[m]}\sup_{\rho\in\N}\Tr\rho^{\otimes n}(I-T_{j,n})\\
&=
\min_{j\in[m]}\liminf_{n\to+\infty}-\frac{1}{n}\log\sup_{\rho\in\N}\Tr\rho^{\otimes n}(I-T_{j,n})\\
&=\min_{j\in[m]}\tilde r_j,
\end{align*}
where the first inequality is due to Proposition \ref{prop:disjpoint supports} \ref{modifed projections2}, and the rest are obvious. 
\end{proof}

\begin{cor}\label{cor:disjoint nullhypo direct exp2}
Let $\N\subseteq\S(\hil)$ and $\sigma_1,\ldots,\sigma_m\in\S(\hil)$ be such that $\sigma_j\wedge\sigma_k=0$
for all $j,k\in[m]$, $j\ne k$. Then for any $r\in(0,+\infty)$,
\begin{align*}
\direct_r(\N\|\{\sigma_j\}_{j\in[m]})=\min_{j\in[m]}d_r(\N\|\sigma_j).
\end{align*}
Moreover, if for every $j\in[r]$, a test sequence $T_{j,n}\in\bP(\hil^{\otimes n})$, $n\in\bN$, 
achieves $(r,\direct_r(\{\rho_j\}_{j\in[r]}\|\A))$ for the composite i.i.d.~state discrimination problem 
$\N$ vs.~$\A_j:=\{\sigma_j\}$,
then for every small enough $\ep>0$, the test sequence 
$T_n:=I-\vee_{j\in[m]}(I-T_{j,n})_{(\sigma_j^{\otimes n})^0, \ep}$, $n\in\bN$, achieves the exponent pair 
$(r,\direct_r(\N\|\{\sigma_j\}_{j\in[m]}))$ for the composite i.i.d.~state discrimination problem 
$\N$ vs. $\{\sigma_j\}_{j\in[m]}$.
\end{cor}
\begin{proof}
The proof goes exactly the same way as for Corollary \ref{cor:disjoint nullhypo direct exp}, and hence we omit it.
\end{proof}

Corollaries \ref{cor:disjoint nullhypo direct exp} and \ref{cor:disjoint nullhypo direct exp2} yield immediately the following:
\begin{cor}
Let $\rho_1,\ldots,\rho_r,\sigma_1,\ldots,\sigma_m\in\S(\hil)$ be such that $\rho_j\wedge\rho_k=0$
for all $j,k\in[r]$, $j\ne k$, and $\sigma_j\wedge\sigma_k=0$
for all $j,k\in[m]$, $j\ne k$. 
Then for every $r\in(0,+\infty)$,
\begin{align*}
\direct_r(\{\rho_j\}_{j\in[r]}\|\{\sigma_k\}_{k\in[m]})
=
\min_{j\in[r],k\in[m]}\direct_r(\rho_j\|\sigma_k).
\end{align*}
\end{cor}
\begin{proof}
We have 
\begin{align*}
\direct_r(\{\rho_j\}_{j\in[r]}\|\{\sigma_k\}_{k\in[m]})
=
\min_{j\in[r]}\direct_r(\rho_j\|\{\sigma_k\}_{k\in[m]})
=
\min_{j\in[r],k\in[m]}\direct_r(\rho_j\|\sigma_k),
\end{align*}
where the first equality follows from
Corollary \ref{cor:disjoint nullhypo direct exp}, and 
the second one from Corollary \ref{cor:disjoint nullhypo direct exp2}.
\end{proof}

\section*{Acknowledgments}

This work was partially funded by the
National Research, Development and 
Innovation Office of Hungary (NKFIH) via the research grants K 146380 and EXCELLENCE 151342, and
by the Ministry of Culture and Innovation and the National Research, Development and Innovation Office within the Quantum Information National Laboratory of Hungary (Grant No. 2022-2.1.1-NL-2022-00004).
The work of P\'eter Vrana was further supported by a Bolyai fellowship of the Hungarian Academy of Sciences, and 
by the NKFIH FK 146643 grant.
The work of PEF was partially supported by the NKFIH  KKP 139502 grant.
MM was supported by the Ministry of Education, Singapore, through grant T2EP20124-0005.
MM is grateful to Fumio Hiai for discussions on matrix geometric means
and to Frits Verhagen for discussions on Jordan's lemma.

\bibliography{bibliography_GM_AM}

\end{document}